\newcommand{\ep}{\varepsilon}
\newcommand{\pe}{\hspace*{\fill} $\Box$\\}
\newcommand{\peon}{\hspace*{\fill} $\Box$}
\newcommand{\prob}{\mbox{Prob\, }}
\newcommand{\ket}[1]{\ensuremath{\left|#1\right\rangle}}
\newcommand{\bof}[1]{\textbf{#1}}
\newtheorem{condition}{Condition}
\spnewtheorem{protocol}{Protocol}{\bfseries}{\itshape}
\newenvironment{condition2}[2][Condition]{\begin{trivlist}
\item[\hskip \labelsep {\bfseries #1}\hskip \labelsep {\bfseries #2}]}{\end{trivlist}}
\begin{document}

\title{\bf Quantum Cryptography Based Solely on {B}ell's Theorem}
\author{Esther H\"anggi\inst{1}\ 
\ \ \ \ Renato Renner\inst{2}\ 
\ \ \ \ Stefan Wolf\, \inst{3}
}

\institute{Computer Science Department, ETH Zurich, CH-8092 Zurich, Switzerland. \\ E-mail: esther.haenggi@inf.ethz.ch
\and
Institute for Theoretical Physics, ETH Zurich, CH-8093 Zurich, Switzerland. \\ E-mail: renner@phys.ethz.ch
\and
Computer Science Department, ETH Zurich, CH-8092 Zurich, Switzerland. \\ E-mail: wolf@inf.ethz.ch
}

\maketitle

\begin{abstract}
\noindent 
Information-theoretic key agreement is impossible to achieve 
from scratch and must be based on some --- ultimately physical --- premise. 
In 2005, Barrett, Hardy, and Kent 
showed that unconditional security can  be obtained in principle based  on the 
impossibility of faster-than-light signaling; however, their protocol is inefficient and  cannot 
tolerate any noise. While their key-distribution scheme uses quantum entanglement, its 
security only relies on the impossibility of superluminal signaling, rather than the  correctness 
and completeness of quantum theory. 
In particular, the resulting security is device independent. 
Here we introduce a new protocol which is efficient in terms of both classical and quantum 
communication, and that can tolerate noise in the quantum channel. 
We prove that it offers device-independent security under the sole assumption 
that certain non-signaling conditions are satisfied. Our main insight is that 
the XOR of a number of bits that are partially secret according to the non-signaling  conditions turns out to 
be highly secret. Note that similar statements have been well-known in classical 
contexts. Earlier results had indicated that amplification of such 
non-signaling-based privacy is impossible to achieve if the non-signaling condition 
only holds between events on Alice's and Bob's sides. Here, we show that the 
situation changes completely if such a separation is given within each of the 
laboratories. 
\end{abstract}

\section{Introduction, Motivation, and Our Result}

\subsection{Minimizing Assumptions for Information-Theoretic Key Agreement}

It is  well-established  that information-theoretic secrecy must be based on
certain premises such as {\em noise\/} in communication channels~\cite{wyner},\, \cite{csikor},\, 
\cite{maurer92}, a {\em limitation\/} on an adversary's memory~\cite{maurer},\, \cite{DziMau08},
or the uncertainty principle of quantum physics~\cite{bb84}. 
In traditional quantum key distribution, the security proof is based on
\begin{enumerate}
 \item 
the postulates of quantum physics,
\item \label{item:device}
the assumptions that the used devices transmit and operate on the specified quantum
systems, and 
\item
that Eve does not get information about the generated key out of the legitimate partners'
laboratories.
\end{enumerate}
This article is concerned with a variant of quantum key distribution
which allows the first two assumptions to be dropped, if at the same time, the third is  augmented by the 
assumption that no unauthorized information is exchanged between the legitimate laboratories. One possibility to guarantee this
is via the non-signaling postulate of relativity, if different measurement events 
are carried out in a space-like separated way. Of particular importance is  
{\em device independence\/} (i.e., dropping condition~\ref{item:device}), for two reasons. First, the 
necessity to trust the manufacturer is never satisfactory. Second, the security 
of traditional protocols for quantum key distribution relies {\em crucially\/} on the fact that
single Qbits (i.e., photons) are sent. For instance, the BB84 protocol~\cite{bb84} becomes {\em completely insecure\/} if larger 
systems, such as {\em pairs of photons}, are transmitted. With present technology, this is a significant issue. The fact that practical 
deviations from the theoretical model open the possibility of attacks has been demonstrated experimentally, see~\cite{gfkzr},\, \cite{fqtl},\, \cite{qflm},\, \cite{zfqcl},\, \cite{blackpaper}, and references therein. 

The question of {\em device-independent\/} security has been raised by Mayers and Yao in~\cite{MayersYao98}.\footnote{The work by Mayers and Yao initiated 
further investigation on how to test the correct working of quantum devices (not restricted to quantum cryptography)~\cite{dmms},\, \cite{my},\, \cite{mmmo}.} 
That such security is possible in principle follows from~\cite{bhk}; however, only a zero secret-key rate has been achieved, and in addition the classical communication 
cost is exponential. Later schemes that are robust against noise and achieve a positive key rate have been proven secure against certain 
restricted types of attacks~\cite{AcinMassarPironio},\, \cite{SGBMPA},\, \cite{AcinGisinMasanes},\, \cite{abgs}. 
The current state of the art is that security holds against arbitrary attacks, but no (quantum) correlation is introduced between subsequent measurements, see e.g.,~\cite{McKague}.

\subsection{Relativity-Based Key Distribution}

It is possible to generate a secret key assuming only  that information transmission faster than at  the speed of light is impossible. 
The basic idea, as proposed by Barrett, Hardy, and Kent~\cite{bhk}, is as follows: 
By communication over a quantum channel, two parties,  Alice and Bob, generate some 
shared entangled quantum state. They carry out measurements 
in a space-like-separated way, i.e., no signaling is possible between the measurement events.
Alice and Bob then verify the statistics of the measurement outcomes. Given that these satisfy 
certain specified properties, the privacy of the data
 follows  {\em directly\/} from the correlations 
in the resulting  data and is independent of
 whatever quantum 
systems the devices operate on.
It is not even necessary to assume  that the possibilities of what an adversary can do is limited by  quantum physics: 
The 
latter guarantees the protocol to {\em work} (i.e., leads to the expected correlations,
the occurrence of which can be verified), 
{\em but the security  is completely independent of it}. A
consequence is that protocols can be given which are secure if {\em either\/} quantum physics {\em or\/} 
relativity (or both, of course) is correct.

How is it  possible to derive secrecy directly from correlations? 
In quantum physics, this is well-known: Quantum correlations, called {\em entanglement}, 
are monogamous to some extent~\cite{terhal}:
If Alice and Bob are maximally entangled, then Eve 
factors out and is independent. However, we do  not know such an 
effect classically: If Alice and Bob have highly correlated bits, Eve can nevertheless know them. 
The point is that we have to look at the 
--- so-called {\em non-local\/} ---  input-output behavior of  {\em systems}.

\subsection{Systems, Correlations, and Non-Locality}

In order to explain  non-local correlations, we introduce the notion of a {\em two-party 
system}, defined by its joint input-output behavior $P_{XY|UV}$ (see Figure~\ref{boexli}). 
\begin{figure}[ht]
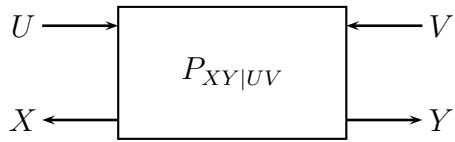

\centering
\pspicture*[](-3.5,-0.2)(3.5,1.95)
\pspolygon[](-1.5,0)(1.5,0)(1.5,1.75)(-1.5,1.75)
\rput[b]{0}(0,0.65){\large{$P_{XY|UV}$}}
\psline[linewidth=1pt]{->}(-2.5,1.5)(-1.5,1.5)
\psline[linewidth=1pt]{<-}(-2.5,0.25)(-1.5,0.25)
\psline[linewidth=1pt]{->}(2.5,1.5)(1.5,1.5)
\psline[linewidth=1pt]{<-}(2.5,0.25)(1.5,0.25)
\rput[b]{0}(-2.75,1.35){\large{$U$}}
\rput[b]{0}(2.75,1.35){\large{$V$}}
\rput[b]{0}(-2.75,0.1){\large{$X$}}
\rput[b]{0}(2.75,0.1){\large{$Y$}}
\endpspicture
\caption{A two-party {\em system}. If it does not allow for message transmission, it is called a {\em box}.}
\label{boexli}
\end{figure}

\begin{definition}
{\rm 
\label{localDef}
A {\em system} is a bi- (or more-) partite conditional probability distribution $P_{XY|UV}$. 
It is 
{\em local\/} if 
$
P_{XY|UV}=\sum_{i=1}^n{w_i P_{X|U}^i P_{Y|V}^i}
$
holds for some weights $w_i\geq 0$ and conditional distributions $P_{X|U}^i$ and $P_{Y|V}^i$, $i=1,\ldots,n$. 
A system is {\em signaling\/} if it allows for message transmission, 
i.e., it is {\em non-signaling} if 
$\sum_x P_{XY|UV}(x,y,u,v)=\sum_xP_{XY|UV}(x,y,u',v)$ for all $y,v$ (and similar with the roles of the interfaces exchanged).
We call a non-signaling system a \emph{box}.
}
\end{definition}

\noindent
Lemma~\ref{lemma:realism}  states that locality  is equivalent to the possibility that the outputs 
to alternative inputs are consistently pre-determined (see Figure~\ref{realb}).
\begin{lemma}\label{lemma:realism}
For any system $P_{XY|UV}$, where ${\cal U}$ and ${\cal V}$ are the ranges of $U$ and $V$, respectively, 
the following  conditions are equivalent:
\begin{enumerate}
\item
$P_{XY|UV}$ is local, 
\item
there exist random variables $X_u$ ($u\in{\cal U}$) and $Y_v$  ($v\in{\cal V}$)
with a joint distribution such that the marginals satisfy 
$
P_{X_uY_v}=P_{XY|U=u,V=v}
$.
\end{enumerate}
\end{lemma}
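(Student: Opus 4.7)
The plan is to prove both implications separately, with the guiding intuition that a local decomposition is nothing other than a classical joint strategy in which, conditioned on a shared random index $i$, Alice's outputs depend only on her input and Bob's outputs depend only on his. The random variables $X_u$ and $Y_v$ in item~2 will literally be the outputs that Alice and Bob would have produced had they chosen input $u$ and $v$ respectively, so the content of the lemma is that locality is equivalent to the existence of a simultaneous classical ``counterfactual'' assignment of outputs to all possible inputs.

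For the direction $(2)\Rightarrow(1)$, I would write the given joint distribution of $((X_u)_{u\in\mathcal{U}},(Y_v)_{v\in\mathcal{V}})$ as a convex combination of point masses. Each point mass is indexed by a pair of functions $(\bar x\colon\mathcal{U}\to\mathcal{X},\,\bar y\colon\mathcal{V}\to\mathcal{Y})$ and its weight $w_i$ is the probability of that outcome. Setting $P_{X|U}^{i}(x,u):=\mathbf{1}_{\{x=\bar x(u)\}}$ and $P_{Y|V}^{i}(y,v):=\mathbf{1}_{\{y=\bar y(v)\}}$, which are genuine conditional distributions, the marginal condition $P_{X_uY_v}=P_{XY|U=u,V=v}$ immediately gives $P_{XY|UV}=\sum_i w_i P_{X|U}^i P_{Y|V}^i$, which is the local form.

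For the direction $(1)\Rightarrow(2)$, I would start from $P_{XY|UV}=\sum_{i=1}^n w_i P_{X|U}^i P_{Y|V}^i$ and construct an explicit joint distribution on $(X_u)_{u\in\mathcal{U}},(Y_v)_{v\in\mathcal{V}}$: sample the index $i$ with probability $w_i$; then, independently for every $u\in\mathcal{U}$, draw $X_u$ from $P_{X|U=u}^i$, and independently for every $v\in\mathcal{V}$, draw $Y_v$ from $P_{Y|V=v}^i$. A one-line calculation then shows that for every pair $(u,v)$ one has $P_{X_uY_v}(x,y)=\sum_i w_i P_{X|U=u}^i(x)P_{Y|V=v}^i(y)=P_{XY|U=u,V=v}(x,y)$, as required.

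I do not expect a serious obstacle: both directions amount to unpacking definitions and choosing the convex-combination weights correctly. The only subtlety worth flagging is that in $(1)\Rightarrow(2)$ we need a single joint distribution whose pairwise marginals simultaneously match $P_{XY|U=u,V=v}$ for all $(u,v)$, not just one pair at a time; using a shared hidden index $i$ and then drawing the ``counterfactual'' outputs independently from the respective $P_{X|U=u}^i$ and $P_{Y|V=v}^i$ handles this automatically, because the product structure of each component is precisely what enforces the correct pair-marginals across all input choices.
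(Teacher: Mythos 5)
Your proposal is correct and follows essentially the same route as the paper: for $(1)\Rightarrow(2)$ you build exactly the product distribution over all counterfactual outputs conditioned on the shared index $i$, and for $(2)\Rightarrow(1)$ your decomposition into point masses over function pairs is just a spelled-out version of the paper's one-line remark that the tuple $(X_{u_1},\ldots,Y_{v_n})$ itself serves as the shared randomness. No gaps.
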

\begin{proof}
Assume first that $P_{XY|UV}$ is local, i.e.,
$
P_{XY|UV}=\sum{w_i P_{X|U}^i P_{Y|V}^i}
$.
For ${\cal U}=\{u_1,u_2,\ldots,u_m\}$ and ${\cal V}=\{v_1,v_2,\ldots,v_n\}$, define
\begin{small}
\[
P_{X_{u_1}\cdots X_{u_m}Y_{v_1}\cdots Y_{v_n}}(x_1,\ldots,x_m,y_1,\ldots,y_n):=
\sum{w_i P_{X|U=u_1}^i(x_1)\cdots P_{X|U=u_m}^i(x_m)
\cdot P_{Y|V=v_1}^i(y_1)\cdots P_{Y|V=v_n}^i(y_n)}\ .
\]
\end{small}
This distribution has the desired property.

To see the reverse direction, let $X_{u_1}\cdots X_{u_m}Y_{v_1}\cdots Y_{v_n}$ be the shared randomness~$w$. \pe
\end{proof}
\begin{figure}[ht]
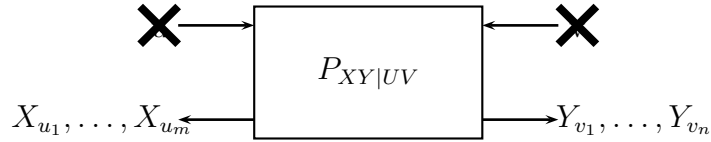

\centering
\pspicture*[](-4.7,-0.2)(4.7,1.95)
\pspolygon[](-1.5,0)(1.5,0)(1.5,1.75)(-1.5,1.75)
\rput[b]{0}(0,0.65){\large{$P_{XY|UV}$}}
\psline[linewidth=1pt]{->}(-2.5,1.5)(-1.5,1.5)
\psline[linewidth=1pt]{<-}(-2.5,0.25)(-1.5,0.25)
\psline[linewidth=1pt]{->}(2.5,1.5)(1.5,1.5)
\psline[linewidth=1pt]{<-}(2.5,0.25)(1.5,0.25)
\psline[linewidth=3pt]{-}(-3,1.75)(-2.5,1.25)
\psline[linewidth=3pt]{-}(-2.5,1.75)(-3,1.25)
\psline[linewidth=3pt]{-}(3,1.75)(2.5,1.25)
\psline[linewidth=3pt]{-}(2.5,1.75)(3,1.25)
\rput[b]{0}(-2.75,1.35){\large{u}}
\rput[b]{0}(2.75,1.35){\large{v}}
\rput[b]{0}(-3.5,0.05){\large{$X_{u_1},\ldots,X_{u_m}$}}
\rput[b]{0}(3.5,0.05){\large{$Y_{v_1},\ldots,Y_{v_n}$}}
\endpspicture
\caption{{\em Locality\/} means that alternative outputs consistently 
coexist.}
\label{realb}
\end{figure}

\noindent
We cryptographically exploit the contraposition of the statement: As soon as a system 
behaves non-locally, 
the outputs
cannot 
exist before the input is given, i.e., the measurement is actually carried out. 
In particular, these outputs cannot have been stored in the devices previously, and they
cannot be known to an adversary.

\subsection{Non-Locality Implies Secrecy}\label{nis}

In order to explain this idea more explicitly, let us consider a specific example of a system (see also Figure~\ref{fig:prbox}).

\begin{definition}{\bf \cite{pr}}
{\rm 
A {\em Popescu-Rohrlich box\/} (or {\em PR box\/} for short) is the following bipartite 
system $P_{XY|UV}$: For 
each input pair $(u,v)$, the random variable $X$  is a random bit
and we have 
\begin{equation}\label{nlbed}
\prob[X\oplus Y=U \cdot V]=1\ .
\end{equation}
}
\end{definition}

\begin{figure}[ht]
\centering
\psset{unit=0.525cm}
\pspicture*[](-2,-1)(8.5,10)
\psline[linewidth=0.5pt]{-}(0,6)(-1,7)
\rput[c]{0}(-0.25,6.75){\normalsize{$X$}}
\rput[c]{0}(-0.75,6.25){\normalsize{$Y$}}
\rput[c]{0}(-0.5,7.5){\Large{$U$}}
\rput[c]{0}(-1.5,6.5){\Large{$V$}}
\rput[c]{0}(2,7.5){\Large{$0$}}
\rput[c]{0}(6,7.5){\Large{$1$}}
\rput[c]{0}(1,6.5){\Large{$0$}}
\rput[c]{0}(3,6.5){\Large{$1$}}
\rput[c]{0}(5,6.5){\Large{$0$}}
\rput[c]{0}(7,6.5){\Large{$1$}}
\rput[c]{0}(-1.5,4.5){\Large{$0$}}
\rput[c]{0}(-1.5,1.5){\Large{$1$}}
\rput[c]{0}(-0.5,5.25){\Large{$0$}}
\rput[c]{0}(-0.5,3.75){\Large{$1$}}
\rput[c]{0}(-0.5,2.25){\Large{$0$}}
\rput[c]{0}(-0.5,0.75){\Large{$1$}}
\psline[linewidth=2pt]{-}(-1,0)(8,0)
\psline[linewidth=2pt]{-}(-1,6)(8,6)
\psline[linewidth=2pt]{-}(-1,3)(8,3)
\psline[linewidth=1pt]{-}(0,1.5)(8,1.5)
\psline[linewidth=1pt]{-}(0,4.5)(8,4.5)
\psline[linewidth=2pt]{-}(0,0)(0,7)
\psline[linewidth=2pt]{-}(8,0)(8,7)
\psline[linewidth=2pt]{-}(4,0)(4,7)
\psline[linewidth=1pt]{-}(2,0)(2,6)
\psline[linewidth=1pt]{-}(6,0)(6,6)
\rput[c]{0}(1,5.25){\Large{$\frac{1}{2}$}}
\rput[c]{0}(3,3.75){\Large{$\frac{1}{2}$}}
\rput[c]{0}(5,5.25){\Large{$\frac{1}{2}$}}
\rput[c]{0}(7,3.75){\Large{$\frac{1}{2}$}}
\rput[c]{0}(1,2.25){\Large{$\frac{1}{2}$}}
\rput[c]{0}(3,0.75){\Large{$\frac{1}{2}$}}
\rput[c]{0}(5,0.75){\Large{$\frac{1}{2}$}}
\rput[c]{0}(7,2.25){\Large{$\frac{1}{2}$}}
\rput[c]{0}(3,5.25){\Large{$0$}}
\rput[c]{0}(1,3.75){\Large{$0$}}
\rput[c]{0}(7,5.25){\Large{$0$}}
\rput[c]{0}(5,3.75){\Large{$0$}}
\rput[c]{0}(3,2.25){\Large{$0$}}
\rput[c]{0}(1,0.75){\Large{$0$}}
\rput[c]{0}(5,2.25){\Large{$0$}}
\rput[c]{0}(7,0.75){\Large{$0$}}
\endpspicture
\caption{The PR box.}
\label{fig:prbox}
\end{figure}

John Bell's theorem from 1964~\cite{bellInequality} implies that this system is indeed non-local. More precisely, 
any system that behaves like a PR box with probability greater than $75\%$ is. 
The reason is that the four conditions represented by (\ref{nlbed}) (one for each input combination) 
are contradictory, and only three can be 
satisfied at a time. 
Interestingly, 
when one is allowed to measure entangled quantum states, one  can achieve roughly $85\%$. 

The type of non-locality characterized by the PR box is often called \emph{CHSH non-locality} after~\cite{chsh}
and we will sometimes call condition (\ref{nlbed}) \emph{CHSH condition}. 

Note that the 
PR box is  non-signaling: $X$ and $Y$
separately are perfectly random bits and independent of the input pair. On the other hand, a system 
 $P_{XY|UV}$ (where all variables are bits) satisfying (\ref{nlbed}) is 
non-signaling {\em only\/} if the outputs are completely unbiased, given the input pair, i.e.,
$
P_{X|U=u,V=v}(0)=
P_{Y|U=u,V=v}(0)=1/2
$.
In other words, the output bit can neither be pre-determined, nor slightly biased.
 Assume that Alice and Bob share any kind of
 physical system, carry out space-like separated measurements (hereby excluding message transmission), 
and measure data having the statistics of a 
PR box.  The outputs must then be perfectly secret bits 
because even when conditioned on an adversary's complete information, the correlation between Alice and 
Bob must still be non-signaling and fulfill equation (\ref{nlbed}).

Unfortunately, however, the behavior of perfect 
PR boxes does not occur in nature: Quantum physics is non-local, but 
not maximally so. Can we also obtain  secret bits from 
weaker, quantum-physically achievable, non-locality? Barrett, Hardy, and Kent~\cite{bhk} have shown that 
the answer is {\em yes}. Their protocol is, however, inefficient: In order to reduce the probability that
the adversary learns a generated bit shared by Alice and Bob below $\ep$, they have to communicate 
$\Theta(1/\ep)$ Qbits. 

If we measure maximally entangled quantum states, we can get at most $85\%$-approximations to the 
PR-box's 
behavior. Fortunately, {\em any\/} non-locality implies {\em some\/} secrecy. In 
order to illustrate this, consider a system approximating a 
PR box with probability $1-\ep$ for all 
inputs. More precisely, we have 
\begin{equation}\label{nleps}
\prob[X\oplus Y=U\cdot V|U=u,V=v]=1-\ep
\end{equation}
for all $(u,v)\in\{0,1\}^2$. Then, what is the maximal possible bias 
$
p:=\prob[X=0|U=0,V=0]
$
such that the system is non-signaling?
\[
\begin{array}{|c||c|c||c|}
\hline
x & P_{X|U=u,V=v}(0) &  P_{Y|U=u,V=v}(0) & y \\
\hline
\hline
0 & p & p-\ep & 0\\\hline
0 & p & p-\ep & 1\\\hline
1 & p-2\ep & p-\ep & 0\\\hline
1 & p-2\ep & p-\ep & 1\\
\hline
\end{array}
\]
We explain the table: 
Because of (\ref{nleps}), the bias of $Y$, given $U=V=0$, must be at least $p-\ep$. Because 
of non-signaling, $X$'s bias must be $p$ as well when $V=1$, and so on. Finally, condition (\ref{nleps}) 
for $U=V=1$ implies 
$
p-\ep-(1-(p-2\ep))\leq \ep
$,
hence, 
$
p\leq 1/2+2\ep
$.
For any $\ep<1/4$, this is a non-trivial bound. (This reflects the fact that $\ep=1/4$ is 
the ``local limit.'')

Conditioned on Eve's entire information, this reads: Weak non-locality means 
weak secrecy. Can it be amplified?
 {\em Privacy amplification\/} is a concept well-known from classical~\cite{bbr},\, \cite{ill},\, \cite{BBCM95}
 and quantum~\cite{koenigrenner} cryptography, and means transforming a weakly secret string
 into a highly secret key by hashing. 
These results are not applicable with respect to  non-signaling privacy since this is a strictly stronger notion, 
i.e., the attacker has more possible courses of action.\footnote{The {\em only\/} restriction  by which
the possibilities of
such an adversary are limited  is the non-signaling condition.
Non-signaling  secrecy
has been shown achievable 
 under the additional assumption  
that the adversary can only attack each of the boxes 
 separately~\cite{AcinMassarPironio},\, \cite{SGBMPA},\, \cite{AcinGisinMasanes}.  In general, however, an adversary may of course attack 
  them jointly --- this corresponds to a coherent attack. In quantum mechanics, three types of attacks --- individual, collective, and coherent  --- are  
distinguished~\cite{BihamMor},\, \cite{BihamMor2},\, \cite{BBBGM}. In an {\em individual\/} attack, the eavesdropper attacks and measures each system identically and independently; in a {\em collective\/} attack the adversary still attacks each system identically and independently, but can make a joint measurement; finally the most 
general attack is a {\em coherent\/} attack, where no restrictions apply.} 
In~\cite{HRW08} it has been pessimistically argued that privacy amplification
of non-signaling secrecy is impossible, the problem being that certain collective attacks
exist that leave the adversary with significant information about the final key, 
however the latter is obtained from the raw key.

Fortunately, the situation changes completely when one 
assumes a non-signaling condition between the individual measurements performed \emph{within} Alice's as well as Bob's laboratories (see Figure~\ref{figure:space-like_separation}). 
This non-signaling condition could, for instance, be enforced by a space-like separation of the individual measurement events. 
In~\cite{lluis}, Masanes has shown that in this case, privacy amplification is possible in principle --- 
using as hash function a  function  chosen at random from the set of {\em all functions}.\footnote{Masanes' result implies that there exists a fixed function which can be used 
for privacy amplification, but the proof is non-constructive, i.e., the function cannot be given 
explicitly.} 
Later, he has shown that it is sufficient to consider a two-universal set of functions (this proof is included in 
\cite{mrwbcv4}, Section~IV.C).

\subsection{Main Result} 

We show that there exists a protocol for efficiently generating a secret 
key, whose security is based on non-signaling conditions only (Theorem~\ref{th:main}). 
The protocol consists of measuring $n$ copies of a 
maximally entangled state, where all $2n$ measurement events are supposed to be 
space-like 
separated. 
Our result is distinct from Masanes' in the sense that we show a \emph{single explicit function}, namely the XOR, to be a good privacy-amplification function. More 
precisely, we prove a lemma that the 
 adversary's probability of correctly predicting 
 the XOR of the 
outcomes of $n$ boxes is exponentially (in $n$) close to $1/2$ (see Lemma~\ref{lemma:xor_good_pa_fct}). This can be seen as a generalization of the well-known fact that the XOR of many partially uniform 
bits is almost uniform and may be of independent interest.
\\ 

\noindent
Since the security of our protocol, which is {\em universally composable}, is implied 
by the observed correlations alone, it is automatically {\em device-independent}. This means that 
nothing needs to be known about the internal workings of the quantum 
devices used for its implementation (such as photon sources or 
detectors) and their manufacturer need not be trusted. 
Moreover, a certain amount of noise can be tolerated: Our scheme has a positive key-generation rate 
whenever the correlations approximate 
PR boxes with an accuracy exceeding $80\%$ and the output bits are correlated with more than $98\%$ when Alice and Bob both choose to measure in the first basis (see Figure~\ref{fig:key_quantum_area}).
\begin{figure}[ht!]
\centering
\includegraphics[width=9cm]{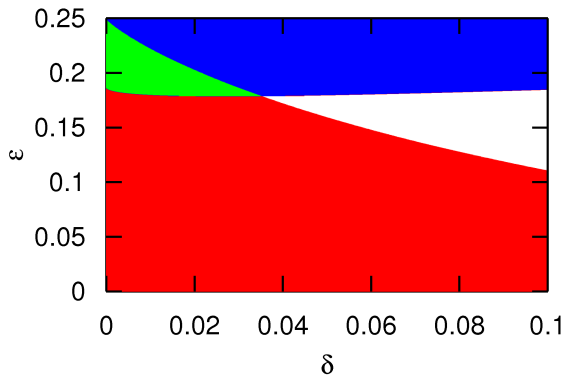}
\vspace{-0.5cm}
\caption{\label{fig:key_quantum_area} The parameter regions for which key agreement is 
possible (red), reachable by quantum mechanics (blue) and their 
intersection (green). $\ep$ is the probability of violating 
the CHSH condition (i.e., $X\oplus Y\neq U\cdot V$) for uniform inputs, and $\delta$ the probability 
of not  having the same output bits on input $(0,0)$.}
\end{figure}

\subsection{Outline}

The rest of our paper is organized as follows. In Section~\ref{sec:model}, we describe the model and the general set of possible 
strategies of a non-signaling adversary. In  Section~\ref{sec:security}, we motivate our security definition. Then we first
consider the case of a single approximation of a PR box and give
a tight bound on the adversarial knowledge on the outputs of such a box (Section~\ref{sec:limit_one_box}).
We then proceed to the general case of $n$ approximations
of a PR box  (Section~\ref{sec:xorpa}). We show that the XOR of several output bits is as secure as when an adversary attacks
each of the boxes independently and individually, hence, the XOR is a good privacy-amplification function. In Section~\ref{sec:keygeneration}, 
we show that we can use the XOR of several bits to do information reconciliation and privacy amplification such as to obtain a 
secret key. We determine the key rate, show how we can attain the region allowing for a positive key rate using quantum mechanics and finally give the 
resulting key generation protocol.

\section{Modeling Non-Signaling Adversaries}\label{sec:model}

When Alice, Bob, and Eve carry out measurements on a (joint) physical system, they can choose
their measurement settings (the inputs) and receive their respective outcomes (the outputs). It is, therefore, natural 
to model the situation by a tripartite input-output system, characterized by a conditional 
distribution $P_{XYZ|UVW}$. The question we study in the following is: 
Given a certain two-party system shared by Alice and Bob, which extensions to a three-party system, including the adversary Eve, 
are possible? And is it possible for Alice and Bob to create a secret key by interacting with their respective parts of the system and communicating over a public channel?
\begin{figure}[h]
\begin{center}
\pspicture*[](-5,-2.3)(5,1.95)
\pspolygon[](-1.5,0)(1.5,0)(1.5,1.75)(-1.5,1.75)
\rput[b]{0}(0,0.65){\large{$P_{XYZ|UVW}$}}
\psline[linewidth=1pt]{->}(-2.5,1.5)(-1.5,1.5)
\psline[linewidth=1pt]{<-}(-2.5,0.25)(-1.5,0.25)
\psline[linewidth=1pt]{->}(2.5,1.5)(1.5,1.5)
\psline[linewidth=1pt]{<-}(2.5,0.25)(1.5,0.25)
\rput[b]{0}(-2.75,1.35){\large{$U$}}
\rput[b]{0}(2.75,1.35){\large{$V$}}
\rput[b]{0}(-2.75,0.1){\large{$X$}}
\rput[b]{0}(2.75,0.1){\large{$Y$}}
\psline[linewidth=1pt]{->}(-0.675,-1)(-0.675,0)
\psline[linewidth=1pt]{<-}(0.675,-1)(0.675,0)
\rput[b]{0}(-0.6755,-1.5){\large{$W$}}
\rput[b]{0}(0.675,-1.5){\large{$Z$}}
\rput[b]{0}(-3.7,0.82){\Large{Alice}}
\rput[b]{0}(3.6,0.82){\Large{Bob}}
\rput[b]{0}(0,-2.3){\Large{Eve}}
\endpspicture
\end{center}
\vspace{-0.5cm}
\caption{The tripartite scenario including the eavesdropper.}
\label{tripartite-situation}
\end{figure}
The only condition hereby is that the entire
system must 
be {\em non-signaling\/},\footnote{In practice, the non-signaling condition can be ensured by carrying out all measurements in a space-like separated way (the system is then non-signaling by relativity theory) or, 
alternatively, by placing every partial system into a shielded laboratory. It is also a direct consequence of the assumption usually made in quantum key distribution, that the Hilbert space is the tensor product of the Hilbert spaces associated with each party.}
i.e., the input/output behavior of one side tells nothing about the input on the other side(s) (and also, dividing the ends of the box in any two subsets, the input/output behavior of one subset tells nothing about the input of the other). 
\begin{condition}{\cite{bhk}}\label{condition:ns}
The system $P_{XYZ|UVW}$ must not allow for signaling:
\begin{eqnarray}
\nonumber \sum\nolimits_x P_{XYZ|UVW}(x,y,z,u,v,w)&=&\sum\nolimits_xP_{XYZ|UVW}(x,y,z,u',v,w)\ \forall y,z,v,w\\
\nonumber \sum\nolimits_yP_{XYZ|UVW}(x,y,z,u,v,w)&=&\sum\nolimits_yP_{XYZ|UVW}(x,y,z,u,v',w)\ \forall x,z,u,w\\
\nonumber \sum\nolimits_zP_{XYZ|UVW}(x,y,z,u,v,w)&=&\sum\nolimits_zP_{XYZ|UVW}(x,y,z,u,v,w')
\ \forall x,y,u,v
\end{eqnarray}
\end{condition}
If a system is non-signaling between its interfaces, this also means that its marginal 
systems are well-defined: What happens at one of the interfaces does not depend on any 
other input. This implies that at all the interfaces, an output can always be 
provided immediately after the input has been given. 

On the other hand, we do allow for Eve to delay her choice of input (measurement) until all of Alice's and Bob's communication is finished --- in particular Eve knows the protocol of Alice and Bob and could get
information about Alice and Bob's inputs, e.g. by wiretapping messages exchanged by them during the protocol, and she can adapt her strategy. 

This tripartite scenario can be reduced to a bipartite one: Because Eve cannot signal to Alice and Bob (even together) by her choice of input, we must have
\begin{eqnarray}
\nonumber \sum\nolimits_zP_{XYZ|UVW}(x,y,z,u,v,w)=\sum\nolimits_zP_{XYZ|UVW}(x,y,z,u,v,w')=P_{XY|UV}(x,y,u,v)\ ,
\end{eqnarray}
and this is exactly the marginal box as seen by Alice and Bob. We can, therefore, see Eve's input as a choice of convex decomposition of Alice's and Bob's box and her output as indicating one part of the decomposition. 
Further, the condition that even Alice and Eve together must not be able to signal to Bob and \textit{vice versa} means that the distribution conditioned on Eve's outcome, $P^z_{XY|UV}$, must also be non-signaling between Alice and Bob. 
Informally, we can write
\begin{eqnarray}
\nonumber
\begin{array}{|ccccc|}\hline
\phantom{_{z_0}}& & & & \\
& A &  & B & \\ 
& & & &\phantom{_{z_0}} \\\hline
\end{array}
&=&
p({z_0}|w)\cdot
\begin{array}{|ccccc|}\hline
\phantom{_{z_0}}& & & & \\
& A &  & B & \\ 
& & & & _{z_0} \\\hline
\end{array}+
p({z_1}|w)\cdot
\begin{array}{|ccccc|}\hline
\phantom{_{z_0}}& & & & \\
& A &  & B & \\ 
& & & & _{z_1} \\\hline
\end{array}+ \cdots 
\end{eqnarray}
and this also covers all possibilities available to Eve. Formally, we define:
\begin{definition}
 {\rm 
A \emph{ box partition} of a given bipartite box $P_{XY|UV}$ is a family of pairs ($p^z$,$P^z_{XY|UV}$), where $p^z$ is a weight and $P^z_{XY|UV}$ is a box, such that
$P_{XY|UV}=\sum_z p^z\cdot P^z_{XY|UV} 
$.
}
\end{definition}
This definition allows us to change between the scenario of a bipartite box plus box partition and the scenario of a tripartite box, as stated in the following two lemmas.
\begin{lemma}
For any given tripartite box, $P_{XYZ|UVW}$, any input $w$ induces a box partition of the bipartite box $P_{XY|UV}$ parametrized by $z$ with $p^z:=p(z|w)$ and $P^z_{XY|UV}:=P_{XY|UV,Z=z,W=w}$.
\end{lemma}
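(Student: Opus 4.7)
The plan is to verify the three conditions in the definition of a box partition: that $p^z$ is a valid probability weight, that each $P^z_{XY|UV}$ is itself a (non-signaling) box, and that the weighted sum reproduces $P_{XY|UV}$. Each of these is a short calculation, but the second one is the substantive step and forces us to use the full (bipartite) non-signaling structure of the tripartite system, not just the pairwise conditions written out in Condition~1.

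First, I would observe that $p^z := p(z\mid w)$ is trivially a probability distribution over $z$: the marginal of $Z$ given $W=w$ inherits non-negativity and normalization from $P_{XYZ\mid UVW}$. Next, for the decomposition, I would write
\[
\sum_z p^z\, P^z_{XY\mid UV}(x,y,u,v)
= \sum_z p(z\mid w)\,\frac{P_{XYZ\mid UVW}(x,y,z,u,v,w)}{p(z\mid u,v,w)}\ .
\]
The non-signaling from Eve to Alice and Bob (third line of Condition~\ref{condition:ns}, summed appropriately) gives $p(z\mid u,v,w)=p(z\mid w)$, so the quotient collapses and the sum over $z$ yields $P_{XY\mid UVW=w}(x,y,u,v)$. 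A second application of the Eve-to-\{Alice,Bob\} non-signaling condition (summing over $z$ in Condition~\ref{condition:ns}) removes the dependence on $w$ and recovers $P_{XY\mid UV}(x,y,u,v)$, as required.

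The main obstacle is showing that each conditional distribution $P^z_{XY\mid UV}$ is non-signaling between Alice and Bob. Marginalizing and using the reasoning above, I would compute
\[
\sum_x P^z_{XY\mid UV}(x,y,u,v) = \frac{\sum_x P_{XYZ\mid UVW}(x,y,z,u,v,w)}{p(z\mid w)}\ ,
\]
and then invoke the stronger non-signaling condition (alluded to in the paper's informal phrasing ``dividing the ends of the box in any two subsets\ldots'') which states that the joint $(Y,Z)$-marginal cannot depend on the input $u$ at Alice's interface, i.e., $\sum_x P_{XYZ\mid UVW}(x,y,z,u,v,w)=\sum_x P_{XYZ\mid UVW}(x,y,z,u',v,w)$. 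This, together with the symmetric condition on Bob's side, shows that the numerator is independent of $u$ (resp.\ $v$) while the denominator does not depend on $u,v$ at all, so $P^z_{XY\mid UV}$ is non-signaling between Alice and Bob.

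Finally, I would point out that the proof does \emph{not} require the bipartite conditions of Condition~\ref{condition:ns} alone, but the three-party non-signaling condition in its full form (no subset signals to its complement); this is precisely the standard assumption motivated earlier by either space-like separation or the tensor-product structure of quantum mechanics, so invoking it here is justified. With these three observations assembled, the pair $(p^z, P^z_{XY\mid UV})$ is a valid box partition in the sense of the definition, completing the proof.
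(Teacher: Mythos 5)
The paper states this lemma without proof, so your argument has to stand on its own; substantively it does. The three checks you enumerate (valid weights, each $P^z_{XY\mid UV}$ is a box, the convex combination reconstructs the marginal) are exactly the right things to verify, and the calculations are correct where they matter: you correctly isolate $p(z\mid u,v,w)=p(z\mid w)$ as the identity that lets the quotient collapse, you correctly use $\sum_x P_{XYZ\mid UVW}(x,y,z,u,v,w)=\sum_x P_{XYZ\mid UVW}(x,y,z,u',v,w)$ for the non-signaling of $P^z_{XY\mid UV}$, and you correctly apply the third line of Condition~\ref{condition:ns} to strip off the dependence on $w$ at the end.

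Two points, however, are misattributed and would mislead a reader. First, the identity $p(z\mid u,v,w)=p(z\mid w)$ is \emph{not} ``non-signaling from Eve to Alice and Bob'' and does not come from the third line of Condition~\ref{condition:ns}; it is non-signaling from Alice and Bob \emph{to} Eve, obtained by summing the first line over $y$ (and the second over $x$) and combining. Second, your closing claim that the proof ``does not require the bipartite conditions of Condition~\ref{condition:ns} alone, but the three-party non-signaling condition in its full form'' overstates the situation: the condition $\sum_x P_{XYZ\mid UVW}(x,y,z,u,v,w)=\sum_x P_{XYZ\mid UVW}(x,y,z,u',v,w)\ \forall y,z,v,w$ that you invoke is \emph{literally} the first line of Condition~\ref{condition:ns} as written in the paper (it already quantifies over $z$ as well as $y$, so it is a subset-to-complement condition, not a mere pairwise one), and $p(z\mid u,v,w)=p(z\mid w)$ is a direct consequence of the three stated lines (this is precisely the kind of implication established in Appendix~\ref{sec:imply_ns}). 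So nothing beyond Condition~\ref{condition:ns} is needed. If you fix these two attributions the proof is clean; you might also add a sentence noting that $z$ with $p(z\mid w)=0$ can be dropped from the partition, since $P^z_{XY\mid UV}$ is undefined there.
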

\begin{lemma}
Given a bipartite box $P_{{XY}|{UV}}$ let $\mathcal{W}$ be a set of box partitions 
$
w=\{(p^z,P^z_{{XY}|{UV}})\}_z
$.
\ 
Then the tripartite box, where the input of the third party is $w\in\mathcal{W}$, defined by 
$P_{{XY}Z|{UV},W=w}(z):=p^z\cdot P^z_{{XY}|{UV}}
$
is non-signaling and has marginal box $P_{{XY}|{UV}}$.
\end{lemma}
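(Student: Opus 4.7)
The plan is to verify the three properties of the constructed tripartite object $P_{XYZ|UVW}$: that it is a valid conditional probability distribution, that it satisfies the three non-signaling conditions of Condition~\ref{condition:ns}, and that its marginal on Alice and Bob equals $P_{XY|UV}$. Writing the components of $w$ as $(p^z_w,P^{z,w}_{XY|UV})$ to make the dependence on $w$ explicit, the construction is $P_{XYZ|UVW}(x,y,z,u,v,w)=p^z_w\,P^{z,w}_{XY|UV}(x,y,u,v)$. The two facts I will invoke repeatedly are: each $P^{z,w}_{XY|UV}$ is itself a box (non-signaling between Alice and Bob) and $\sum_z p^z_w=1$; and the defining property of a box partition, $\sum_z p^z_w P^{z,w}_{XY|UV}=P_{XY|UV}$, holds with the \emph{same} right-hand side for every $w$.

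First I would dispose of normalization: summing out $x,y$ for fixed $z$ yields $p^z_w$ since each $P^{z,w}_{XY|UV}$ is a distribution, and then $\sum_z p^z_w=1$ completes the check for every $(u,v,w)$. Next, marginalizing over $z$ gives $\sum_z p^z_w P^{z,w}_{XY|UV}(x,y,u,v)=P_{XY|UV}(x,y,u,v)$ by the partition property. The result is independent of $w$, which simultaneously proves the non-signaling condition from Eve to the pair (Alice, Bob) and establishes the stated marginal.

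It remains to verify that Alice cannot signal to the pair (Bob, Eve), and symmetrically for Bob. For the first, sum over $x$: $\sum_x P_{XYZ|UVW}(x,y,z,u,v,w)=p^z_w\sum_x P^{z,w}_{XY|UV}(x,y,u,v)$, and since $P^{z,w}_{XY|UV}$ is itself a box the inner sum is independent of $u$, so the full expression depends only on $(y,z,v,w)$. The symmetric computation, summing over $y$ and invoking Bob-to-Alice non-signaling inside each $P^{z,w}_{XY|UV}$, handles the other direction.

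There is no real obstacle; the whole statement is a direct transcription of the definitions. The only conceptual point worth stating explicitly is that the Eve-side non-signaling and the Alice-Bob marginal are effectively the \emph{same} computation, forced by the common-marginal property built into the definition of a box partition, while the other two non-signaling conditions are inherited component-by-component from the fact that every $P^{z,w}_{XY|UV}$ is itself a box. I would therefore present the proof as three short displayed calculations, one per non-signaling condition, with a concluding sentence observing that the marginal on Alice and Bob has already been computed.
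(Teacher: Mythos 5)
Your proof is correct. The paper states this lemma (and its companion, the converse direction) without supplying a proof, treating both as immediate consequences of the definitions; your verification is exactly the routine check that the authors leave implicit, namely normalization, the Alice-to-(Bob,Eve) and Bob-to-(Alice,Eve) non-signaling conditions inherited from each $P^{z,w}_{XY|UV}$ being a box, and the Eve-to-(Alice,Bob) non-signaling condition coinciding with the marginal computation via the box-partition identity $\sum_z p^z_w P^{z,w}_{XY|UV}=P_{XY|UV}$, whose right-hand side is $w$-independent. Your closing observation that the Eve-side condition and the marginal are the same calculation is the right thing to emphasize, since it is precisely why the construction works.
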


Even if Alice and Bob have several input and output interfaces
we need to assume that they belong to a single big system which can be attacked by Eve as one, 
as depicted in Figure~\ref{figure:evesposs}. This also implies that 
Eve only has a single input and output variable (of any range). This scenario is analogous to Eve being able to do coherent attacks in a quantum-key-distribution protocol. 

\begin{figure}[ht]
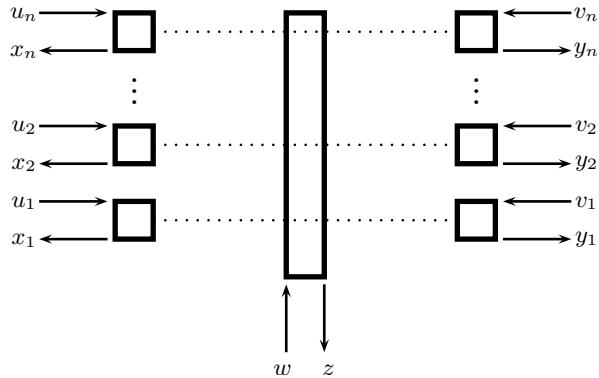

\centering
\pspicture*[](-4.5,-0.3)(4.5,5)
\rput[b]{0}(-2.25,3.3){\textbf{$\vdots$}}
\rput[b]{0}(2.25,3.3){\textbf{$\vdots$}}
\psline[linewidth=1pt]{->}(-3.5,4.5)(-2.6,4.5)
\rput[b]{0}(-3.7,4.4){$u_n$}
\psline[linewidth=1pt]{<-}(-3.5,4)(-2.6,4)
\rput[b]{0}(-3.7,3.9){$x_n$}
\pspolygon[linewidth=2pt](-2.5,4)(-2,4)(-2,4.5)(-2.5,4.5)
\pspolygon[linewidth=2pt](2.5,4)(2,4)(2,4.5)(2.5,4.5)
\psline[linewidth=1pt,linestyle=dotted]{-}(-2,4.25)(2,4.25)
\psline[linewidth=1pt]{->}(3.5,4.5)(2.6,4.5)
\rput[b]{0}(3.7,4.4){$v_n$}
\psline[linewidth=1pt]{<-}(3.5,4)(2.6,4)
\rput[b]{0}(3.7,3.9){$y_n$}
\psline[linewidth=1pt]{->}(-3.5,3)(-2.6,3)
\rput[b]{0}(-3.7,2.9){$u_{2}$}
\psline[linewidth=1pt]{<-}(-3.5,2.5)(-2.6,2.5)
\rput[b]{0}(-3.7,2.4){$x_{2}$}
\pspolygon[linewidth=2pt](-2.5,2.5)(-2,2.5)(-2,3)(-2.5,3)
\psline[linewidth=1pt]{->}(3.5,3)(2.6,3)
\rput[b]{0}(3.7,2.9){$v_{2}$}
\psline[linewidth=1pt]{<-}(3.5,2.5)(2.6,2.5)
\rput[b]{0}(3.7,2.4){$y_{2}$}
\pspolygon[linewidth=2pt](2.5,2.5)(2,2.5)(2,3)(2.5,3)
\psline[linewidth=1pt,linestyle=dotted]{-}(-2,2.75)(2,2.75)
\psline[linewidth=1pt]{->}(-3.5,2)(-2.6,2)
\rput[b]{0}(-3.7,1.9){$u_1$}
\psline[linewidth=1pt]{<-}(-3.5,1.5)(-2.6,1.5)
\rput[b]{0}(-3.7,1.4){$x_1$}
\psline[linewidth=1pt]{->}(3.5,2)(2.6,2)
\rput[b]{0}(3.7,1.9){$v_1$}
\psline[linewidth=1pt]{<-}(3.5,1.5)(2.6,1.5)
\rput[b]{0}(3.7,1.4){$y_1$}
\pspolygon[linewidth=2pt](-2.5,1.5)(-2,1.5)(-2,2)(-2.5,2)
\pspolygon[linewidth=2pt](2.5,1.5)(2,1.5)(2,2)(2.5,2)
\psline[linewidth=1pt,linestyle=dotted]{-}(-2,1.75)(2,1.75)
\pspolygon[linewidth=2pt](-0.25,1)(0.25,1)(0.25,4.5)(-0.25,4.5)
\psline[linewidth=1pt]{->}(-0.25,0)(-0.25,0.9)
\rput[b]{0}(-0.3,-0.3){$w$}
\psline[linewidth=1pt]{<-}(0.25,0)(0.25,0.9)
\rput[b]{0}(0.3,-0.3){$z$}
\endpspicture
\caption{\label{figure:evesposs}Alice and Bob share $n$  
 boxes which are independent from their viewpoint. However, Eve can attack all of them at once.}
\end{figure}

However, Alice and Bob can make sure that the non-signaling condition holds between all of their $2n$ input/output interfaces. The non-signaling condition then needs to hold even 
 \emph{given Eve's output $z$}.
We, therefore, extend Condition~\ref{condition:ns} from the tripartite to the $(2n+1)$-partite case in the obvious way and call such a system \emph{$(2n+1)$-partite non-signaling} (see Figure~\ref{figure:space-like_separation}).
\begin{figure}[ht]
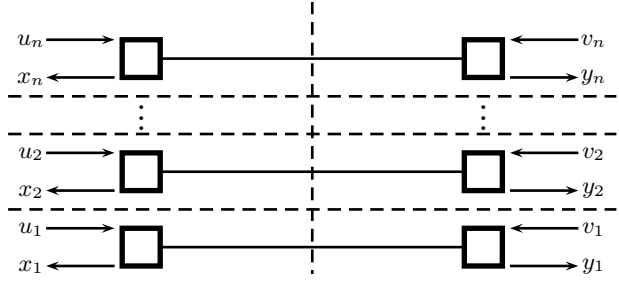

 \centering
\pspicture*[](-4.5,1.4)(4.5,5)
\rput[b]{0}(-2.25,3.3){\textbf{$\vdots$}}
\rput[b]{0}(2.25,3.3){\textbf{$\vdots$}}
\psline[linewidth=1pt,linestyle=dashed]{-}(0,1)(0,5)
\psline[linewidth=1pt]{->}(-3.5,4.5)(-2.6,4.5)
\rput[b]{0}(-3.7,4.4){$u_n$}
\psline[linewidth=1pt]{<-}(-3.5,4)(-2.6,4)
\rput[b]{0}(-3.7,3.9){$x_n$}
\pspolygon[linewidth=2pt](-2.5,4)(-2,4)(-2,4.5)(-2.5,4.5)
\pspolygon[linewidth=2pt](2.5,4)(2,4)(2,4.5)(2.5,4.5)
\psline[linewidth=1pt,linestyle=dashed]{-}(-4,3.75)(4,3.75)
\psline[linewidth=1pt]{-}(-2,4.25)(2,4.25)
\psline[linewidth=1pt]{->}(3.5,4.5)(2.6,4.5)
\rput[b]{0}(3.7,4.4){$v_n$}
\psline[linewidth=1pt]{<-}(3.5,4)(2.6,4)
\rput[b]{0}(3.7,3.9){$y_n$}
\psline[linewidth=1pt]{->}(-3.5,3)(-2.6,3)
\rput[b]{0}(-3.7,2.9){$u_{2}$}
\psline[linewidth=1pt]{<-}(-3.5,2.5)(-2.6,2.5)
\rput[b]{0}(-3.7,2.4){$x_{2}$}
\psline[linewidth=1pt,linestyle=dashed]{-}(-4,3.25)(4,3.25)
\pspolygon[linewidth=2pt](-2.5,2.5)(-2,2.5)(-2,3)(-2.5,3)
\psline[linewidth=1pt]{->}(3.5,3)(2.6,3)
\rput[b]{0}(3.7,2.9){$v_{2}$}
\psline[linewidth=1pt]{<-}(3.5,2.5)(2.6,2.5)
\rput[b]{0}(3.7,2.4){$y_{2}$}
\pspolygon[linewidth=2pt](2.5,2.5)(2,2.5)(2,3)(2.5,3)
\psline[linewidth=1pt]{-}(-2,2.75)(2,2.75)
\psline[linewidth=1pt]{->}(-3.5,2)(-2.6,2)
\rput[b]{0}(-3.7,1.9){$u_1$}
\psline[linewidth=1pt]{<-}(-3.5,1.5)(-2.6,1.5)
\rput[b]{0}(-3.7,1.4){$x_1$}
\psline[linewidth=1pt]{->}(3.5,2)(2.6,2)
\rput[b]{0}(3.7,1.9){$v_1$}
\psline[linewidth=1pt]{<-}(3.5,1.5)(2.6,1.5)
\rput[b]{0}(3.7,1.4){$y_1$}
\psline[linewidth=1pt,linestyle=dashed]{-}(-4,2.25)(4,2.25)
\pspolygon[linewidth=2pt](-2.5,1.5)(-2,1.5)(-2,2)(-2.5,2)
\pspolygon[linewidth=2pt](2.5,1.5)(2,1.5)(2,2)(2.5,2)
\psline[linewidth=1pt]{-}(-2,1.75)(2,1.75)
\endpspicture
\caption{\label{figure:space-like_separation}The dashed lines mean space-like separation.}
\end{figure}

We study the particular case where Alice and Bob share $n$ approximations of a PR box, i.e., each of the $2n$ input/output interfaces takes one bit input and gives one bit 
output.\footnote{We will write $U$ for the random bit denoting Alice's input, bold-face letters $\bof{U}$ will denote a $n$-bit 
random variable (i.e., an $n$-bit vector), $U_i$ a single random bit in this $n$-bit string and lowercase letters the value that the random 
variable has taken. 
A similar notation is used for Alice's output $X$ and Bob's input and output $V$ and $Y$. No assumption is made about the range of Eve's 
input/output variables $W$ and $Z$.} 
Note that we assume that the boxes Alice and Bob share were {\em created\/} by Eve. We can, therefore, not make any assumption about their form (i.e., the probability distribution describing them). In particular, they need not be independent approximations of PR boxes. However, Alice and Bob can test the properties of their systems and can ensure that 
the non-signaling condition holds between all $2n$ ends and even \emph{given Eve's output $z$}, i.e., $P^z_{XY|UV}$ must not allow for signaling between any of the $2n$ 
input/output bit pairs shared between Alice and Bob.
We restate the condition, under which we will prove security:
\begin{condition2}{\ref{condition:ns}'}\label{condition:ns2}
The system $P_{\bof{XY}Z|\bof{UV}W}$ must not allow for signaling between any of the $2n+1$ marginal systems:
\begin{eqnarray}
\nonumber \sum
_{x_i} P_{\bof{XY}Z|\bof{UV}W}(\bof{x},\bof{y},z,\bof{u}\backslash u_i, u_i,\bof{v},w)&=&\sum
_{x_i}P_{\bof{XY}Z|\bof{UV}W}(\bof{x},\bof{y},z,\bof{u}\backslash u_i, u'_i,\bof{v},w)\ \forall \bof{x}\backslash x_i,\bof{y},z,,\bof{u}\backslash u_i,\bof{v},w\\
\nonumber \sum
_{y_i} P_{\bof{XY}Z|\bof{UV}W}(\bof{x},\bof{y},z,\bof{u},\bof{v}\backslash v_i, v_i,w)&=&\sum
_{y_i}P_{\bof{XY}Z|\bof{UV}W}(\bof{x},\bof{y},z,\bof{u},\bof{v}\backslash v_i, v'_i,w)\ \forall \bof{x},\bof{y}\backslash y_i,z,,\bof{u},\bof{v}\backslash v_i,w\\
\nonumber \sum_zP_{\bof{XY}Z|\bof{UV}W}(\bof{x},\bof{y},z,\bof{u},\bof{v},w)&=&\sum_zP_{\bof{XY}Z|\bof{UV}W}(\bof{x},\bof{y},z,\bof{u},\bof{v},w')
\ \forall \bof{x},\bof{y},\bof{u},\bof{v}\ ,
\end{eqnarray}
where we used the notation $\bof{x}\backslash x_i$ to abbreviate $x_1,\ldots,x_{i-1},x_{i+1},\ldots x_n$, i.e., all $x_j$ for which $j\neq i$.
\end{condition2}
Note that the above conditions imply the non-signaling condition between any partition of the input/output interfaces. An explicit proof of this is given in Appendix~\ref{sec:imply_ns}.

\section{Security Definition}\label{sec:security}

\subsection{Indistinguishability}

We define security in the context of \emph{random systems}~\cite{Maurer02}. 
A \emph{system} is an object taking inputs and giving outputs --- such as, for example, a box or several boxes.
The different interfaces, number of interactions, and, if there is, the time-wise ordering of these inputs and outputs is described in the definition of the system. 
\begin{figure}[ht!]
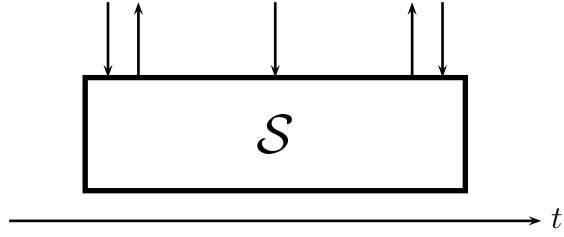

\begin{center}
\pspicture*[](-4,0)(4,3)
\psline[linewidth=1pt]{->}(-2.2,3)(-2.2,2)
\psline[linewidth=1pt]{<-}(-1.8,3)(-1.8,2)
\psline[linewidth=1pt]{->}(0,3)(0,2)
\psline[linewidth=1pt]{<-}(1.8,3)(1.8,2)
\psline[linewidth=1pt]{->}(2.2,3)(2.2,2)
\rput[b]{0}(0,1){\huge{$\mathcal{S}$}}
\pspolygon[linewidth=2pt](-2.5,0.5)(2.5,0.5)(2.5,2)(-2.5,2)
\psline[linewidth=1pt]{->}(-3.5,0.1)(3.5,0.1)
\rput[b]{0}(3.7,0){\large{$t$}}
\endpspicture
\end{center}
\vspace{-0.5cm}
\caption{A system.}
\end{figure}

The closeness of two systems $\mathcal{S}_0$ and $\mathcal{S}_1$ can be measured by introducing a so-called \emph{distinguisher}. A distinguisher $\mathcal{D}$ is itself a system and it has the same interfaces as the system $\mathcal{S}_0$, with the only difference that wherever $\mathcal{S}_0$ takes an input, $\mathcal{D}$ gives an output and \emph{vice versa}. In addition, $\mathcal{D}$ has an extra output. The distinguisher $\mathcal{D}$ has access to \emph{all} interfaces of $\mathcal{S}_0$, even though these interfaces might not be in the same location when the protocol is executed (for example, one of the interfaces might be the one seen by Alice, while the other is the one seen by Eve). 

\begin{figure}[ht!]
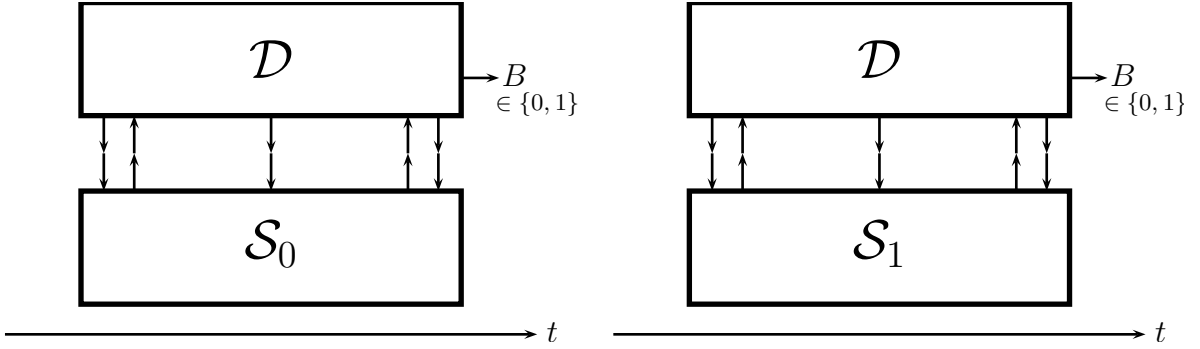

\begin{center}
\pspicture*[](-8,0)(8,4.5)
\rput[b]{0}(-4,0){
\psline[linewidth=1pt]{<-}(-2.2,2.5)(-2.2,3)
\psline[linewidth=1pt]{->}(-1.8,2.5)(-1.8,3)
\psline[linewidth=1pt]{<-}(0,2.5)(0,3)
\psline[linewidth=1pt]{->}(1.8,2.5)(1.8,3)
\psline[linewidth=1pt]{<-}(2.2,2.5)(2.2,3)
\psline[linewidth=1pt]{->}(2.5,3.5)(3,3.5)
\rput[b]{0}(3.2,3.4){\large{$B$}}
\rput[b]{0}(3.5,3){{$\in\{0,1\}$}}
\rput[b]{0}(0,3.5){\huge{$\mathcal{D}$}}
\pspolygon[linewidth=2pt](-2.5,3)(2.5,3)(2.5,4.5)(-2.5,4.5)
\psline[linewidth=1pt]{->}(-2.2,2.5)(-2.2,2)
\psline[linewidth=1pt]{<-}(-1.8,2.5)(-1.8,2)
\psline[linewidth=1pt]{->}(0,2.5)(0,2)
\psline[linewidth=1pt]{<-}(1.8,2.5)(1.8,2)
\psline[linewidth=1pt]{->}(2.2,2.5)(2.2,2)
\rput[b]{0}(0,1){\huge{$\mathcal{S}_0$}}
\pspolygon[linewidth=2pt](-2.5,0.5)(2.5,0.5)(2.5,2)(-2.5,2)
\psline[linewidth=1pt]{->}(-3.5,0.1)(3.5,0.1)
\rput[b]{0}(3.7,0){\large{$t$}}
}
\rput[b]{0}(4,0){
\psline[linewidth=1pt]{<-}(-2.2,2.5)(-2.2,3)
\psline[linewidth=1pt]{->}(-1.8,2.5)(-1.8,3)
\psline[linewidth=1pt]{<-}(0,2.5)(0,3)
\psline[linewidth=1pt]{->}(1.8,2.5)(1.8,3)
\psline[linewidth=1pt]{<-}(2.2,2.5)(2.2,3)
\psline[linewidth=1pt]{->}(2.5,3.5)(3,3.5)
\rput[b]{0}(3.2,3.4){\large{$B$}}
\rput[b]{0}(3.5,3){{$\in\{0,1\}$}}
\rput[b]{0}(0,3.5){\huge{$\mathcal{D}$}}
\pspolygon[linewidth=2pt](-2.5,3)(2.5,3)(2.5,4.5)(-2.5,4.5)
\psline[linewidth=1pt]{->}(-2.2,2.5)(-2.2,2)
\psline[linewidth=1pt]{<-}(-1.8,2.5)(-1.8,2)
\psline[linewidth=1pt]{->}(0,2.5)(0,2)
\psline[linewidth=1pt]{<-}(1.8,2.5)(1.8,2)
\psline[linewidth=1pt]{->}(2.2,2.5)(2.2,2)
\rput[b]{0}(0,1){\huge{$\mathcal{S}_1$}}
\pspolygon[linewidth=2pt](-2.5,0.5)(2.5,0.5)(2.5,2)(-2.5,2)
\psline[linewidth=1pt]{->}(-3.5,0.1)(3.5,0.1)
\rput[b]{0}(3.7,0){\large{$t$}}
}
\endpspicture
\end{center}
\vspace{-0.5cm}
\caption{\label{fig:distinguisher} The distinguisher}
\end{figure}
Now consider the following game: the distinguisher~$\mathcal{D}$ is given one out of two systems at random --- either 
$\mathcal{S}_0$ or $\mathcal{S}_1$ --- but the distinguisher does not know which one. It then has to interact with 
the system and output a bit $B$ at the end, guessing which system it has interacted with. The \emph{distinguishing 
advantage between system $\mathcal{S}_0$ and $\mathcal{S}_1$} is the maximum  guessing advantage any distinguisher 
can have in this game (see Figure~\ref{fig:distinguisher}).
\begin{definition}
The \emph{distinguishing advantage between two systems $\mathcal{S}_0$ and $\mathcal{S}_1$ }is 
\begin{eqnarray}
 \nonumber \delta(\mathcal{S}_0, \mathcal{S}_1)&=& \max_{\mathcal{D}}[P(B=1|\mathcal{S}=\mathcal{S}_0)-P(B=1|\mathcal{S}=\mathcal{S}_1)].
\end{eqnarray}
Two systems $\mathcal{S}_0$ and $\mathcal{S}_1$ are called $\epsilon$-indistinguishable if $\delta(\mathcal{S}_0, \mathcal{S}_1)\leq \epsilon$.
\end{definition}

The probability of any event $\mathcal{E}$ when the distinguisher $\mathcal{D}$ is interacting with $\mathcal{S}_0$ or $\mathcal{S}_1$ cannot differ by more than this quantity. 
\begin{lemma}\label{lemma:event}
Assume two $\epsilon$-indistinguishable systems $\mathcal{S}_0$ and $\mathcal{S}_1$.
Denote by $P(\mathcal{E}|\mathcal{S}_0,\mathcal{D})$ the probability of an event $\mathcal{E}$, defined by any 
of the input and output variables, given the distinguisher~$\mathcal{D}$ is interacting with the system $\mathcal{S}_0$. Then
\begin{eqnarray}
 \nonumber P(\mathcal{E}|\mathcal{S}_0,\mathcal{D})&\leq & P(\mathcal{E}|\mathcal{S}_1,\mathcal{D})+ \epsilon
\end{eqnarray}
\end{lemma}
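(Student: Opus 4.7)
The plan is to prove this by reducing any event-probability gap to a distinguishing-advantage gap, via the standard construction of a ``wrapper'' distinguisher. Concretely, I would build a new distinguisher $\mathcal{D}'$ out of the given distinguisher $\mathcal{D}$ which, on the inside, simulates $\mathcal{D}$ and forwards all its inputs/outputs to the system; at the end, instead of outputting $\mathcal{D}$'s bit $B$, it inspects the full transcript of inputs and outputs exchanged at every interface and outputs $1$ if and only if the event $\mathcal{E}$ occurred, and $0$ otherwise.

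The first step is to verify that $\mathcal{D}'$ is legitimate, i.e., that it has exactly the same interface shape as $\mathcal{S}_0$ (and $\mathcal{S}_1$) and produces a single decision bit. This is straightforward: $\mathcal{D}$ already has access to all of the system's interfaces, so $\mathcal{D}'$ does too; and since, by hypothesis, $\mathcal{E}$ is defined purely in terms of the input and output variables seen at those interfaces, $\mathcal{D}'$ can evaluate the indicator of $\mathcal{E}$ from the transcript it has recorded during the simulation. The next step is the key identity: by construction, for $i\in\{0,1\}$,
\[
P(B'=1 \mid \mathcal{S}=\mathcal{S}_i) \;=\; P(\mathcal{E}\mid \mathcal{S}_i,\mathcal{D}),
\]
since $\mathcal{D}'$'s interaction with $\mathcal{S}_i$ is identical in distribution to $\mathcal{D}$'s interaction with $\mathcal{S}_i$, and the output bit $B'$ is the indicator of $\mathcal{E}$ on that transcript.

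Applying the definition of the distinguishing advantage to $\mathcal{D}'$ then yields
\[
P(\mathcal{E}\mid\mathcal{S}_0,\mathcal{D}) - P(\mathcal{E}\mid\mathcal{S}_1,\mathcal{D})
\;=\; P(B'=1\mid\mathcal{S}_0) - P(B'=1\mid\mathcal{S}_1)
\;\leq\; \max_{\mathcal{D}''}\bigl[P(B''=1\mid\mathcal{S}_0)-P(B''=1\mid\mathcal{S}_1)\bigr] \;=\; \delta(\mathcal{S}_0,\mathcal{S}_1)\;\leq\;\epsilon,
\]
which is the claimed inequality after rearranging.

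The only delicate point, and the one I would take most care with, is the implicit assumption that the event $\mathcal{E}$ is in fact computable from data visible at the interfaces of the system, so that a distinguisher can really ``see'' whether $\mathcal{E}$ occurred. The statement of the lemma already restricts $\mathcal{E}$ to events defined by the input and output variables, so this is built in; nevertheless, I would spell out that $\mathcal{D}'$ keeps a local copy of every symbol it writes to or reads from the system, which is a free local operation and does not affect the system's behaviour. No probabilistic estimates, concentration bounds, or structural properties of the systems are needed beyond the definition of $\delta(\mathcal{S}_0,\mathcal{S}_1)$; the whole argument is a one-line reduction once $\mathcal{D}'$ is set up.
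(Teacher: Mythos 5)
Your proposal is correct and takes essentially the same approach as the paper: both reduce the event-probability gap to a distinguishing advantage by building a distinguisher whose output bit is the indicator of $\mathcal{E}$. The paper phrases this as a proof by contradiction (and, incidentally, flips the output convention), whereas you argue directly and spell out the wrapper construction more carefully, but the underlying idea is identical.
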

\begin{proof}
Assume $P(\mathcal{E}|\mathcal{S}_0,\mathcal{D})> P(\mathcal{E}|\mathcal{S}_1,\mathcal{D})+ \epsilon$ and define 
the distinguisher $\mathcal{D}$ such that it outputs $B=0$ whenever the event $\mathcal{E}$ has happened and whenever 
$\mathcal{E}$ has not happened it outputs $B=1$. Then this distinguisher reaches a distinguishing advantage of $\delta(\mathcal{S}_0, \mathcal{S}_1)>\epsilon$ contradicting the assumption that the two systems are $\epsilon$-indistinguishable. \pe
\end{proof}

\subsection{Security of a Key}

The security of a cryptographic primitive can be measured by the distance of this system from 
an \emph{ideal} system, which is secure by definition. For example, in the case of key distribution 
the ideal system is the one which outputs a uniform and random key (bit string) at one end and for which all other 
input/output interfaces are completely independent of this first interface. This key is secure by construction. 
If the \emph{real} key distribution protocol is $\epsilon$-indistinguishable from the ideal one, then by Lemma~\ref{lemma:event} the 
key obtained from the real system needs to be secure except with probability $\epsilon$. 
This is because the probability that an adversary has knowledge about the key is $0$ in the ideal case. 

\begin{figure}[ht!]
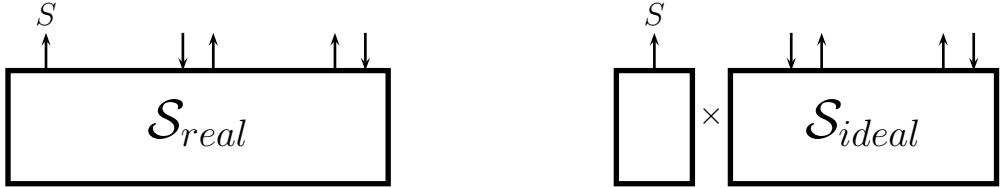

\begin{center}
\pspicture*[](-8,0)(8,3)
\rput[b]{0}(4,0){
\psline[linewidth=1pt]{<-}(-2.0,2.5)(-2.0,2)
\pspolygon[linewidth=2pt](-2.5,0.5)(-1.5,0.5)(-1.5,2)(-2.5,2)
\rput[b]{0}(-2,2.6){\large{$S$}}
\rput[b]{0}(-1.25,1.25){\large{$\times$}}
\psline[linewidth=1pt]{->}(-0.2,2.5)(-0.2,2)
\psline[linewidth=1pt]{<-}(0.2,2.5)(0.2,2)
\psline[linewidth=1pt]{<-}(1.8,2.5)(1.8,2)
\psline[linewidth=1pt]{->}(2.2,2.5)(2.2,2)
\rput[b]{0}(0.75,1){\huge{$\mathcal{S}_{ideal}$}}
\pspolygon[linewidth=2pt](-1,0.5)(2.5,0.5)(2.5,2)(-1,2)
}
\rput[b]{0}(-4,0){
\psline[linewidth=1pt]{<-}(-2.0,2.5)(-2.0,2)
\rput[b]{0}(-2,2.6){\large{$S$}}
\psline[linewidth=1pt]{->}(-0.2,2.5)(-0.2,2)
\psline[linewidth=1pt]{<-}(0.2,2.5)(0.2,2)
\psline[linewidth=1pt]{<-}(1.8,2.5)(1.8,2)
\psline[linewidth=1pt]{->}(2.2,2.5)(2.2,2)
\rput[b]{0}(0,1){\huge{$\mathcal{S}_{real}$}}
\pspolygon[linewidth=2pt](-2.5,0.5)(2.5,0.5)(2.5,2)(-2.5,2)
}
\endpspicture
\end{center}
\vspace{-0.5cm}
\caption{The real and ideal system for the case of key distribution.}
\end{figure}

\begin{definition}
A key $S$ is \emph{$\epsilon$-secure} if the system outputting $S$ is $\epsilon$-indistinguishable from an ideal system which outputs a uniform random variable $S$ and for which all other input/output interfaces are completely independent of the random variable $S$. 
\end{definition}

This definition implies that the resulting security is \emph{universally composable}~\cite{pw,bpw,canetti}. In fact, 
assume by contradiction that there exists any way of using the key (or any other part of the system which generates the key) such that 
the result is insecure, i.e., distinguishable with probability larger than $\epsilon$ from the ideal system. 
This process could then be used to distinguish the key 
generation scheme from an ideal one with probability larger 
than $\epsilon$, which is impossible by definition.

\subsection{Security of Our Key Agreement Protocol}

The system we consider (see Figure~\ref{figure:our_system_physical}) is the one where 
Alice and Bob share a public authenticated channel plus a quantum state (modeled as a box). 
Eve can wire-tap the public channel and choose an input on her part of the box and obtain an output (i.e., measure her part of the quantum state). 
Similar to the quantum case, it is no advantage for Eve to make several box partitions (measurements) instead of a single one, as the same information can be obtained by making a refined box partition of the initial box. Without loss of generality, we can, therefore, assume that Eve gives a single input
 at the end (after all communication between Alice and Bob is finished).
In our scenario, Eve, therefore, obtains all the 
communication exchanged over the public channel $Q$, can 
then choose the input to her box $W$ (which can depend on 
$Q$) and finally obtains the outcome of the box $Z$. 
\begin{figure}[ht!]
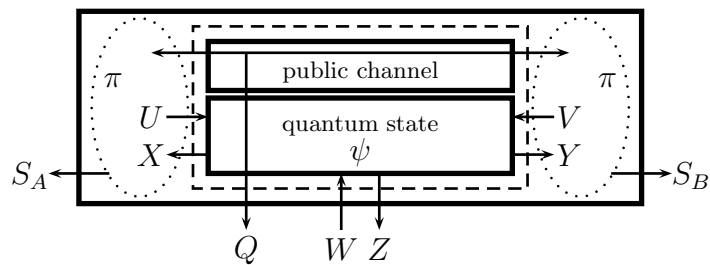

\begin{center}
\pspicture*[](-8,0)(8,3.5)
\psline[linewidth=1pt]{<->}(-2.75,2.85)(2.75,2.85)
\rput[b]{0}(0,2.45){public channel}
\pspolygon[linewidth=2pt](-2,3)(2,3)(2,2.35)(-2,2.35)
\pspolygon[linewidth=2pt](-2,1.25)(2,1.25)(2,2.25)(-2,2.25)
\psline[linewidth=1pt]{<-}(-2.55,1.5)(-2,1.5)
\rput[b]{0}(-2.75,1.35){\large{$X$}}
\psline[linewidth=1pt]{->}(-2.55,2)(-2,2)
\rput[b]{0}(-2.75,1.85){\large{$U$}}
\psline[linewidth=1pt]{<-}(2.55,1.5)(2,1.5)
\rput[b]{0}(2.75,1.35){\large{$Y$}}
\psline[linewidth=1pt]{->}(2.55,2)(2,2)
\rput[b]{0}(2.75,1.85){\large{$V$}}
\psline[linewidth=1pt]{->}(-0.25,0.5)(-0.25,1.25)
\rput[b]{0}(-0.25,0.1){\large{$W$}}
\psline[linewidth=1pt]{<-}(0.25,0.5)(0.25,1.25)
\rput[b]{0}(0.25,0.1){\large{$Z$}}
\rput[b]{0}(0,1.75){quantum state}
\rput[b]{0}(0,1.35){\large{$\psi$}}
\psline[linewidth=1pt]{<-}(-1.5,0.5)(-1.5,2.85)
\rput[b]{0}(-1.5,0.05){\large{$Q$}}
\pspolygon[linewidth=1pt,linestyle=dashed](-2.2,1.05)(2.2,1.05)(2.2,3.2)(-2.2,3.2) 
\psellipse[linewidth=1pt,linestyle=dotted](-2.9,2.125)(0.65,1.2)
\psellipse[linewidth=1pt,linestyle=dotted](2.9,2.125)(0.65,1.2)
\pspolygon[linewidth=2pt](-3.7,0.85)(3.7,0.85)(3.7,3.4)(-3.7,3.4) 
\rput[b]{0}(-3.25,2.4){\large{$\pi$}}
\rput[b]{0}(3.25,2.4){\large{$\pi$}}
\psline[linewidth=1pt]{<-}(-4.1,1.25)(-3.3,1.25)
\rput[b]{0}(-4.35,1.05){\large{$S_A$}}
\psline[linewidth=1pt]{<-}(4.1,1.25)(3.3,1.25)
\rput[b]{0}(4.35,1.05){\large{$S_B$}}
\endpspicture
\end{center}
\caption{\label{figure:our_system_physical} Our system. Alice and Bob share a public authentic channel and a quantum state. When they apply a protocol $\pi$ to obtain a key, all this can together be modeled as a system.}
\end{figure}
If Alice and Bob apply a protocol $\pi$ to the inputs and 
outputs of their boxes and the information exchanged over 
the public channel to obtain a key, this protocol can also be 
included in the system. The new system now outputs the key 
$S_A$ on Alice's and $S_B$ on Bob's side. Obviously Eve's 
possibilities to interact with this system has not changed. 
We will, therefore, need to bound the distance between this system and the ideal 
system.\footnote{Note that we can consider the distance of 
$S_A$ from an ideal key and the distance between $S_A$ and 
$S_B$ (probability of the keys to be unequal) separately. By the triangle inequality, the distance of the total real system from 
the ideal system is at most the sum of the two.} 

\begin{figure}[ht!]
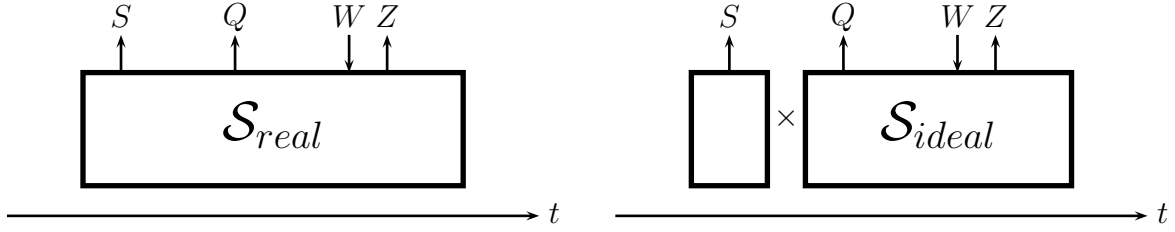

\begin{center}
\pspicture*[](-8,0)(8,3)
\rput[b]{0}(-4,0){
\psline[linewidth=1pt]{<-}(-2.0,2.5)(-2.0,2)
\rput[b]{0}(-2,2.6){\large{$S$}}
\psline[linewidth=1pt]{<-}(-0.5,2.5)(-0.5,2)
\rput[b]{0}(-0.5,2.55){\large{$Q$}}
\psline[linewidth=1pt]{->}(1.0,2.5)(1.0,2)
\rput[b]{0}(1,2.6){\large{$W$}}
\psline[linewidth=1pt]{<-}(1.5,2.5)(1.5,2)
\rput[b]{0}(1.5,2.6){\large{$Z$}}
\rput[b]{0}(0,1){\huge{$\mathcal{S}_{real}$}}
\pspolygon[linewidth=2pt](-2.5,0.5)(2.5,0.5)(2.5,2)(-2.5,2)
\psline[linewidth=1pt]{->}(-3.5,0.1)(3.5,0.1)
\rput[b]{0}(3.7,0){\large{$t$}}
}
\rput[b]{0}(4,0){
\psline[linewidth=1pt]{<-}(-2.0,2.5)(-2.0,2)
\rput[b]{0}(-2,2.6){\large{$S$}}
\psline[linewidth=1pt]{<-}(-0.5,2.5)(-0.5,2)
\rput[b]{0}(-0.5,2.55){\large{$Q$}}
\psline[linewidth=1pt]{->}(1.0,2.5)(1.0,2)
\rput[b]{0}(1,2.6){\large{$W$}}
\psline[linewidth=1pt]{<-}(1.5,2.5)(1.5,2)
\rput[b]{0}(1.5,2.6){\large{$Z$}}
\rput[b]{0}(-1.25,1.25){\large{$\times$}}
\rput[b]{0}(0.75,1){\huge{$\mathcal{S}_{ideal}$}}
\pspolygon[linewidth=2pt](-2.5,0.5)(-1.5,0.5)(-1.5,2)(-2.5,2)
\pspolygon[linewidth=2pt](-1,0.5)(2.5,0.5)(2.5,2)(-1,2)
\psline[linewidth=1pt]{->}(-3.5,0.1)(3.5,0.1)
\rput[b]{0}(3.7,0){\large{$t$}}
}
\endpspicture
\end{center}
\vspace{-0.5cm}
\caption{\label{figure:our_system}Our system. The distribution of the random variable $S$ in the ideal case is such that $P_S(s)=1/|\mathcal{S}|$.}
\end{figure}

The following corollary is a direct consequence\footnote{For the formal proof it is useful to note that instead of a box taking
input $W$, we can consider a box giving outputs indexed by $w$, $Z_w$, of which one is selected. This reflects that the box 
considered is non-signaling.} of the definitions of the systems in Figure~\ref{figure:our_system} and the distinguishing advantage.
\begin{corollary}
\label{def:dist_advantage}
Assume a key $S$ generated by a system as given in Figure~\ref{figure:our_system}. Then
\begin{eqnarray}
\nonumber 
\delta(\mathcal{S}_{real},\mathcal{S}_{ideal})&=&1/2\cdot
\sum_{s,q}  \max_{w} \sum_{z} P_{Z,Q|W=w}(z,q)\cdot|P_{S|Z=z,Q=q,W=w}(s)-P_U|,
\end{eqnarray}
where $w$ is chosen such as to maximize this quantity and $P_U:=1/|\mathcal{S}|$.
\end{corollary}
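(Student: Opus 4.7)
The plan is to compute the distinguishing advantage directly from its definition by optimizing the distinguisher's strategy in stages. First, I would consider a general distinguisher $\mathcal{D}$ attacking either system of Figure~\ref{figure:our_system}. Its interaction is sequential: the protocol produces the key output $S$ (at Alice's interface) together with the public transcript $Q$, then $\mathcal{D}$ supplies a value $w$ for Eve's box input $W$ and receives $Z$, and finally outputs a bit $B\in\{0,1\}$. Because the box is non-signaling from Eve to Alice and Bob, the joint distribution $P_{S,Q|W=w}$ is independent of $w$; hence, by the usual convexity argument for distinguishing games, $\mathcal{D}$ may be assumed deterministic and is specified by a choice function $w=w(s,q)$ and a decision function $B=B(s,q,z)$.

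Next I would write down the two induced distributions on $(s,q,z)$ under this strategy. In the real system the distribution is $P^{\mathrm{real}}(s,q,z)=P_{S,Q,Z|W=w(s,q)}(s,q,z)$. In the ideal system, by construction $S$ is a uniform random variable independent of all other interfaces while the marginal box seen by Eve is unchanged, so $P^{\mathrm{ideal}}(s,q,z)=P_U(s)\cdot P_{Q,Z|W=w(s,q)}(q,z)$ with $P_U(s)=1/|\mathcal{S}|$. The standard $L^{1}$-characterization of statistical distance then yields, for this fixed choice function, a maximal distinguishing advantage of $\tfrac12\sum_{s,q,z}|P^{\mathrm{real}}(s,q,z)-P^{\mathrm{ideal}}(s,q,z)|$.

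Because $w(s,q)$ can be chosen independently for each pair $(s,q)$, the outer maximum over choice functions pulls inside the $(s,q)$-sum, giving $\tfrac12\sum_{s,q}\max_{w}\sum_{z}|P_{S,Q,Z|W=w}(s,q,z)-P_U(s)\,P_{Q,Z|W=w}(q,z)|$. Factoring $P_{S,Q,Z|W=w}(s,q,z)=P_{Q,Z|W=w}(q,z)\cdot P_{S|Q=q,Z=z,W=w}(s)$ inside the absolute value and pulling the positive factor $P_{Q,Z|W=w}(q,z)$ out of $|\cdot|$ then produces precisely the expression stated in the corollary.

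The only genuine subtlety is justifying that $w$ may depend on $s$ even though $S$ is output at Alice's interface rather than Eve's. The footnote indicates the cleanest resolution: by non-signaling of the box, one may equivalently model Eve as receiving an entire family $\{Z_w\}_w$ whose marginals satisfy $P_{Z_w}=P_{Z|W=w}$ and then ``selecting'' one component via $w$, so a distinguisher with access to every interface is indeed free to observe $(s,q)$ before choosing which $Z_w$ to read out. Everything else is routine manipulation of conditional probabilities together with the standard $L^{1}$-formula for total variation distance.
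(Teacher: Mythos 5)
Your proposal is correct and fills in exactly the reasoning the paper leaves implicit when it calls the corollary a ``direct consequence'' and appeals to the footnote about modeling the box as producing all $Z_w$ simultaneously. The key steps you supply --- restricting to deterministic distinguishers specified by $w=w(s,q)$ and $B=B(s,q,z)$, invoking the $L^1$ characterization of the best decision rule, pushing the maximum over choice functions inside the $(s,q)$-sum because the non-signaling property makes $P_{S,Q}$ independent of $w$, and then factoring $P_{S,Q,Z|W=w}(s,q,z)=P_{Q,Z|W=w}(q,z)\,P_{S|Q=q,Z=z,W=w}(s)$ --- are all precisely what is needed, and the justification that $w$ may depend on $s$ via the equivalent $\{Z_w\}_w$-readout picture is the correct handling of the one genuinely subtle point. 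This is essentially the same argument the authors intend.
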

This quantity will be the one that is relevant for our security definition and because it corresponds to 
the distance from uniform of the key from the eavesdropper's point of view, we will in the following 
call it \emph{the distance from uniform of $S$ given $Z(W)$ and $Q$}, where we write $Z(W)$ because the eavesdropper 
can choose the input adaptively and the choice of input changes the output distribution. 
\begin{definition}
The \emph{distance from uniform of $S$ given $Z(W)$ and $Q$} is 
\begin{eqnarray}
\nonumber
 d(S|Z(W),Q)
&=&1/2\cdot
\sum_{s,q}  \max_{w} \sum_{z} P_{Z,Q|W=w}(z,q)\cdot|P_{S|Z=z,Q=q,W=w}(s)-P_U|
\ .
\end{eqnarray}
\end{definition}

\section{Secrecy from a Single Box}\label{sec:limit_one_box}

Let us take a closer look at the simple case where the protocol $\pi$ directly takes the output of an imperfect 
PR box as a key. More explicitly, Alice and Bob share an imperfect 
PR box --- one that fulfills $P(X\oplus Y=U\cdot V)=1-\ep$ for uniform inputs. Alice and Bob use the box giving a random input and obtain an output. Then they announce their inputs over the public authentic channel, i.e., 
$Q:=(U=u,V=v)$.\footnote{We will, in a certain abuse of notation, allow  $Q$ to consist of both random variables and 
events that a random variable takes a given value. 
In case of such events $\{U=u\}$,  this means that the 
distance from uniform will hold \emph{given this specific value 
$u$},  whereas taking the expectation  over $Q$ will correspond to taking the expectation  over all the ``free'' random variables 
contained in $Q$.} 
We will show in this section that Eve can get some 
knowledge about Alice's outcome $X$ depending on $\ep$, but 
the distance from uniform from her point of view is limited  
by $2\ep$ (assuming she gets to know the input). 
\begin{lemma}\label{lemma:guessing_probability_single_box}
Assume a tripartite box $P_{XYZ|UVW}$ such that the marginal $P_{XY|UV}$ is a non-local box 
with 
$1/4\cdot \sum_{x\oplus y=u\cdot v}P_{XY|UV}(x,y,u,v)=1-\ep$ 
and $Q:=(U=u,V=v)$. Then 
\begin{eqnarray}
 \nonumber 
d(X|Z(W),Q)&\leq& 2\ep\ .
\end{eqnarray}
\end{lemma}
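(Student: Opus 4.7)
\medskip

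\noindent
\textbf{Proof plan.} The plan is to reduce the tripartite problem to a statement about a single non-signaling two-party box, applied to each component of the box partition that Eve induces, and then to average.

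\emph{Step 1: Reduce to a box partition.} For any choice of Eve's input $w$, the tripartite non-signaling condition implies that the distribution factorizes as $P_{XY|UV} = \sum_{z} p^{z}\, P^{z}_{XY|UV}$ with $p^z = P_{Z|W=w}(z)$, and that each component $P^{z}_{XY|UV}$ is itself a non-signaling box between Alice and Bob. Moreover, since Eve's choice of $w$ cannot affect $(U,V)$, we have $P_{Z,Q|W=w}(z,q) = p^{z}\cdot P_{Q}(q)$ and $P_{X|Z=z,Q=q,W=w}(x) = P^{z}_{X|U=u,V=v}(x)$ for $q=(U=u,V=v)$.

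\emph{Step 2: Bias bound for a single non-signaling sub-box.} For each $z$, let $\varepsilon_{z}(u,v) := P^{z}(X\oplus Y \neq u\cdot v \mid U=u,V=v)$ and $\varepsilon_{z} := \tfrac14 \sum_{u,v}\varepsilon_{z}(u,v)$. The key claim is that for every $(u,v)$,
\begin{equation*}
\bigl|\,P^{z}_{X|U=u,V=v}(0) - \tfrac12\,\bigr| \;\leq\; 2\varepsilon_{z}.
\end{equation*}
This is a strengthening of the table argument in the introduction to non-uniform CHSH violations. I would prove it by writing $a_{u} := P^{z}_{X|U=u,V=v}(0)$ and $b_{v} := P^{z}_{Y|U=u,V=v}(0)$ (these depend only on $u$ and only on $v$, respectively, by non-signaling of $P^{z}_{XY|UV}$), observing that $|P_{X}(0)-P_{Y}(0)| \leq P(X\neq Y)$ gives $|a_{u}-b_{v}| \leq \varepsilon_{z}(u,v)$ when $uv=0$, while the case $(u,v)=(1,1)$ (where the CHSH condition demands $X\neq Y$) yields $a_{1} + b_{1} \leq 1 + \varepsilon_{z}(1,1)$. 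Chaining these four inequalities through the "table" (starting from $a_{0}$, going via $b_{0}$, $a_{1}$, and separately via $b_{1}$, and closing at the $(1,1)$-corner) gives $2a_{0} - 1 \leq \varepsilon_{z}(0,0) + \varepsilon_{z}(1,0) + \varepsilon_{z}(0,1) + \varepsilon_{z}(1,1) = 4\varepsilon_{z}$, and symmetrically $1 - 2a_{0} \leq 4\varepsilon_{z}$.

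\emph{Step 3: Averaging.} Because the CHSH violation is affine in the box, for every $w$,
\begin{equation*}
\sum_{z} p^{z}\, \varepsilon_{z} \;=\; \tfrac14 \sum_{u,v} \sum_{z} p^{z}\, P^{z}(X\oplus Y\neq uv \mid u,v) \;=\; \tfrac14 \sum_{u,v} P(X\oplus Y\neq uv \mid u,v) \;=\; \varepsilon.
\end{equation*}

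\emph{Step 4: Assemble.} Substituting Step 1 into the definition of $d(X|Z(W),Q)$, pulling $P_{Q}(q)$ and the constant $\tfrac12 \sum_{x}(\cdot) = |P^{z}_{X|u,v}(0)-\tfrac12|$ through, and applying the bound from Step 2 followed by Step 3,
\begin{equation*}
d(X|Z(W),Q) \;=\; \max_{w} \sum_{z} p^{z}\, \bigl|P^{z}_{X|u,v}(0)-\tfrac12\bigr| \;\leq\; \max_{w} \sum_{z} p^{z}\cdot 2\varepsilon_{z} \;=\; 2\varepsilon.
\end{equation*}

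\emph{Main obstacle.} The only non-routine step is Step~2: the table argument as presented in the introduction is written for a uniform $\varepsilon$ in all four CHSH cells, and needs to be reorganized carefully to show that the bound holds with the \emph{average} $\varepsilon_{z}$ rather than the worst-case $\max_{u,v}\varepsilon_{z}(u,v)$. After that, Steps~1, 3, and 4 are bookkeeping using non-signaling and linearity.
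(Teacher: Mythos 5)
Your proposal is correct and follows essentially the same route as the paper: the paper's proof also generalizes the introductory table to four per-input CHSH errors $\ep_1,\dots,\ep_4$ to get the bias bound $p\le 1/2+\tfrac12\sum_i\ep_i=1/2+2\ep$ for a single non-signaling box (your Step 2 is this argument written as a chain of inequalities), then applies it to each element $P^z_{XY|UV}$ of the box partition and averages via $\ep=\sum_z p^z\ep_z$. No substantive difference.
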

\begin{proof}
Consider w.l.o.g. the case $X=0$. 
First we generalize the table from Section~\ref{nis}  to the case where $P(X\oplus Y\neq U\cdot V)=\ep$ on average  
(and it is not necessarily $\ep$ for every single input).
We call $\ep_i$ the probability not to fulfill the CHSH condition ($X\oplus Y\neq U\cdot V$) for the inputs $\{(0,0),(0,1),(1,0),(1,1)\}$ respectively. Suppose w.l.o.g. that the input was $(0,0)$, so $X$ should be maximally biased for this input.
\[
\begin{array}{|c||c|c||c|}
\hline
u & P_{X|U=u,V=v}(0) &  P_{Y|U=u,V=v}(0) & v \\
\hline
\hline
0 & p & p-\ep_1 & 0\\\hline
0 & p & p-\ep_2 & 1\\\hline
1 & p-\ep_1-\ep_3 & p-\ep_1 & 0\\\hline
1 & p-\ep_1-\ep_3 & p-\ep_2 & 1\\
\hline
\end{array}
\]
Because $P[X\oplus Y\neq U\cdot V|U,V=0,0]=\ep_1$, the bias of $Y$, given $U=V=0$, must be at least $p-\ep_1$. Because 
of non-signaling, $X$'s bias must be $p$ as well when $V=1$, and so on. Finally, 
$P[X\oplus Y\neq U\cdot V|U,V=1,1]=\ep_4$ implies
$
p-\ep_2-(1-(p-\ep_1-\ep_3))\leq \ep_4
$,
hence, 
$
p\leq 1/2+1/2\sum_i\ep_i=1/2+2\ep
$.
Now consider a box partition of $P_{XY|UV}$ parametrized by $z$. 
Let $\ep_z$ denote 
of the box given $Z=z$, i.e., $\ep_z=1/4\cdot \sum_i \ep_{i,z}$. Because this box must still be non-signaling, 
the bias of $X$ given $Z=z$, $U=u$ and $V=v$ is at most $2\ep_z$ by the above argument. 
However, because $P_{XY|UV}=\sum_z p^z\cdot P^z_{XY|UV}$, we also have $\ep=\sum_z p^z\cdot \ep_z$ and because this further
holds for all values of $X$, $
d(X|Z(W),Q)\leq 
 \sum_z p^z\cdot 2\ep_z
=2\ep$. 
 \pe
\end{proof}
\begin{remark}
Note that there exists a box partition which reaches this bound, and that can be found through a straight-forward maximization. The explicit calculations are given in Appendix~\ref{sec:single_box}. 
\end{remark}
Boxes $P_{XY|UV}$ that approximate a 
PR box with error  $\ep\in [0,0.25)$ are  \emph{non-local}. We see that for any non-local box, Eve cannot obtain perfect knowledge about Alice's output bit, and the box, therefore, contains some secrecy. 

\section{Privacy Amplification}\label{sec:xorpa}

In the following, we consider the case where Alice and Bob share $n$ imperfect PR boxes and the key is obtained by taking the XOR of all $n$ output bits. 
We will show in this section, that taking the XOR of the outputs of several boxes is a good privacy amplification function. At first, we will assume that the boxes as seen by Alice and Bob are $n$ independent and unbiased (i.e., for all inputs, the outputs $X$ and $Y$ are equally likely to be $0$ or $1$) boxes each with an associated error $\ep_i$ (which is the same for all inputs).
Then, we will show that this also holds for the case when 
Alice and Bob share a convex combination of independent 
unbiased boxes. Indeed, Alice and Bob can apply a local 
mapping to their inputs and outputs to obtain a marginal box 
that is the convex combination of several independent and 
unbiased boxes and, therefore, enforce this 
situation~\cite{mag,mrwbc}. The details of this local 
mapping --- called depolarization --- are described in 
Appendix~\ref{sec:depol}. 
Finally, we completely remove the criterion of 
independence and show that the distance from uniform of the 
XOR of the outcomes of any $(2n+1)$-non-signaling system 
having binary inputs and outputs cannot be larger than what 
could be obtained from its ``depolarized'' version. 

The main result of this section will be the following lemma.  
\begin{lemma}\label{lemma:xor_good_pa_fct}
Assume a $(2n+1)$-partite box $P_{\bof{XY}Z|\bof{UV}W}$ such that the marginal $P_{\bof{XY}|\bof{UV}}$ corresponds to $n$ independent and  unbiased non-local boxes each with an associated error $\ep_i$. Assume $f(\bof{X}):=\bigoplus_i X_i$ and $Q:=(\bof{U}=\bof{u},\bof{V}=\bof{v},F=\bigoplus)$. Then
\begin{eqnarray}
d(f(\bof{X})|Z(W),Q)&\leq& 1/2\cdot \prod\nolimits_i(4\ep_i)\ \ \left( \leq 1/2\cdot (4\ep)^n\right)\ ,
\end{eqnarray}
where $\ep=\frac{1}{n}\sum_i \ep_i$.
\end{lemma}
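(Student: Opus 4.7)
The plan is to proceed by induction on $n$, using Lemma~\ref{lemma:guessing_probability_single_box} as the base case. For $n=1$ the claim reads $d(X_1 \mid Z(W), Q) \leq (1/2)(4\ep_1) = 2\ep_1$, which is exactly the single-box bound already proved.

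For the inductive step, I would fix the inputs $\bof{u}, \bof{v}$ and consider an arbitrary strategy of Eve, i.e., a box partition $P_{\bof{XY}|\bof{UV}} = \sum_z p^z P^z_{\bof{XY}|\bof{UV}}$ in which each $P^z$ is $2n$-partite non-signaling. Because the marginal factors as $\prod_i P_i$, I would first argue that Eve's partition can be reorganized without loss into a sequential form indexed by $z = (z_1, z_2)$: $z_1$ specifies a partition $P_1 = \sum_{z_1} p^{z_1} P^{z_1}_1$ of the first box alone, and $z_2$ specifies a (possibly $z_1$-dependent) further partition of the residual product $\prod_{i \geq 2} P_i$. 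In this form the conditional distribution at $(z_1, z_2)$ factors as $P^{z_1}_1 \otimes P^{z_1,z_2}_{\geq 2}$, so $X_1$ and $\bigoplus_{i \geq 2} X_i$ are independent, and writing $b_{1,z_1} := |P^{z_1}_1(X_1 = 0) - 1/2|$ and $b_{2,z_1,z_2} := |P^{z_1,z_2}_{\geq 2}(\bigoplus_{i\geq 2} X_i = 0) - 1/2|$, the standard XOR identity for independent bits gives that the total XOR bias under $P^{z_1,z_2}$ equals $2\, b_{1,z_1}\, b_{2,z_1,z_2}$.

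The single-box Lemma~\ref{lemma:guessing_probability_single_box} then bounds $\sum_{z_1} p^{z_1} b_{1,z_1} \leq 2\ep_1$, and the induction hypothesis applied to the residual product of $n-1$ independent unbiased boxes (whose errors $\ep_2,\ldots,\ep_n$ are unchanged) bounds $\sum_{z_2} p^{z_2|z_1} b_{2,z_1,z_2} \leq (1/2)\prod_{i\geq 2}(4\ep_i)$ uniformly in $z_1$. Assembling these two linear averagings,
\begin{align*}
d(f(\bof{X}) \mid Z(W), Q) &\leq \sum_{z_1, z_2} p^{z_1} p^{z_2|z_1} \cdot 2\, b_{1,z_1}\, b_{2,z_1,z_2} \\
&\leq 2 \cdot \tfrac{1}{2}\prod_{i\geq 2}(4\ep_i) \cdot \sum_{z_1} p^{z_1} b_{1,z_1} \leq 2\ep_1 \prod_{i \geq 2}(4\ep_i) = \tfrac{1}{2}\prod_i (4\ep_i),
\end{align*}
and the secondary bound $(1/2)(4\ep)^n$ then follows from the AM-GM inequality on $\ep_1, \ldots, \ep_n$.

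The main technical obstacle is justifying the sequential-decomposition step: a generic $2n$-partite non-signaling $P^z$ need not itself factor across the $n$ boxes, and one must argue that the product structure of the marginal, together with the $2n$-wise non-signaling constraints of Condition~\ref{condition:ns2} on each $P^z$, forces every box partition of $\prod_i P_i$ to be refinable into a first-box partition followed by a $z_1$-conditional partition of the residual product. Once this reduction is in place, the bias identity $2 b_1 b_{\geq 2}$ for independent XOR and the two linear averagings close the induction. Without this reduction one would have to bound the XOR bias of a generic $2n$-partite non-signaling box directly in terms of its marginal errors $\ep_{i,z}$, and since the product $\prod_i (4\ep_{i,z})$ is not concave in the $\ep_{i,z}$, a naive bound-and-average would fail to yield the clean product $\prod_i (4\ep_i)$.
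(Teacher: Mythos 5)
Your inductive approach is genuinely different from the paper's, which casts the maximum achievable $d(\bigoplus_i X_i \mid Z(W), Q)$ as a linear program, shows (Lemma~\ref{lemma:product_form}) that the LP for $n$ boxes is exactly the $n$-fold tensor product of the single-box LP, and then certifies the bound by exhibiting the dual-feasible vector $\lambda_1^{\otimes n}$ (Lemma~\ref{lemma:dual_product}). That argument never needs to understand the structure of Eve's optimal attack. Your induction, by contrast, needs exactly such structural understanding, and the reduction you yourself flag as the ``main technical obstacle'' is not a removable technicality: it is the content of the lemma. Nothing forces the elements $P^z$ of a box partition of $\prod_i P_i$ to factor across the $n$ boxes merely because their average does, and the paper's Appendix~\ref{sec:better_collective_attack} exhibits genuinely collective box partitions that strictly outperform any individual (product) attack for other figures of merit. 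That the XOR in particular is as hard to bias collectively as individually is precisely what must be proved; asserting a sequential, product-factorizing refinement at the outset is circular.

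There is also a second gap you do not flag: even granting a sequential form $P^{z_1,z_2} = P^{z_1}_1 \otimes P^{z_1,z_2}_{\geq 2}$, the residual box $\sum_{z_2} p^{z_2 \mid z_1} P^{z_1,z_2}_{\geq 2}$ conditioned on a fixed $z_1$ need not equal $\prod_{i\geq 2} P_i$ with the original errors $\ep_2,\ldots,\ep_n$; the non-signaling constraint only fixes the unconditional average $\sum_{z_1,z_2} p^{z_1,z_2} P^{z_1,z_2}_{\geq 2}$. So the induction hypothesis does not apply ``uniformly in $z_1$'' --- the residual errors $\ep_{i,z_1}$ may depend on $z_1$ --- and then, as you correctly note at the end, the non-concavity of $\prod_{i\geq 2}(4\ep_{i,z_1})$ blocks the final averaging. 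The paper's LP-duality route is exactly the device that sidesteps both obstructions at once: dual feasibility is a linear condition invariant under changing the marginal $c$, the tensor product $\lambda_1^{\otimes n}$ of single-box dual certificates remains feasible for the $n$-box LP, and its objective value $\prod_i(4\ep_i)$ upper-bounds the primal no matter how Eve correlates the $n$ boxes.
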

Note that this value can easily be reached by attacking each box independently, such as given in  Appendix~\ref{sec:single_box}, and this bound is, therefore, tight. 

For the proof of Lemma~\ref{lemma:xor_good_pa_fct} we will proceed in several steps. First, we show that the problem of finding the maximum distance from uniform of the XOR of several output bits can be cast as a linear optimization problem. Then, we show that this linear program describing $n$ boxes can be seen as the $n$-wise tensor product of the linear program describing 
a single box --- this is the crucial step. 
By using the product form of the linear program we can then show that there exists a dual feasible solution --- i.e., an upper-bound on the distance from uniform --- reaching the above value. 
\\

First we note that the maximal possible non-uniformity of the XOR of the output bits can be obtained by a box partition with only two outputs, $0$ and $1$.
\begin{lemma}\label{lemma:p_half}
Assume there exists a box partition with $d(\bigoplus_i X_i|Z'(W),Q)$. Then there exists a box partition with the same distance from uniform with $Z\in \{0,1\}$.
\end{lemma}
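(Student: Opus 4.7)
The plan is to coarsen Eve's output alphabet to $\{0,1\}$ by grouping the original outputs according to which value of $\bigoplus_i X_i$ they best predict, while enlarging Eve's input alphabet so that this grouping can be chosen adaptively based on the public information $Q$. Concretely, starting from the tripartite box $P_{\bof{XY}Z'|\bof{UV}W}$ that achieves the given distance $d(\bigoplus_i X_i|Z'(W),Q)$, I would define a new box whose input at Eve's interface ranges over $\tilde{\mathcal{W}} := \mathcal{W} \times \{0,1\}^{\mathcal{Z}'}$ and whose output $Z\in\{0,1\}$ is given by
\[
P_{\bof{XY}Z|\bof{UV}\tilde{W}=(w,h)}(\bof{x},\bof{y},z,\bof{u},\bof{v}) \ :=\ \sum_{z'\in h^{-1}(z)} P_{\bof{XY}Z'|\bof{UV}W=w}(\bof{x},\bof{y},z',\bof{u},\bof{v})\ .
\]
Checking that this is a valid $(2n+1)$-partite non-signaling box is routine: summing over $z\in\{0,1\}$ collapses to a sum over all $z'\in\mathcal{Z}'$ and so recovers the original bipartite marginal $P_{\bof{XY}|\bof{UV}}$ at Alice's and Bob's interfaces, while summing over $(\bof{x},\bof{y})$ yields $\sum_{z'\in h^{-1}(z)} P_{Z'|W=w}(z')$, which is independent of $(\bof{u},\bof{v})$ by the original non-signaling condition.

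To show that the new box attains the same distance from uniform, I would, for each value $q$, pick the input $\tilde{w}(q):=(w^*(q),h_q)$, where $w^*(q)$ is optimal in Corollary~\ref{def:dist_advantage} for the original box and $h_q(z'):=0$ whenever $P(\bigoplus_i X_i=0\mid Z'=z',Q=q,W=w^*(q))\ge 1/2$ and $h_q(z'):=1$ otherwise. The key calculation is that $p^z_{\tilde{w}(q)}\cdot |P(\bigoplus_i X_i=0\mid Z=z,Q=q,\tilde{W}=\tilde{w}(q))-1/2|$ equals $\bigl|\sum_{z'\in h_q^{-1}(z)} p^{z'}_{w^*(q)}\,[P(\bigoplus_i X_i=0\mid z',q,w^*(q))-1/2]\bigr|$; since $h_q$ was chosen so that all $z'$ inside a single group have $P(\bigoplus_i X_i=0\mid z')-1/2$ of one common sign, the absolute value can be pushed inside, and summing the $z=0$ and $z=1$ contributions recovers $\sum_{z'} p^{z'}_{w^*(q)}\,|P(\bigoplus_i X_i=0\mid z',q,w^*(q))-1/2|$, exactly the original contribution at $q$. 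Since the new $\max_{\tilde{w}}$ can only match or exceed this, and binarization can only decrease the quantity for any fixed underlying $w$, the two distances coincide.

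The one real subtlety --- and the reason a naive $q$-independent output grouping fails --- is that the optimal binarization $h_q$ genuinely depends on $q$, so one cannot simply hard-code a fixed map $\mathcal{Z}'\to\{0,1\}$ into Eve's box without potentially losing distance. Carrying the choice of $h$ inside Eve's input is legitimate because the outer maximum in Corollary~\ref{def:dist_advantage} is taken separately for each $q$, i.e., Eve is already allowed to choose her input as a function of $q$; this is precisely the adaptivity the construction needs.
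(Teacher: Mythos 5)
Your proposal is correct and is essentially the paper's own argument: the paper likewise merges the elements $z'$ of the box partition into two groups according to whether $P[\bigoplus_i X_i=0\mid Z=z']>1/2$, takes the convex combination within each group, and relies on the fact that all terms in a group have the same sign of bias so the distance from uniform is preserved. Your extra care about making the binarization $h_q$ depend on $q$ via an enlarged input alphabet is a legitimate formalization of the same point (in the relevant application $Q$ is anyway fixed to specific input values, so this adaptivity is implicit in the paper's shorter proof).
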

\begin{proof}
Assume that the box partition has more than two elements. Define two new elements \linebreak[4] $(p^{Z=0},P_{\bof{XY}|\bof{UV}}^{Z=0})$ by
\begin{eqnarray}
\nonumber p^{Z=0}&:=&p^{z'_1}+\cdots+ p^{z'_m}\\
\nonumber P_{\bof{XY}|\bof{UV}}^{Z=0}&:=&\frac{1}{p^{z_1}}\sum\limits_{i=1}^m p^{z'_i}P_{\bof{XY}|\bof{UV}}^{z'_i},
\end{eqnarray}
where the set $z'_1,\ldots,z'_m$ is defined to consist of the boxes such that $P[\bigoplus_i X_i=0|Z=z'_i]> 1/2$. Similarly define 
$(p^{Z=1},P_{\bof{XY}|\bof{UV}}^{Z=1})$ as the convex combination of the remaining elements of the box partition. 
Because the spaces of boxes is convex, this forms again a valid box partition and it has the same distance. 
\pe
\end{proof}

It is, therefore, sufficient to consider a box partition with only two elements
$z=0$ and $z=1$. 
However, given one element of the box partition $(p,P_{\bof{XY}|\bof{UV}}^{Z=0})$, the second element 
$(1-p,P_{\bof{XY}|\bof{UV}}^{Z=1})$ is determined, 
because their convex combination forms the marginal box, $P_{\bof{XY}|\bof{UV}}$. 
\begin{lemma}\label{lemma:termsz0}
Assume a box partition $\bar{w}$ with element $(p,P_{\bof{XY}|\bof{UV}}^{Z=0})$ and 
an unbiased bit $S=f(\bof{X})$ such that w.l.o.g. 
{$P[S=0|Z=0,Q]\geq 1/2$}. 
Then the distance from uniform of $S$ given the box partition $\bar{w}$ and $Q=(\bof{U}=\bof{u},\bof{V}=\bof{v},F=f)$ is 
\begin{eqnarray}
\nonumber d(S|Z(\bar{w}),Q)&=&
2\cdot 
p\cdot (P[S=0|Z=0,Q]-1/2)\ .
\end{eqnarray}
\end{lemma}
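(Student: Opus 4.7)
My plan is to unfold the definition of distance from uniform for the given fixed box partition $\bar{w}$, and then exploit the fact that $S$ is a single unbiased bit to show that all four terms in the resulting sum collapse into a single quantity.

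First, I would observe that since $Q = (\bof{U}=\bof{u}, \bof{V}=\bof{v}, F=f)$ fixes specific values, and $\bar{w}$ is a fixed (not optimized) choice, the distance from uniform reduces to
\[
d(S|Z(\bar{w}),Q) = \frac{1}{2}\sum_{s\in\{0,1\}}\sum_{z\in\{0,1\}} P_{Z|W=\bar{w},Q}(z)\cdot \bigl|P_{S|Z=z,W=\bar{w},Q}(s)-\tfrac{1}{2}\bigr|.
\]
By the non-signaling condition between Eve's interface and Alice/Bob's interfaces (Condition~\ref{condition:ns2}), the marginal distribution of $Z$ does not depend on the inputs $(\bof{u},\bof{v})$, so $P_{Z|W=\bar{w},Q}(0)=p$ and $P_{Z|W=\bar{w},Q}(1)=1-p$, matching the box-partition weights.

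Next I would introduce the shorthand $\alpha := P[S=0 \mid Z=0, W=\bar{w}, Q]$ and $\beta := P[S=0 \mid Z=1, W=\bar{w}, Q]$. The hypothesis $\alpha \geq 1/2$ is given. The key constraint comes from the assumption that $S$ is unbiased overall: the convex combination must satisfy $p\alpha + (1-p)\beta = 1/2$, which rearranges to $(1-p)(\tfrac{1}{2}-\beta) = p(\alpha - \tfrac{1}{2})$. In particular $\beta \leq 1/2$, so the absolute values unfold as $|\alpha - 1/2| = \alpha - 1/2$ and $|1/2 - \beta| = 1/2 - \beta$. Since $S$ is a bit, $|P[S=1 \mid \cdot] - 1/2| = |P[S=0 \mid \cdot] - 1/2|$ in each case, so the sum over $s$ simply doubles each term.

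Plugging these in, the expression becomes
\[
d(S|Z(\bar{w}),Q) = p\bigl(\alpha-\tfrac{1}{2}\bigr) + (1-p)\bigl(\tfrac{1}{2}-\beta\bigr) = 2p\bigl(\alpha-\tfrac{1}{2}\bigr),
\]
where the second equality uses the constraint derived above. This yields exactly the claimed expression $2\cdot p\cdot (P[S=0\mid Z=0,Q]-1/2)$. There is no real obstacle here; the main thing to be careful about is invoking non-signaling to justify that $P[Z=z\mid W=\bar{w},Q] = p^z$ does not depend on the announced inputs (so that the box partition's weight $p$ really is the conditional weight appearing in the distance), and verifying the unbiasedness constraint relating the two branches of the partition.
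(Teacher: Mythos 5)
Your proof is correct and takes essentially the same route as the paper's: unfold the definition of the distance from uniform for the fixed two-element partition, use $P_Z(0)=p$, note the $s=0$ and $s=1$ terms coincide because $S$ is a bit, and apply the unbiasedness constraint $p\,\alpha+(1-p)\beta=1/2$ to collapse the two branches into $2p(\alpha-1/2)$. The paper's proof is just this computation written tersely (it substitutes $\beta=\frac{1/2-p\alpha}{1-p}$ directly); your version spells out the non-signaling justification for $P_{Z|W=\bar w,Q}(z)=p^z$ and the sign of each absolute-value term, both of which are correct and left implicit in the paper.
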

\begin{proof}
 \begin{eqnarray}
\nonumber d(S|Z(\bar{w}),Q)&=&
p\cdot 
(P[S=0|Z=0,Q]-1/2
)
\\\nonumber &&+
(1-p)\cdot(-1) 
( \frac{1/2-p\cdot P[S=0|Z=0,Q]}{1-p}-1/2
)\\
\nonumber&=&
2\cdot 
p\cdot (P[S=0|Z=0,Q]-1/2)\ .
\end{eqnarray}
\peon
\end{proof}

The above lemmas imply that finding the distance from uniform is equivalent to finding the ``best'' element of a box partition 
$(p,P_{\bof{XY}|\bof{UV}}^{Z=0})$. When can $(p,P_{\bof{XY}|\bof{UV}}^{Z=0})$ be element of a box partition? 
The criterion is given in Lemma~\ref{lemma:zweihi}.
\begin{lemma}\label{lemma:zweihi}
Given a box $P_{\bof{XY}|\bof{UV}}$, there exists a box partition with element 
$(p,P_{\bof{XY}|\bof{UV}}^{Z=0})$ if and only if for all inputs and outputs $\bof{x},\bof{y},\bof{u},\bof{v}$,
\begin{eqnarray}
\label{zcanoccurwithp}  p\cdot P_{\bof{XY}|\bof{UV}}^{Z=0}(\bof{xy}|\bof{uv}) & \leq P_{\bof{XY}|\bof{UV}}(\bof{xy}|\bof{uv})\ .
\end{eqnarray}
\end{lemma}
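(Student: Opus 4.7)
\medskip

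\textbf{Proof proposal.} The plan is to prove both directions by exhibiting an explicit two-element decomposition. The forward implication is essentially definitional, while the reverse requires constructing the ``complementary'' box $(1-p, P^{Z=1}_{\bof{XY}|\bof{UV}})$ and checking it is a valid non-signaling box.

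For the $(\Rightarrow)$ direction, I would simply observe that if $(p, P^{Z=0}_{\bof{XY}|\bof{UV}})$ belongs to some box partition $\{(p^z, P^z_{\bof{XY}|\bof{UV}})\}_z$ of $P_{\bof{XY}|\bof{UV}}$, then for every $\bof{x},\bof{y},\bof{u},\bof{v}$,
\[
P_{\bof{XY}|\bof{UV}}(\bof{xy}|\bof{uv})
\;=\; p\cdot P^{Z=0}_{\bof{XY}|\bof{UV}}(\bof{xy}|\bof{uv})
\;+\;\sum_{z\neq 0} p^z\cdot P^{z}_{\bof{XY}|\bof{UV}}(\bof{xy}|\bof{uv}),
\]
and since the second sum is a non-negative combination of conditional probabilities, inequality~(\ref{zcanoccurwithp}) is immediate.

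For the $(\Leftarrow)$ direction, assume~(\ref{zcanoccurwithp}) holds. The trivial cases $p=0$ (take any partition) and $p=1$ (which by (\ref{zcanoccurwithp}) forces $P^{Z=0}=P$) are handled separately; otherwise I would define a candidate partition with two elements $\{(p,P^{Z=0}),(1-p,P^{Z=1})\}$, where
\[
P^{Z=1}_{\bof{XY}|\bof{UV}}(\bof{xy}|\bof{uv})
\;:=\;\frac{P_{\bof{XY}|\bof{UV}}(\bof{xy}|\bof{uv})\;-\;p\cdot P^{Z=0}_{\bof{XY}|\bof{UV}}(\bof{xy}|\bof{uv})}{1-p}.
\]
Then I would verify the three properties that make this a legitimate box partition: (i) non-negativity of $P^{Z=1}$, which is exactly hypothesis~(\ref{zcanoccurwithp}) after dividing by $1-p>0$; (ii) normalization, which follows because summing $P$ and $P^{Z=0}$ over outputs each give $1$, so the numerator sums to $1-p$; and (iii) convexity gives $P = p\,P^{Z=0}+(1-p)\,P^{Z=1}$ by construction.

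The only non-routine point, and the one I would flag as the main check, is showing that $P^{Z=1}$ is non-signaling between all $2n$ interfaces (as required for it to qualify as a box in the sense of Condition~\ref{condition:ns2}). This follows from the linearity of the non-signaling constraints: summing the numerator of $P^{Z=1}$ over any output $x_i$ (resp.\ $y_i$) gives the difference of two quantities, each of which is independent of the corresponding input $u_i$ (resp.\ $v_i$) because both $P$ and $P^{Z=0}$ are non-signaling by assumption. Dividing by the input-independent constant $1-p$ preserves this property, so $P^{Z=1}$ inherits non-signaling from $P$ and $P^{Z=0}$, completing the argument.
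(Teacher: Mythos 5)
Your proof is correct and follows essentially the same route as the paper: both define the complementary box $P^{Z=1}=(P-p\,P^{Z=0})/(1-p)$ and verify non-negativity (equivalent to the stated inequality), normalization, and non-signaling by linearity. Your version is slightly more explicit in spelling out the trivial $(\Rightarrow)$ direction and the boundary cases $p\in\{0,1\}$, but the substance is identical.
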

\begin{proof}
The non-signaling condition is linear and the space of conditional probability distributions is convex, therefore a 
convex combination of  valid boxes $P_{\bof{XY}|\bof{UV}}^{Z=z}$ is again a valid box. To prove that the outcome $z=0$ can occur with probability $p$ it is, therefore, sufficient to show that there exists another valid  outcome $z=1$ which can occur with $1-p$, and that the weighted sum of the two is $P_{\bof{XY}|\bof{UV}}$. If $P_{\bof{XY}|\bof{UV}}^{Z=0}$ is a normalized and non-signaling probability distribution, then so is $P_{\bof{XY}|\bof{UV}}^{Z=1}$, because the sum of the two, $P_{\bof{XY}|\bof{UV}}$, is also non-signaling and normalized. Therefore, we only need to verify that all entries of the complementary box $P_{\bof{XY}|\bof{UV}}^{Z=1}$ are between $0$ and $1$.  However, this  box is the difference
\begin{eqnarray}
\nonumber P_{\bof{XY}|\bof{UV}}^{Z=1}=\frac{1}{1-p}(P_{\bof{XY}|\bof{UV}}-p\cdot P_{\bof{XY}|\bof{UV}}^{Z=0})\ .
\end{eqnarray}
Requesting this to be greater or equal to $0$ is equivalent to (\ref{zcanoccurwithp}). We observe that all entries of $P_{\bof{XY}|\bof{UV}}^{Z=1}$ are now trivially 
smaller than or equal to  $1$ because of the normalization: if the sum of positive summands is 1, each of them can be at most 1. \pe
\end{proof}

We can now show that the maximal distance from uniform which can be reached by a non-signaling adversary is the solution 
of a linear programming problem (see Appendix~\ref{sec:lin_prog} for details on linear programming).\footnote{In the following we drop 
the indices of the probability distributions as they should be clear from the context.} 
We introduce a new variable $\Delta$, which is a 
vector such that with each value of $\bof{x},\bof{y},\bof{u},\bof{v}$, we can associate an entry of $\Delta$ and we write 
$\Delta(\bof{x},\bof{y}|\bof{u},\bof{v})$ for this entry. $\Delta$ can be seen as a probability distribution describing a box, where the distribution need not be normalized nor positive. 
\begin{lemma}\label{lemma:distanceislp}
The distance from uniform of $\bigoplus_i X_i$ given $Z(W)$ and $Q:=(\bof{U}=\bof{u},\bof{V}=\bof{v},F=\bigoplus)$ is 
\begin{eqnarray}
\nonumber d(\bigoplus\nolimits_i X_i|Z(W),Q)&=& 1/2\cdot b^T\cdot \Delta^*\ ,
\end{eqnarray}
where $b^T\cdot \Delta^*$ is the optimal value of the linear program
\begin{eqnarray}
\label{eq:primal2} \text{max: }&& \sum_{(\bof{x},\bof{y}):f(\bof{x})=0} \Delta(\bof{xy}|\bof{uv})-\sum_{(\bof{x},\bof{y}):f(\bof{x})=1} \Delta(\bof{xy}|\bof{uv})\\
\nonumber \text{s.t.: }&& \sum_{\bof{x}}\Delta(\bof{xy}|\bof{uv})-\sum_{\bof{x}} \Delta(\bof{xy}|\bof{u'v})=0\ \forall \bof{y},\bof{v},\bof{u},\bof{u'}\ \text{ (non-signaling from Alice to Bob) }\\
\nonumber && \sum_{\bof{y}}\Delta(\bof{xy}|\bof{uv})-\sum_{\bof{y}} \Delta(\bof{xy}|\bof{uv'})=0\ \forall \bof{x},\bof{u},\bof{v},\bof{v'}\ \text{ (non-signaling from Bob to Alice) }\\
\nonumber && \Delta(\bof{xy}|\bof{uv}) \leq P(\bof{xy}|\bof{uv})\ \ \forall \bof{x},\bof{y},\bof{u},\bof{v}\\
\nonumber && \Delta(\bof{xy}|\bof{uv}) \geq -P(\bof{xy}|\bof{uv})\ \ \forall \bof{x},\bof{y},\bof{u},\bof{v}
\end{eqnarray}
\end{lemma}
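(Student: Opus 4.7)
The plan is to recast the supremum over all box partitions of the distance from uniform as a linear program in the variables $\Delta(\bof{xy}|\bof{uv})$. First, by Lemma~\ref{lemma:p_half} I may restrict without loss of generality to 2-outcome box partitions $\{(p,P^{Z=0}),(1-p,P^{Z=1})\}$, with the second branch determined by $P=pP^{Z=0}+(1-p)P^{Z=1}$. Lemma~\ref{lemma:zweihi} characterises admissibility as $p\,P^{Z=0}\le P$ coordinatewise, and Lemma~\ref{lemma:termsz0} gives the resulting distance as $d=2p(P[\bigoplus_i X_i=0|Z=0,Q]-1/2)$, after choosing the orientation so this is non-negative.

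The key step is to parametrise admissible partitions by
\[
\Delta(\bof{xy}|\bof{uv}):=p\,P^{Z=0}(\bof{xy}|\bof{uv})-(1-p)\,P^{Z=1}(\bof{xy}|\bof{uv}),
\]
and verify that this defines a bijection onto the LP's feasible polytope. The non-signaling equalities on $\Delta$ are inherited linearly from those on the two branch boxes; the inequalities $-P\le\Delta\le P$ reduce to $2pP^{Z=0}\ge 0$ and $2(1-p)P^{Z=1}\ge 0$, both automatic. For the inverse direction, from a feasible $\Delta$ I recover $p:=\tfrac12(1+\sum_{\bof{x},\bof{y}}\Delta(\bof{xy}|\bof{uv}))$ for any fixed $(\bof{u},\bof{v})$, together with $P^{Z=0}:=(P+\Delta)/(2p)$ and $P^{Z=1}:=(P-\Delta)/(2(1-p))$. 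This yields a valid partition once one checks that the two non-signaling conditions force $\sum_{\bof{x},\bof{y}}\Delta(\bof{xy}|\bof{uv})$ to be independent of $(\bof{u},\bof{v})$, so that $p$ is well-defined, and that $p\in[0,1]$ follows from $|\Delta|\le P$.

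Finally, I evaluate the LP objective under this correspondence:
\[
\sum_{f(\bof{x})=0}\Delta-\sum_{f(\bof{x})=1}\Delta \;=\; p\bigl(2P[\bigoplus_i X_i=0|Z=0,Q]-1\bigr)-(1-p)\bigl(2P[\bigoplus_i X_i=0|Z=1,Q]-1\bigr).
\]
Because the boxes considered in this section are unbiased, $\bigoplus_i X_i$ is marginally uniform given $Q$, so the two summands are equal and the expression collapses to $4p(P[\bigoplus_i X_i=0|Z=0,Q]-1/2)=2d$. Taking the maximum of the feasible region on each side then yields $d=\tfrac12\,b^T\Delta^*$.

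The principal obstacle is the choice of a symmetric parametrisation: $\Delta=pP^{Z=0}-(1-p)P^{Z=1}$ is what gives the LP its $\Delta\mapsto-\Delta$ symmetry, which will be essential for the tensor-product argument used in the subsequent steps of Lemma~\ref{lemma:xor_good_pa_fct}, but it has the side effect that $\Delta$ encodes both branches of the partition simultaneously, producing the factor of $2$ between the LP objective and the distance. Apart from this, the verifications are routine; the only mildly subtle point is the input-independence of $\sum_{\bof{x},\bof{y}}\Delta(\bof{xy}|\bof{uv})$, which is what makes the inverse map giving $p$ well-defined.
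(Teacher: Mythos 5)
Your proposal is correct and follows essentially the same route as the paper: your parametrisation $\Delta = p\,P^{Z=0}-(1-p)\,P^{Z=1}$ is algebraically identical to the paper's $\Delta = 2p\,P^{Z=0}-P$ (since $(1-p)P^{Z=1}=P-pP^{Z=0}$), the inverse map via $p=\tfrac12(1+\sum_{\bof{x},\bof{y}}\Delta)$ is the same, and the evaluation of the objective as twice the distance of Lemma~\ref{lemma:termsz0} matches, including the implicit use of the unbiasedness of $\bigoplus_i X_i$. All the checks you flag (input-independence of $\sum\Delta$ via non-signaling, $p\in[0,1]$ from $|\Delta|\le P$, non-negativity and normalisation of the recovered boxes) are exactly the ones the paper carries out.
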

\begin{proof}
We show that every element of a box partition $(p,P_{\bof{XY}|\bof{UV}}^{Z=0})$ corresponds to a feasible $\Delta$ and \emph{vice versa}.\\
Assume an element of a box partition $(p,P_{\bof{XY}|\bof{UV}}^{Z=0})$ and define
\begin{eqnarray}
\nonumber \Delta(\bof{xy}|\bof{uv})&=&2 p\cdot  P^{Z=0}(\bof{xy}|\bof{uv})-P(\bof{xy}|\bof{uv})\ .
\end{eqnarray}
$\Delta$ fulfills the non-signaling conditions by linearity. Further $p\geq 0$ and $P^{Z=0}(\bof{xy}|\bof{uv})\geq 0$ imply $\Delta(\bof{xy}|\bof{uv})\geq -P(\bof{xy}|\bof{uv})$ and $p\cdot P^{Z=0}(\bof{xy}|\bof{uv}) \leq P(\bof{xy}|\bof{uv})$ implies $\Delta(\bof{xy}|\bof{uv})\leq P(\bof{xy}|\bof{uv})$. $\Delta$ is, therefore, feasible. \\
To see the reverse direction, assume a feasible $\Delta$. Define
\begin{eqnarray}
\nonumber p&=&1/2\cdot (1+\sum_{\bof{xy}}\Delta(\bof{xy}|0\ldots 00\ldots 0))\\
\nonumber P^{Z=0}(\bof{xy}|\bof{uv})&=&\frac{P(\bof{xy}|\bof{uv})+\Delta(\bof{xy}|\bof{uv})}{2p}\ .
\end{eqnarray}
(For completeness, define $P^{Z=0}(\bof{xy}|\bof{uv})=P(\bof{xy}|\bof{uv})$ in case $p=0$.)
To see that $(p,P_{\bof{XY}|\bof{UV}}^{Z=0})$ is element of a box partition note that 
$\sum_{\bof{xy}}\Delta(\bof{xy}|0\ldots 00\ldots 0)=\sum_{\bof{xy}}\Delta(\bof{xy}|\bof{u'v'})$ for all $\bof{u'},\bof{v'}$ because of the non-signaling constraints. I.e., $p$ is independent of the chosen input and the transformation is, therefore, linear. This implies that $P^{Z=0}$ is still non-signaling. Because 
\begin{eqnarray}
\nonumber \sum_{\bof{xy}}P^{Z=0}(\bof{xy}|\bof{uv})=\sum_{\bof{xy}}\frac{P(\bof{xy}|\bof{uv})+\Delta(\bof{xy}|\bof{uv})}{2p}=\frac{1+(2p-1)}{2p}
=1
\end{eqnarray}
it is normalized. Because $-P(\bof{xy}|\bof{uv})\leq \Delta(\bof{xy}|\bof{uv})\leq P(\bof{xy}|\bof{uv})$ and $\sum_{\bof{xy}}P(\bof{xy}|\bof{uv})=1$, we have $-1\leq \sum_{\bof{xy}}\Delta(\bof{xy}|0\ldots 00\ldots 0)\leq 1$ and this implies $P^{Z=0}(\bof{xy}|\bof{uv})\geq 0$
i.e., $P^{Z=0}_{\bof{XY}|\bof{UV}}$ is a box. By Lemma~\ref{lemma:zweihi}, $(p,P_{\bof{XY}|\bof{UV}}^{Z=0})$ is element of a box partition because 
\begin{eqnarray}
\nonumber p\cdot P^{Z=0}(\bof{xy}|\bof{uv})&=& 1/2\cdot (1+\sum_{\bof{xy}}\Delta(\bof{xy}|0\ldots 00\ldots 0))\cdot \frac{P(\bof{xy}|\bof{uv})+\Delta(\bof{xy}|\bof{uv})}{1+\sum_{\bof{xy}}\Delta(\bof{xy}|0\ldots 00\ldots 0)}\\
\nonumber &=& 1/2\cdot (P(\bof{xy}|\bof{uv})+\Delta(\bof{xy}|\bof{uv}))\leq P(\bof{xy}|\bof{uv})\ .
\end{eqnarray}
Finally, we show that the value of the objective function for any $\Delta$ is exactly twice the distance from uniform reached by the box partition with element $(p,P^{Z=0}_{\bof{XY}|\bof{UV}})$:
\begin{eqnarray}
\nonumber &&\sum_{(\bof{x},\bof{y}):f(\bof{x})=0} \Delta(\bof{xy}|\bof{uv})-\sum_{(\bof{x},\bof{y}):f(\bof{x})=1} \Delta(\bof{xy}|\bof{uv})\\
\nonumber &=& 
\sum_{(\bof{x},\bof{y}):f(\bof{x})=0} \left(2p\cdot P^{Z=0}(\bof{xy}|\bof{uv})-P(\bof{xy}|\bof{uv})\right)-\sum_{(\bof{x},\bof{y}):f(\bof{x})=1}  \left(2p\cdot P^{Z=0}(\bof{xy}|\bof{uv})-P(\bof{xy}|\bof{uv})\right)\\
\nonumber &=& 
2p\cdot \left(\sum_{(\bof{x},\bof{y}):f(\bof{x})=0} 
P^{Z=0}(\bof{xy}|\bof{uv})-
\sum_{(\bof{x},\bof{y}):f(\bof{x})=1} 
P^{Z=0}(\bof{xy}|\bof{uv})\right)\\
\nonumber &=& 
2\cdot 2p(P[f(\bof{X})=0|Z=0,Q]-1/2)\ ,
\end{eqnarray}
which is exactly twice the distance from uniform by Lemma~\ref{lemma:termsz0}.
\pe
\end{proof}

We know that there exists a feasible $\Delta$ which reaches a value of 
$\prod_i(4\ep_i)$, namely the $\Delta$ associated with the box partition corresponding to an individual attack. We now want to show that this value is also dual feasible and, therefore, optimal. 
First, we re-write the primal in a form with only inequality constraints and no equality constraints. To do so, we replace constraints of the form $a_j\cdot \Delta=0$ by the two constraints 
$a_j\cdot \Delta\leq 0$ and $-a_j\cdot \Delta\leq 0$. 
We obtain:
\begin{eqnarray}
\label{eq:primal_delta}
\begin{array}{lcr}
\text{max: }&& b^T\cdot \Delta\\
\text{s.t.: }&& 
A
\cdot \Delta \leq
c\\
\phantom{s}
\end{array}
&\text{\ \ \ \ \ \ \ \ \ \ \ \ \ \ and its dual\ \ \ \ \ \ \ \ \ \ \ \ \ \ }&
\begin{array}{lcr}
\text{min: }&& c^T \lambda\\
 \text{s.t.: }&& A^T\cdot \lambda=b\\
 && \lambda\geq 0
\end{array}
\end{eqnarray}
The explicit values of $A,b,c$ and the dual optimal solution $\lambda^*$ for the case of a single box are given in Appendix~\ref{sec:explicite_1box}. Note that in the \emph{dual} program, the marginal box as seen by Alice and Bob only appears in the objective function. The feasible region is, therefore, completely independent of the marginal. 

Our main tool to show optimality will be to show that we can express the linear program describing $n$ boxes as the tensor product of the linear program describing one box. 
\begin{lemma}\label{lemma:product_form}
Assume $A_1,b_1,c_1$ are the vectors and matrices associated with the linear program (\ref{eq:primal_delta}) for the case of a single box. Then the value of the program $A,b,c$ associated with $n$ boxes is equal to the value of the linear program 
defined by\footnote{We write here $c_1$ for each of the $n$ boxes for notational simplicity. However, the marginal box $c_i$ could actually 
be different for each of the $n$ boxes without having to change our argument. 
} 
\begin{eqnarray}
\label{eq:product_lp}
\text{max: }&& (b_1^{\otimes n})^T\cdot \Delta\\
\nonumber \text{s.t.: }&& 
A_1^{\otimes n}
\cdot \Delta \leq
c_1^{\otimes n}\ .
\end{eqnarray}
\end{lemma}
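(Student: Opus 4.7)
The plan is to verify ingredient-by-ingredient that the $n$-box linear program~\eqref{eq:primal_delta} factors as an $n$-fold tensor product of its single-box counterpart, exploiting the assumption that the marginal box consists of $n$ independent unbiased boxes. First I index the entries of $\Delta$ by the tuple $(\bof{x},\bof{y},\bof{u},\bof{v})$; this identifies the ambient vector space with the $n$-fold tensor product of single-box variable spaces, a natural arena for a tensor-product identity.

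I then treat the right-hand side and objective. By independence of the marginal boxes, the pointwise probabilities satisfy
\[
P(\bof{xy}|\bof{uv}) \;=\; \prod_{i=1}^n P_i(x_i y_i|u_i v_i),
\]
so the $\pm P$ entries making up $c$ match the corresponding components of $c_1^{\otimes n}$. For the objective, the XOR structure $f(\bof{x}) = \bigoplus_i x_i$ yields $(-1)^{f(\bof{x})} = \prod_i (-1)^{x_i}$, and since the objective is supported on the single input-block $(\bof{u}^*,\bof{v}^*)$ fixed by $Q$, this sign factorization lifts to $b = b_1^{\otimes n}$.

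The heart of the argument is the constraint matrix. The rows of $A_1$ fall into two families: the non-signaling rows, which equate sums of $\Delta$-entries across alternative inputs (with right-hand side zero), and the $\pm P$ bound rows, which act as selectors picking out a single $\Delta$-entry with coefficient $\pm 1$. A generic row of $A_1^{\otimes n}$ thus performs one such operation at each of the $n$ positions. The key observation is that a tensor product with a single non-signaling factor at position $i$ and selector factors elsewhere reproduces exactly the per-interface non-signaling of Condition~\ref{condition:ns2} at position $i$ with the remaining positions pinned to specific single-box labels, while a tensor product of selectors at every position gives a pointwise $\pm P(\bof{xy}|\bof{uv})$ bound. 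Tensor products involving several non-signaling factors simultaneously encode higher-order non-signaling statements, but these are implied by iterating the per-interface conditions and therefore do not shrink the feasible region further.

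The main obstacle is precisely this last bookkeeping step: checking in both directions that the tensor-product rows exhaust the constraints of~\eqref{eq:primal_delta}, with no missing constraint and with the higher-order tensor products redundant. Once this correspondence is verified, the factorizations $b = b_1^{\otimes n}$, $A = A_1^{\otimes n}$, $c = c_1^{\otimes n}$ combine to identify the $n$-box program with~\eqref{eq:product_lp}, and the lemma follows.
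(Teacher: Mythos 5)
Your proposal is correct and follows essentially the same route as the paper: factor $b$ via the XOR sign structure on the fixed input block, factor $c$ via independence of the marginals, classify the rows of $A_1^{\otimes n}$ according to which tensor factors are non-signaling rows versus selector rows, and dispose of the rows with two or more non-signaling factors as redundant (the paper does this by writing a non-signaling row as a linear combination of selector rows and invoking linearity of the tensor product, which is the same redundancy you identify). No gap.
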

\begin{proof}
We describe the case $n=2$, the case of larger $n$ is analogue.
First note that with each entry of $\Delta$ for a single box there are associated input and output bits $X_i,Y_i,U_i,V_i$. With each 
entry of $\Delta$ living in the tensor product space of two boxes, we can associate an entry $\bof{X},\bof{Y},\bof{U},\bof{V}$ 
corresponding to two bits each in the obvious way. \\
$b_1$ is such that the entries associated with $X_1=0,U_1=0,V_1=0$ is $1$; $X_1=1,U_1=0,V_1=0$ is $-1$ and for all other inputs 
it is zero (the choice of input $0,0$ is arbitrary and no restriction). $b_1\otimes b_1$ is, therefore, such 
that for $\bigoplus_i X_i=0,\bof{U}=00,\bof{V}=00$ it is $1$; for $\bigoplus_i X_i=1,\bof{U}=00,\bof{V}=00$ it is $-1$ and for all 
other inputs it is $0$. This is exactly the form that gives us the bias of the XOR of two output bits given input 
$\bof{U},\bof{V}=00,00$. \\
Now let us see that $A$ and $c$ can also be taken of tensor product form. Indeed, we will show that the constraints 
given by $A_1^{\otimes n}$ and $c_1^{\otimes n}$ are either exactly the ones that describe a $2n$ non-signaling box or 
they are trivially fulfilled and, therefore, do not modify the value of the linear program. 
We can divide the lines of $A$ into $4$ types, we call them $A^{n-s}$ (for ``non-signaling''), $-A^{n-s}$ (which contains the same 
coefficients as $A^{n-s}$ but with the sign reversed), $1_{16\times 16}$ and $-1_{16\times 16}$ (which contains a $1$ resp. $-1$ at 
a certain position and $0$ everywhere else) (compare with Appendix~\ref{sec:explicite_1box}). 
The entries of $c$ associated with these types are respectively $0$, $0$, $P(\bof{xy}|\bof{uv})$ and $P(\bof{xy}|\bof{uv})$ 
(the marginal probabilities). \\
Now consider $A_1^{\otimes 2}$ and $c_1^{\otimes 2}$. We now have $16$ types of rows, 
corresponding to all possible combinations. 
\begin{enumerate}
\item \label{item:id_id} 
Type $1_{16\times 16} \otimes 1_{16\times 16}=1_{256\times 256}$ 
The associated $c$ is $P(x_1y_1|u_1v_1) \cdot P(x_2y_2|u_2v_2)$ (i.e., the probability entry of the two boxes) 
and these constraints correspond exactly to the upper bound on $\Delta$ in the case of two boxes. 
(Type $-1_{16\times 16} \otimes -1_{16\times 16}=1_{256\times 256}$ is exactly the same row and, therefore, follows from this one).
\item \label{item:mid_id} 
Type $-1_{16\times 16} \otimes 1_{16\times 16}=-1_{256\times 256}$  
The associated $c$ is $P(x_1y_1|u_1v_1) \cdot P(x_2y_2|u_2v_2)$ and these constraints correspond exactly to the lower bound 
on $\Delta$ in the case of two boxes. (Type $1_{16\times 16} \otimes -1_{16\times 16}=-1_{256\times 256}$ is exactly 
the same row and, therefore, follows from this one).
\item \label{item:ns_id} The lines of the form $A^{n-s}\otimes 1_{16\times 16}$ correspond exactly to the non-signaling constraints 
for two boxes. To see this assume that the non-signaling constraint on the first box is of the form 
$\sum_{x_1} P(x_1,y_1|u_1,v_1)-\sum_{x_1}P(x_1,y_1|u'_1v_1)$ and the identity on the second box is 
$1$ at the position $x_2,y_2,v_2,u_2$ and $0$ everywhere else. Then the constraint $A^{n-s}\otimes 1_{16\times 16}$ corresponds to 
\begin{eqnarray}
 \nonumber \sum_{x_1} P(x_1,x_2,y_1,y_2|u_1,u_2,v_1,v_2)-\sum_{x_1}P(x_1,x_2,y_1,y_2|u'_1,u_2,v_1,v_2)
\end{eqnarray}
and this is exactly the form of a $2n$ non-signaling constraint. The associated entry of $c$ is, as expected, 
$0\cdot P(x_2,y_2|u_2,v_2)=0$. Together with the constraints of the form $1_{16\times 16}\otimes A^{n-s}$ we obtain all the non-signaling 
constraints for the two boxes. (Type $-A^{n-s}\otimes -1_{16\times 16}$ is again exactly the same row).
\item \label{item:mns_id} The lines of the form $-A^{n-s}\otimes 1_{16\times 16}$ and 
$1_{16\times 16}\otimes -A^{n-s}$ give the same non-signaling constraints as above but with reversed sign, therefore, enforcing the 
equality constraint by two inequality constraints. ($A^{n-s}\otimes -1_{16\times 16}$ and $-1_{16\times 16}\otimes A^{n-s}$ are again 
exactly the same rows and are, therefore, trivially fulfilled.)
\item Remain the lines of the form $A^{n-s}\otimes A^{n-s}$. Their associated $c$ is $0\cdot 0=0$. However, the second non-signaling 
constraints can be seen as a linear combination of the identity constraints, i.e., $A^{n-s}\otimes A^{n-s}=A^{n-s}\otimes 
(\sum_k \alpha_k\cdot 1_{16\times 16,k})$. Because of the linearity of the tensor product in the second component, this constraint is, 
therefore, the linear combination of the constraints given in point \ref{item:mid_id} and \ref{item:ns_id} above 
and because each of them is equal to $0$, their linear combination is also equal to $0$ and this constraint is, 
therefore, trivially fulfilled whenever the above constraints are. The same argument holds for the rows 
  $-A^{n-s}\otimes A^{n-s}$, $A^{n-s}\otimes -A^{n-s}$ and $-A^{n-s}\otimes -A^{n-s}$. 
\end{enumerate}
\peon
\end{proof}
Now we consider the dual program of (\ref{eq:product_lp}). Using Lemma~\ref{lemma:product_form} we see that if $\lambda_1$ is a feasible dual solution for a single box, then $\lambda_1^{\otimes n}$ is feasible for $n$ boxes. 
\begin{lemma}\label{lemma:dual_product}
For any $\lambda_i$ which is dual feasible for the linear program $A_1,b_1$ associated with one box, $\bigotimes_i \lambda_i$ is dual feasible for the linear program (\ref{eq:product_lp}) associated with $n$ boxes. Further, this dual feasible solution has value $c_n^T\lambda_n=\prod_i(c_i^T\lambda_i)$. 
\end{lemma}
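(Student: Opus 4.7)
The plan is to verify the two claims --- feasibility of $\bigotimes_i \lambda_i$ and the product form of the objective value --- by direct calculation, exploiting two elementary properties of the tensor product: the mixed-product rule $(A\otimes B)(C\otimes D)=(AC)\otimes(BD)$ and $(A\otimes B)^T=A^T\otimes B^T$. These are exactly the ingredients that make the tensor-product structure of the primal in Lemma~\ref{lemma:product_form} translate cleanly to the dual.

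First I would check non-negativity: since the tensor product of non-negative vectors is entrywise non-negative, $\lambda_i\ge 0$ for every $i$ immediately yields $\bigotimes_i\lambda_i\ge 0$. Next, for the linear constraint of the dual, I would write
\begin{eqnarray}
\nonumber (A_1^{\otimes n})^T\bigl(\bigotimes\nolimits_i \lambda_i\bigr)
=(A_1^T)^{\otimes n}\bigl(\bigotimes\nolimits_i \lambda_i\bigr)
=\bigotimes\nolimits_i (A_1^T\lambda_i)
=\bigotimes\nolimits_i b_1
=b_1^{\otimes n},
\end{eqnarray}
where the middle equality is the mixed-product rule and the third uses the assumption that each $\lambda_i$ is dual feasible for a single box, i.e., $A_1^T\lambda_i=b_1$. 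This is exactly the dual equality constraint appearing in (\ref{eq:primal_delta}) applied to the product LP (\ref{eq:product_lp}), so $\bigotimes_i\lambda_i$ is dual feasible.

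For the objective value, the same tensor identity gives
\begin{eqnarray}
\nonumber (c_1^{\otimes n})^T\bigl(\bigotimes\nolimits_i \lambda_i\bigr)
=\prod\nolimits_i (c_1^T\lambda_i),
\end{eqnarray}
which is the claimed product formula (with $c_i=c_1$, or more generally allowing distinct marginal boxes per box as noted in the footnote of Lemma~\ref{lemma:product_form}).

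There is no real obstacle here: the only subtle point is that Lemma~\ref{lemma:product_form} showed $A_1^{\otimes n}$ contains, in addition to the genuine $n$-box constraints, rows that are redundant or merely repeat others. One might worry that the dual variables assigned by $\bigotimes_i\lambda_i$ to these extra rows could violate the constraint $(A_1^{\otimes n})^T\lambda=b_1^{\otimes n}$. The calculation above shows that this concern is unfounded --- the tensor identity handles all rows uniformly, and feasibility holds regardless of how the entries are distributed across redundant rows.
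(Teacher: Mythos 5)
Your proof is correct and takes essentially the same approach as the paper's: both verify dual feasibility by the tensor identities $(A_1^{\otimes n})^T=(A_1^T)^{\otimes n}$ and the mixed-product rule, and both obtain the objective value as $\prod_i c_i^T\lambda_i$ from the same tensor factorization. The closing remark about redundant rows is a harmless addendum; since the dual of~(\ref{eq:product_lp}) is taken literally with constraint matrix $(A_1^{\otimes n})^T$, the calculation already applies uniformly to all rows.
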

\begin{proof}
$\lambda_i$ is dual feasible for $A_1,b_1$, i.e., $A_1^T\lambda_1=b_1$ and $\lambda_1\geq 0$. Then 
\begin{eqnarray}
\nonumber A_n^T\lambda_n=(A_1^{\otimes n})^T(\bigotimes_i\lambda_i)=(A_1^T)^{\otimes n}(\bigotimes_i\lambda_i)=\bigotimes_i(A_1^T\lambda_i)=(b_1)^{\otimes n}
\end{eqnarray}
and $\bigotimes_i(\lambda_i)\geq 0$, i.e., $\lambda_n=\bigotimes_i\lambda_i$ is dual feasible. Its value is $c_n\lambda_n=\bigotimes_ic_i\cdot \bigotimes_i\lambda_i=\bigotimes_i(c_i\lambda_i)=\prod_i(c_i\lambda_i)$.
\pe
\end{proof}

Now we are ready to give the proof of Lemma~\ref{lemma:xor_good_pa_fct}.
\begin{proof}[of Lemma~\ref{lemma:xor_good_pa_fct}]
For a single box $d(X|Z(W),Q)\leq 1/2\cdot (4\ep_i)$ by Lemma~\ref{lemma:guessing_probability_single_box} (see Section~\ref{sec:limit_one_box}), this implies that there exists a dual feasible $\lambda_i$, such that $c_i^T\lambda_i\leq 4\ep_i$ for each $i$. By Lemma~\ref{lemma:dual_product}, there exists a dual feasible $\lambda_n$ such that $c_n^T\lambda_n\leq \prod_i(4\ep_i)\leq (4\ep)^n$ 
and, therefore, by Lemma~\ref{lemma:distanceislp},
\begin{eqnarray}
\nonumber d(\bigoplus\nolimits_i X_i|Z(W),Q)=1/2\cdot c_n^T\lambda_n^*\leq 1/2\cdot  c_n^T\lambda_n=1/2\cdot \prod\nolimits_i (4\ep_i)\leq 1/2\cdot (4\ep)^n \ .
\end{eqnarray}
\peon
\end{proof}

This implies that if Alice and Bob create a single key bit by applying the XOR to their outputs there is no advantage for Eve to do a collective or coherent attack, as the above distance from uniform can be reached by an 
individual attack.\footnote{Note that individual attacks are optimal only in this specific case and, in general, they are strictly weaker than 
collective or coherent attacks. We give an example of such a collective attack in Appendix~\ref{sec:better_collective_attack}.}   

We now want to remove the condition that the marginal boxes of Alice and Bob need to be independent. First we consider the case when Alice and Bob share the \emph{convex combination} of $n$ independent and unbiased boxes of different errors. The reason to consider this case is because no matter what boxes Alice and Bob share --- they can be arbitrarily correlated --- Alice and Bob can apply a random mapping to their input and output bits (see Appendix~\ref{sec:depol}), such that the distribution they share after this mapping in fact \emph{is} the one of a convex combination of several independent and unbiased boxes with different errors~\cite{mag,mrwbc}. The statement of Lemma~\ref{lemma:xor_good_pa_fct} still holds here:
\begin{lemma}\label{lemma:xor_good_pa_fct_convex}
Assume a $(2n+1)$-partite box $P_{\bof{XY}Z|\bof{UV}W}$ such that the marginal $P_{\bof{XY}|\bof{UV}}$ corresponds to a convex combination with weight $p_j$ of $n$ unbiased non-local boxes each with an associated error $\ep_i^j$. Assume $f(\bof{X}):=\bigoplus_i X_i$ and $Q:=(\bof{U}=\bof{u},\bof{V}=\bof{v},F=\bigoplus)$. Then
$
d(f(\bof{X})|Z(W),Q)\leq \sum\nolimits_j p_j\cdot \left[ 1/2\cdot \prod\nolimits_i(4\ep_i^j)\right] 
$.
\end{lemma}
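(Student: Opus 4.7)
The plan is to reduce the convex-combination case to the independent-product case of Lemma~\ref{lemma:xor_good_pa_fct} by revealing to the eavesdropper which component of the mixture is active. Concretely, I would interpret the decomposition $P_{\bof{XY}|\bof{UV}} = \sum_j p_j\, P^j_{\bof{XY}|\bof{UV}}$ as arising from a classical random variable $J$ on Alice and Bob's side with $\Pr[J=j]=p_j$, such that conditioned on $J=j$ the shared box is the product $\bigotimes_i P^j_i$ of independent unbiased non-local boxes with errors $\ep_i^j$. This is precisely the situation produced by the depolarization procedure of Appendix~\ref{sec:depol}, where $J$ indexes the local random mapping Alice and Bob apply.

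The argument then proceeds in three short steps. First, by the distinguisher formulation of $d$, appending $J$ to the adversary's side information can only increase her distinguishing advantage, since any distinguisher for the original system is simulated by a distinguisher on the enlarged system that ignores $J$; hence
\begin{equation*}
d(f(\bof{X}) \mid Z(W), Q) \;\leq\; d(f(\bof{X}) \mid Z(W), (Q, J)).
\end{equation*}
Second, because $J$ is Alice and Bob's local randomness and therefore independent of Eve's input $W$, the $\max_w$ in the definition of the distance commutes with averaging over $j$, and the defining formula factorises as
\begin{equation*}
d(f(\bof{X}) \mid Z(W), (Q, J)) \;=\; \sum_j p_j \cdot d\bigl(f(\bof{X}) \mid Z(W), Q,\, J = j\bigr).
\end{equation*}
Third, for each $j$ the system conditioned on $J=j$ is a $(2n+1)$-partite non-signaling box whose $\bof{XY}|\bof{UV}$ marginal is the product $\bigotimes_i P^j_i$ of independent unbiased non-local boxes, so Lemma~\ref{lemma:xor_good_pa_fct} applies directly and yields $d(f(\bof{X}) \mid Z(W), Q, J=j) \leq \tfrac{1}{2} \prod_i (4\ep_i^j)$. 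Chaining the three displays gives the claimed bound.

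The main obstacle is justifying the reveal step cleanly inside the non-signaling framework: one needs the extended system including $J$ to itself be non-signaling, so that conditioning on $J=j$ preserves the $(2n+1)$-partite non-signaling property required by Lemma~\ref{lemma:xor_good_pa_fct}, and one needs $J$ to be genuinely independent of Eve's choice $W$ for the factorisation in the second display to be legitimate. Both follow from $J$ being a classical variable locally produced by Alice and Bob, but some such external structural fact appears indispensable here: the LP value $v(P)$ from Lemma~\ref{lemma:distanceislp} is concave in $P$, so a direct LP decomposition of the mixed marginal would give an inequality in the wrong direction. As a sanity check, the resulting bound is tight, being saturated by attacking each product $P^j$ independently and individually as noted after Lemma~\ref{lemma:xor_good_pa_fct}.
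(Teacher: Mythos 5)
Your reduction hinges on the first step, $d(f(\bof{X})\mid Z(W),Q)\le d(f(\bof{X})\mid Z(W),(Q,J))$, and this is where the argument has a genuine gap. The lemma quantifies over \emph{all} $(2n+1)$-partite non-signaling extensions of the mixed marginal $P=\sum_j p_j P^j$, i.e.\ (via Lemma~\ref{lemma:distanceislp}) over all box partitions of $P$. Revealing $J$ is only meaningful for those box partitions that admit a common refinement with the decomposition $\sum_j p_j P^j$, that is, attacks of the form ``first learn $j$, then partition $P^j$''. Two convex decompositions of the same point need not admit a common refinement (already in a square, $\tfrac12 A+\tfrac12 B=\tfrac12 C+\tfrac12 D$ for the four vertices has none), so a general attack on the mixture is \emph{not} a coarsening of any $J$-aware attack, and the ``distinguisher that ignores $J$'' simulation does not apply to it. Worse, since the $J$-refined partitions form a subset of all partitions of $P$, the supremum defining your right-hand side ranges over \emph{fewer} adversary strategies, so a priori the inequality between the two suprema points the other way --- this is exactly the concavity of the LP value in the marginal that you yourself note blocks the primal decomposition. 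The reveal trick is that same decomposition in disguise, not a way around it. The appeal to depolarization does not supply the missing structure either: there, conditioning on the announced mapping $J=j$ yields the $j$-mapped version of the original, possibly correlated, box; only the \emph{average} over $j$ is a mixture of independent products, which is why the non-independent case needs the separate Lemma~\ref{lemma:nodepol}.

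The paper closes this on the dual side, where linearity points in the right direction: the dual feasible region of (\ref{eq:product_lp}) is independent of the marginal, and the dual objective $c^T\lambda$ is linear in it. The single fixed dual-feasible point $\lambda_1^{\otimes n}$ evaluates to $\prod_i(4\ep_i^j)$ on every product marginal $c^j=\otimes_i c_i^j$, so weak duality gives $v\bigl(\sum_j p_j c^j\bigr)\le \bigl(\sum_j p_j c^j\bigr)^T\lambda_1^{\otimes n}=\sum_j p_j\prod_i(4\ep_i^j)$ for an arbitrary mixture, which is the claim. If you want to keep your operational route, you would first have to prove --- not assume --- that the optimal attack on the mixture can be taken $J$-refined; the dual certificate is precisely what substitutes for that missing fact.
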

\begin{proof}
Note that for a single box the dual optimal solution is $\lambda_1^*$ for all $c_1$ describing a single box (i.e., $c_1^T\cdot \lambda_1^*=4\ep$ for all $c_1$) (see Appendix~\ref{sec:explicite_1box}). For $n$ boxes, $\lambda_1^{\otimes n}$ is still dual feasible. It reaches a value of $c_n^T\lambda_1^{\otimes n}=
\left(\sum_j p_j (\otimes_i c_i^j)\right)\cdot (\lambda_1)^{\otimes n}=\sum_j p_j \prod_i (4\ep_i^j)$. \pe
\end{proof}

Now we want to remove any requirement of independence. Lemma~\ref{lemma:nodepol} states  that choosing boxes which 
are not independent cannot be an advantage for Eve and the above bounds still hold. 
\begin{lemma}\label{lemma:nodepol}
Assume a $(2n+1)$-partite box $P_{\bof{XY}Z|\bof{UV}W}$ with any marginal $P_{\bof{XY}|\bof{UV}}$. Assume $f(\bof{X}):=\bigoplus_i X_i$ and $Q:=(\bof{U}=\bof{u},\bof{V}=\bof{v},F=\bigoplus)$ and the distance from uniform $d(f(\bof{X})|Z(W),Q)$. Now assume a second $(2n+1)$-partite box with  marginal $P'_{\bof{XY}|\bof{UV}}$ obtained from $P_{\bof{XY}|\bof{UV}}$ by depolarization and with distance from uniform $d'(f(\bof{X})|Z(W),Q)$ (with the same $Q$ and $f$). Then
$
d(f(\bof{X})|Z(W),Q)\leq d'(f(\bof{X})|Z(W),Q) 
$.
\end{lemma}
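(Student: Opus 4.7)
The plan is to construct, given the tripartite box $P_{\bof{XY}Z|\bof{UV}W}$ that achieves distance $d$, a new tripartite box with marginal $P'_{\bof{XY}|\bof{UV}}$ that achieves the same distance. Since the lemma's $d'$ refers to a tripartite extension of $P'_{\bof{XY}|\bof{UV}}$, exhibiting one whose distance already equals $d$ yields $d\le d'$.

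The construction composes the depolarization procedure from Appendix~\ref{sec:depol} with the given tripartite box. Depolarization is local: Alice and Bob share random bits $R=(R_1,\dots,R_n)$, and for each box $i$ apply local transformations $T^A_{R_i}$ and $T^B_{R_i}$ to their $i$-th input and output, with the property that averaging over $R$ maps $P_{\bof{XY}|\bof{UV}}$ to $P'_{\bof{XY}|\bof{UV}}$. I then define the new tripartite box $\tilde P$ by letting Alice and Bob apply these transformations on top of the original box, and by declaring Eve's output to be the enlarged variable $\tilde Z=(Z,R)$. Marginalizing over $\tilde Z$ produces exactly $\mathbb{E}_R[\text{depolarization of }P_{\bof{XY}|\bof{UV}}]=P'_{\bof{XY}|\bof{UV}}$, so $\tilde P$ is a valid tripartite extension of $P'_{\bof{XY}|\bof{UV}}$.

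Two things must then be checked. First, $\tilde P$ must satisfy the $(2n+1)$-partite non-signaling condition. This holds because each local transformation $T^A_{R_i}$ (resp.\ $T^B_{R_i}$) touches only the $i$-th interface of Alice (resp.\ Bob), so non-signaling between the $2n$ input/output pairs is preserved; and $R$ is drawn independently of Eve's input $W$, so non-signaling to and from Eve is preserved as well. Second, the XOR distance from uniform of $\tilde P$ must equal $d$. The key structural fact (the whole point of depolarization, and which one reads off from the explicit formulas in Appendix~\ref{sec:depol}) is that each depolarization step is an affine map over $\mathbb{F}_2$: the new outputs satisfy $X'_i = X_i \oplus a_i(R_i,U_i,V_i)$ for a publicly computable function $a_i$. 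Hence $f(\bof{X}')=f(\bof{X})\oplus h(R,\bof{U},\bof{V})$ for a publicly computable $h$. Because $\tilde Z$ reveals $R$ to Eve and $Q$ reveals $\bof{U},\bof{V}$, the bit $h(R,\bof{U},\bof{V})$ is known to her, so her conditional distribution on $f(\bof{X}')$ is simply a translate of her conditional distribution on $f(\bof{X})$, leaving the distance from uniform unchanged.

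The single non-routine step is checking that depolarization indeed has the additive form $X'_i = X_i \oplus a_i(R_i,U_i,V_i)$; this uses that the boxes are binary so that all local relabelings reduce to affine maps over $\mathbb{F}_2$, and it is exactly what Appendix~\ref{sec:depol} establishes. Once this is in hand, the rest is bookkeeping: independence of $R$ from the original box, substitution $f(\bof{X})=f(\bof{X}')\oplus h$ into the definition of $d$, and a re-indexing of the sum.
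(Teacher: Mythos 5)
Your strategy is genuinely different from the paper's. The paper bounds $d$ from above by exhibiting a dual feasible solution $\lambda_1^{*\,\otimes n}$ of the linear program of Lemma~\ref{lemma:distanceislp}; its value is the probability mass on which all $n$ CHSH conditions are simultaneously violated, which is invariant under depolarization and exactly attained for the depolarized box, giving $d\le d'$ without ever constructing an attack. You instead argue by simulation: convert the optimal adversary for $P_{\bof{XY}|\bof{UV}}$ into an adversary for $P'_{\bof{XY}|\bof{UV}}$ of equal strength. The direction of the reduction is right, and the construction of the extension with enlarged output $\tilde Z=(Z,R)$ --- including the preservation of the $(2n+1)$-partite non-signaling conditions and the observation that the output relabelings are affine over $GF(2)$ with publicly computable offsets --- is sound.

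The gap is in the final step. Three of the four second-stage depolarization maps of Appendix~\ref{sec:depol} relabel the \emph{inputs} ($u_i\to u_i\oplus 1$ and/or $v_i\to v_i\oplus 1$), not only the outputs. Hence, conditioned on $R=r$, your simulated box queried at the input $(\bof{u},\bof{v})$ fixed by $Q$ behaves like the original box queried at a relabelled input $(\sigma_r(\bof{u}),\tau_r(\bof{v}))$, with a known XOR offset on the outputs. Eve's conditional distribution of $f(\bof{X}')$ is therefore a translate of her distribution of $f(\bof{X})$ \emph{at a different input}. Since the orbit of $(\bof{u},\bof{v})$ under the relabelling group is uniform over all $4^n$ input pairs, what your construction actually yields is that $d'(f(\bof{X})|Z(W),Q)$ is at least the \emph{average} over all inputs $(\bof{u}'',\bof{v}'')$ of the original box's distance from uniform, not its value at the particular $(\bof{u},\bof{v})$ appearing in $Q$. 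For a general, non-symmetrized marginal the optimal distance can depend on the input, so this average can be strictly smaller than the value at the input you need, and the claimed inequality does not follow. (Your phrase ``a re-indexing of the sum'' does not rescue this: the distance in the lemma is conditioned on a fixed input value, and re-indexing changes which value you condition on.) To close the gap you would need either to prove that the distance is input-independent for arbitrary marginals --- which is essentially what the paper's input-independent dual bound delivers --- or to weaken the statement to the input-averaged distance, which would then no longer plug directly into Lemma~\ref{lemma:distance_k_th_bit}, where the raw key is extracted only from the positions with a specific input.
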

\begin{proof}
We know that for $P'_{\bof{XY}|\bof{UV}}$, $d'(f(\bof{X})|Z(W),Q)= \sum_j p_j\cdot 1/2\cdot \prod_i(4\ep_i^j)$ by Lemma~\ref{lemma:xor_good_pa_fct_convex} and because this bound can 
easily be reached by attacking each box separately. 
However, this value is exactly the 
sum of all probabilities where \emph{none} of the CHSH conditions are fulfilled (i.e., where $X_i\oplus Y_i\neq U_i\cdot V_i$ for all $i$). \\
Now consider $P_{\bof{XY}|\bof{UV}}$. $P'_{\bof{XY}|\bof{UV}}$ can be seen as the convex combination of all the $P_{\bof{XY}|\bof{UV}}$ to which one of the mappings given in Appendix~\ref{sec:depol} has been applied. However, the distance from uniform for $P_{\bof{XY}|\bof{UV}}$ (or their mappings) is limited by the sum of all probabilities where \emph{none} of the CHSH conditions are fulfilled and this holds for all values of the input $\bof{u},\bof{v}$ (in Appendix~\ref{sec:explicite_1box}, for each input $\bof{u},\bof{v}$ a dual feasible solution reaching this value is given). By comparison with Appendix~\ref{sec:depol}, we see that the mappings (for each input) leave $\bigoplus_i x_i$ unchanged (up to a relabeling between $0$ and $1$). The mappings also leave the sum of probabilities where \emph{none} of the CHSH conditions are fulfilled unchanged, because $x'_i,y'_i,u'_i,v'_i$ not fulfilling the CHSH condition are mapped to $x_i,y_i,u_i,v_i$ not fulfilling the CHSH condition. We conclude
\begin{eqnarray}
 \nonumber d(f(\bof{X})|Z(W),Q)&\leq& \lambda_1^{*\ \otimes n}\cdot c =  \sum_{\bof{x},\bof{y},\bof{u},\bof{v}:x_i\oplus y_i 
 \neq u_i\cdot v_i\ \forall i } P_{\bof{XY}|\bof{UV}}(\bof{x}\bof{y}|\bof{u}\bof{v})
\\
\nonumber &=&
\sum_{\bof{x},\bof{y},\bof{u},\bof{v}:x_i\oplus y_i \neq u_i\cdot v_i\ \forall i } 
P'_{\bof{XY}|\bof{UV}}(\bof{x}\bof{y}|\bof{u}\bof{v})=
 d'(f(\bof{X})|Z(W),Q)\ .
\end{eqnarray}
 \peon
\end{proof}

\section{Full Key Agreement}\label{sec:keygeneration}

\subsection{Privacy Amplification: From One to Several Bits}

We have seen in the previous section that it is possible to create a highly secure bit using a linear function --- 
the XOR. But obviously we would like to extract a secure key instead of a single bit. Alice and Bob will create all the 
key bits the same way: by applying a random linear function to the output bits, i.e., $S:=A\odot \bof{X}$, where $A$ is a 
$s\times n$-matrix over $GF(2)$ with $p(0)=p(1)=1/2$ for all entries and we write $\odot$ for the multiplication modulo 
$2$. Let us now see why this key is secure. 

First, we reduce the security of the key $S$ to the question of the security of every single bit.
\begin{lemma}\label{lemma:distanceseveralbits}
 Assume $S:=[S_1,\ldots ,S_s]$, where $S_i$ are bits. Then
\begin{eqnarray}
d(S|Z(W),Q)\leq 
\sum_i d(S_i|Z(W),Q,S_{1},\ldots,S_{i-1})\ .
\end{eqnarray}
\end{lemma}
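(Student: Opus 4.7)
The plan is to prove this via a standard hybrid/telescoping argument, adapted to the non-signaling setting where the $\max_w$ in the definition of distance from uniform must be handled carefully.

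First I would write $|\mathcal{S}| = 2^k$ and use the pointwise telescoping identity
\[
P_{S|Z=z,Q=q,W=w}(s) - 2^{-k} = \sum_{i=1}^{k} P_{S_1\cdots S_{i-1}|Z=z,Q=q,W=w}(s_1,\ldots,s_{i-1})\cdot 2^{-(k-i)}\Bigl(P_{S_i|Z=z,Q=q,W=w,S_1,\ldots,S_{i-1}}(s_i) - \tfrac{1}{2}\Bigr),
\]
which holds for every fixed $(z,q,w,s)$ and is verified by direct factoring of $P_S = P_{S_1}\cdot P_{S_2|S_1}\cdots P_{S_k|S_1\cdots S_{k-1}}$ against $2^{-k} = 2^{-1}\cdot 2^{-1}\cdots 2^{-1}$. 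Taking absolute values and applying the triangle inequality gives an upper bound on $|P_{S|\cdot}(s) - 2^{-k}|$ as a sum of $k$ nonnegative terms, one for each bit $S_i$.

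Next I would multiply by $P_{Z,Q|W=w}(z,q)$, sum over $z$, and define
\[
g_{w,i}(s,q) := 2^{-(k-i)}\!\!\sum_{z} P_{Z,Q,S_1,\ldots,S_{i-1}|W=w}(z,q,s_1,\ldots,s_{i-1})\,\Bigl|P_{S_i|Z=z,Q=q,W=w,S_1,\ldots,S_{i-1}}(s_i) - \tfrac{1}{2}\Bigr|,
\]
so that the integrand $\sum_z P_{Z,Q|W=w}(z,q)\,|P_{S|\cdot}(s) - 2^{-k}|$ is bounded by $\sum_{i} g_{w,i}(s,q)$. Applying $\max_w \sum_i g_{w,i} \leq \sum_i \max_w g_{w,i}$, and then summing over all $s$ and $q$, one gets
\[
d(S|Z(W),Q) \;\leq\; \tfrac{1}{2}\sum_{i=1}^{k}\sum_{s,q}\max_{w} g_{w,i}(s,q).
\]

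Now the key observation is that $g_{w,i}(s,q)$ does not depend on $s_{i+1},\ldots,s_k$, so neither does $\max_w g_{w,i}(s,q)$; summing those $k-i$ variables out contributes a factor $2^{k-i}$ which cancels exactly the $2^{-(k-i)}$ in the definition of $g_{w,i}$. What remains is
\[
\sum_{s_1,\ldots,s_i,q}\max_{w}\sum_{z} P_{Z,Q,S_1,\ldots,S_{i-1}|W=w}(z,q,s_1,\ldots,s_{i-1})\,\Bigl|P_{S_i|Z=z,Q=q,W=w,S_1,\ldots,S_{i-1}}(s_i) - \tfrac{1}{2}\Bigr|,
\]
which is exactly $2\,d(S_i\mid Z(W),Q,S_1,\ldots,S_{i-1})$ by the definition of distance from uniform. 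Combining with the $\tfrac{1}{2}$ in front gives the claimed bound.

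The main obstacle is the $\max_w$: because the non-signaling adversary adapts her input $w$ after seeing $q$ (and, in the definition of the distinguishing advantage, effectively also $s$), one cannot simply condition on intermediate hybrids as in the classical or quantum case and must instead bound the max of a sum by the sum of maxes, allowing each hybrid step to use its own worst-case $w$. The rest of the argument is routine telescoping and careful bookkeeping of the marginals over the bits $s_{i+1},\ldots,s_k$.
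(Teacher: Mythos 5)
Your proposal is correct and takes essentially the same approach as the paper: telescoping $P_{S|\cdot}(s)-2^{-k}$ into hybrids, applying the triangle inequality, and exchanging $\max_w$ with the sum over hybrid steps before summing out $s_{i+1},\ldots,s_k$ to recognize each term as a conditional distance from uniform. You simply make explicit the $\max_w\sum_i\leq\sum_i\max_w$ exchange and the marginal bookkeeping that the paper folds into the single phrase ``by the triangle inequality''.
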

\begin{proof}
\begin{eqnarray}
\nonumber d(S|Z(W),Q)&=& \sum_{s,q}\max_w \sum_z |P_{S,Z,Q|W=w}(s,z,q)-\frac{1}{2^s}\cdot P_{Z,Q|W=w}(z,q)|\\
 \nonumber &\leq & \sum_{s,q}\max_w \sum_z \left[|P_{S,Z,Q|W=w}(s,z,q)-\frac{1}{2}\cdot P_{S_1\ldots S_{s-1},Z,Q|W=w}(s_1,\ldots ,s_{s_1},z,q)| \right.\\
\nonumber && 
\left. +\ldots +\frac{1}{2^{s-1}}|P_{S_1,Z,Q|W=w}(s_1,z,q)-\frac{1}{2}\cdot P_{Z,Q|W=w}(z,q)|
 \right]\\
\nonumber 
&\leq&\sum_i d(S_i|Z(W),Q,S_{1},\ldots ,S_{i-1})\ ,
\end{eqnarray}
where the first equation is by the definition of the distance from uniform and the second inequality is by the triangle inequality. \pe
\end{proof}

We now need to bound the distance from uniform of the $i$'th key bit given all previous bits. 
\begin{lemma}\label{lemma:distance_k_th_bit}
Assume $S:=A\odot \bof{X}$, where $A$ is a $i\times n$-matrix over $GF(2)$ and be $P_A$ the uniform distribution 
over all these matrices. $Q:=(\bof{U}=\bof{u},\bof{V}=\bof{v},A)$. Then
\begin{eqnarray}
\label{eq:distance_k_th_bit}
d(S_i|Z(W),Q,S_1,\ldots,S_{i-1})
&\leq& 1/2\cdot 
2^{i-1}
\cdot  \left(\frac{1+4\ep}{2}\right)^n.
\end{eqnarray}
\end{lemma}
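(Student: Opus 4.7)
The plan is to combine a Fourier-style expansion of the joint distribution of $(S_1,\ldots,S_i)$ with an averaging argument that invokes Lemma~\ref{lemma:xor_good_pa_fct} on random linear combinations. The key observation is that each $S_k=a_k\cdot\bof{X}$ is linear in $\bof{X}$, so $\bigoplus_{k\in T}S_k=b_T\cdot\bof{X}$ where $b_T:=\bigoplus_{k\in T}a_k$, and because $a_i$ is uniform on $\{0,1\}^n$ and independent of $a_1,\ldots,a_{i-1}$, every $b_T$ with $i\in T$ is uniform on $\{0,1\}^n$ after averaging over $A$.

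First I would use $\mathbf{1}[S_k=s_k]=\tfrac12(1+(-1)^{S_k+s_k})$ to write
\[
P(S_1=s_1,\ldots,S_i=s_i\mid z,q,w)\ =\ \frac{1}{2^i}\sum_{T\subseteq[i]}(-1)^{\sum_{k\in T}s_k}\,\mathbb{E}\!\left[(-1)^{b_T\cdot\bof{X}}\mid z,q,w\right],
\]
and verify that forming the difference $P(S_i=0,s_1,\ldots,s_{i-1}\mid\cdot)-\tfrac12P(s_1,\ldots,s_{i-1}\mid\cdot)$ makes every term with $i\notin T$ cancel. What remains is a signed sum over precisely the $2^{i-1}$ subsets $T\ni i$, whose absolute value is at most $\tfrac{1}{2^i}\sum_{T\ni i}\bigl|2P(b_T\cdot\bof{X}=0\mid z,q,w)-1\bigr|$, a bound \emph{independent} of the values $s_1,\ldots,s_{i-1}$. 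Summing this bound over $s_1,\ldots,s_{i-1}$ (which multiplies by $2^{i-1}$), multiplying by $P(z,q\mid w)$, moving $\max_w$ past the $T$-sum via $\max(a+b)\leq\max a+\max b$, and summing over $q$ gives
\[
d(S_i\mid Z(W),Q,S_1,\ldots,S_{i-1})\ \leq\ \sum_{\substack{T\subseteq[i]\\ T\ni i}} d\bigl(b_T\cdot\bof{X}\bigm|Z(W),Q\bigr),
\]
a sum of $2^{i-1}$ terms.

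Next, since $Q$ contains $A$ and $a_i$ is uniform, $b_T$ is uniform on $\{0,1\}^n$ when averaged over $A$ (for every fixed $T\ni i$). For any nonzero $b\in\{0,1\}^n$, applying Lemma~\ref{lemma:xor_good_pa_fct} to the sub-collection of boxes indexed by $\mathrm{supp}(b)$ yields $d(b\cdot\bof{X}\mid Z(W),\bof{U},\bof{V})\leq\tfrac12(4\ep)^{|b|}$, and the case $b=0$ gives the trivial bound $\tfrac12=\tfrac12(4\ep)^0$. Taking the expectation over uniform $b$,
\[
\mathbb{E}_b\!\left[\tfrac12(4\ep)^{|b|}\right]\ =\ \tfrac12\prod_{j=1}^{n}\frac{1+4\ep}{2}\ =\ \tfrac12\!\left(\frac{1+4\ep}{2}\right)^{\!n},
\]
and multiplying by the $2^{i-1}$ surviving subsets $T$ produces the claimed bound $\tfrac12\cdot2^{i-1}\cdot\bigl((1+4\ep)/2\bigr)^n$.

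The main obstacle is the Fourier bookkeeping in the second paragraph: verifying that precisely the $T\ni i$ terms survive in the difference and that the factors $1/2^i$ from the Fourier coefficients, $2^{i-1}$ from summing over $(s_1,\ldots,s_{i-1})$, and the factor $2$ arising in the conversion between $d(\cdot)$ and $|2P(\cdot)-1|$ combine consistently. A secondary but important point is justifying that Lemma~\ref{lemma:xor_good_pa_fct} may be invoked on a sub-collection of boxes indexed by $\mathrm{supp}(b)$; this is legitimate because the $(2n+1)$-partite non-signaling condition restricts to every sub-collection, and, given that the boxes have been depolarized into the independent convex form used in Section~\ref{sec:xorpa}, the unused boxes are independent and merely provide side information that Eve could simulate herself.
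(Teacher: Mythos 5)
Your proof takes a genuinely different route from the paper's. The paper's argument goes through Lemma~\ref{lemma:distance_set_given_other_sets} (which says $S_i$ is uniform given $S_1,\ldots,S_{i-1}$ whenever it is uniform given every parity $\bigoplus_{j\in I}S_j$), then constructs, for each of the $2^{i-1}$ parities, a ``bad'' event on which uniformity might fail, and union-bounds over these events. Your Fourier expansion of the joint distribution $P(S_1,\ldots,S_i\mid z,q,w)$ reaches the same inequality $d(S_i\mid Z(W),Q,S_{<i})\le\sum_{T\ni i}d(b_T\odot\bof X\mid Z(W),Q)$ directly, by algebra rather than by event decomposition, and the triangle-inequality and exchange of $\max_w$ with $\sum_T$ are straightforward. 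This is tighter bookkeeping: it produces the $2^{i-1}$ of the lemma statement cleanly, whereas the last displayed line of the paper's own proof in fact reads $2^{i}$ (the paper's passage from a distance-from-uniform bound to a ``$S_i$ is uniform except on an event of probability $d$'' is loose, and that looseness is where the extra factor enters). Both proofs rest on the same key fact --- that $b_T$ is uniform over $\{0,1\}^n$ after averaging over the $i$th row $a_i$ of $A$ --- and both invoke Lemma~\ref{lemma:xor_good_pa_fct} on the sub-collection $\mathrm{supp}(b)$ with the same justification that the omitted outputs can be absorbed into Eve's variable.

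One step is stated incorrectly and should be repaired, though the overall structure survives. You write that for each nonzero $b$, Lemma~\ref{lemma:xor_good_pa_fct} gives $d(b\odot\bof X\mid\cdot)\le\frac12(4\ep)^{|b|}$ with $\ep$ the \emph{global} average error; but Lemma~\ref{lemma:xor_good_pa_fct} actually gives $\frac12\prod_{j\in\mathrm{supp}(b)}(4\ep_j)$, and this can exceed $\frac12(4\ep)^{|b|}$ when the boxes selected by $b$ happen to have above-average error. The correct order of operations is to average first and bound afterwards:
\begin{equation*}
\mathbb{E}_b\Bigl[\tfrac12\!\!\prod_{j\in\mathrm{supp}(b)}\!\!(4\ep_j)\Bigr]
\;=\;\tfrac12\prod_{j=1}^n\frac{1+4\ep_j}{2}
\;\le\;\tfrac12\Bigl(\frac{1+4\ep}{2}\Bigr)^{n},
\end{equation*}
where the last step is AM--GM (this is precisely the content of Lemma~\ref{lemma:ep_average} in Appendix~\ref{subsec:ep_average}, which the paper's own proof also invokes at this point). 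With that fix your argument is complete and yields the claimed bound.
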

\begin{proof}
Bounding the distance from uniform of $S_i$ given $S_1,\ldots,S_{i-1}$ corresponds to bounding the distance from uniform of 
$S_i$ given all linear combinations over $GF(2)$ of $S_1,\ldots,S_{i-1}$ (see Appendix~\ref{sec:distance_set_given_other_sets}). 
For each linear combination $\bigoplus_{j\in I}S_j$ define the random bit $S_c=c\odot \bof{X}$ where 
$c=\bigoplus_{j\in I}a_j\oplus a_i$ and $a_j$ denotes the $j$�th line of the matrix $A$. Note that $S_c$ is a random linear function 
over $\bof{X}$ (the proof of this is given in Appendix~\ref{sec:lin_com_of_random_vect}). If $S_c$ is uniform and independent of 
$S_1,\ldots,S_{i-1}$, then $S_i$ is uniform given this specific linear combination. However, the distance from uniform and independent of 
$S_c$ is given by Lemma~\ref{lemma:xor_good_pa_fct} (note that Lemma~\ref{lemma:xor_good_pa_fct}  bounds not only the 
distance from uniform of $S_c$ given $Z$, but also given all $X_i$ not included in $S_c$, as these could be included in the variable $Z$). 
We obtain  
\begin{eqnarray}
\nonumber d(c\odot \bof{X} |Z(W),Q)
&\leq & 1/2\cdot \frac{1}{2^n}\sum_{K\subseteq n}\prod_{i\in K}(4\ep_i)
\leq 1/2\cdot  \frac{1}{2^n}\sum_{k=0}^{n}\binom{n}{k}(4\ep)^k
=1/2\cdot \left(\frac{1+4\ep}{2}\right)^n\ ,
\end{eqnarray}
where the second inequality follows from the fact that this expression is 
maximized when all $\ep_i$ are equal (see Appendix~\ref{subsec:ep_average} for a proof of this). If a random variable $S$ has 
distance from uniform at most $d$, then we can define an event $\mathcal{E}$ occuring with probability 
at least $1-d$ such that given $\mathcal{E}$, $S$ is uniform. By the union bound over all 
$2^{i-1}$ 
possible linear combinations of 
$S_1,\ldots,S_{i-1}$, we obtain the probability that $S_i$ is uniform given $S_1,\ldots,S_{i-1}$ and, therefore, the bound on the distance
from uniform 
\begin{eqnarray}
d(S_i|Z(W),Q,S_1,\ldots,S_{i-1})
&\leq& 1/2\cdot 2^i\cdot  \left(\frac{1+4\ep}{2}\right)^n.
\end{eqnarray}
\peon
\end{proof}

Now we can bound the distance from uniform of a key $S:=S_1\ldots S_s$ by Lemma~\ref{lemma:distanceseveralbits} 
and~\ref{lemma:distance_k_th_bit}. 
\begin{lemma}\label{lemma:distance_key_string}
Assume $S:=A\odot \bof{X}$, where $A$ is a $s\times n$-matrix over $GF(2)$ and be 
$P_A$ the uniform distribution over all these matrices. $Q:=(\bof{U}=\bof{u},\bof{V}=\bof{v},A)$. Then
\begin{eqnarray}
d(S|Z(W),Q)
&\leq& 1/2\cdot 
2^{s}
\cdot  \left(\frac{1+4\ep}{2}\right)^n.
\end{eqnarray}
\end{lemma}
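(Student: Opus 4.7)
The plan is to combine the two preceding lemmas, Lemma~\ref{lemma:distanceseveralbits} and Lemma~\ref{lemma:distance_k_th_bit}, essentially via a geometric series summation. This is a short, routine deduction once the heavy lifting done in Section~\ref{sec:xorpa} (the XOR privacy amplification bound) and in Lemma~\ref{lemma:distance_k_th_bit} (the reduction of a single new bit given previous bits to a random linear combination) is in place.

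First I would write $S = S_1 S_2 \cdots S_s$, where $S_i$ is the $i$-th key bit, i.e., the inner product mod $2$ of the $i$-th row of $A$ with $\bof{X}$. Applying Lemma~\ref{lemma:distanceseveralbits} with side information $(Z(W), Q)$ gives the chain-rule bound
\begin{equation*}
d(S \mid Z(W), Q) \;\leq\; \sum_{i=1}^{s} d\bigl(S_i \mid Z(W), Q, S_1, \ldots, S_{i-1}\bigr).
\end{equation*}
Next, for each fixed $i$ I would invoke Lemma~\ref{lemma:distance_k_th_bit}, which bounds the $i$-th summand by $\tfrac{1}{2} \cdot 2^{i-1} \cdot \bigl(\tfrac{1+4\ep}{2}\bigr)^n$.

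Substituting and summing the resulting geometric series,
\begin{equation*}
d(S \mid Z(W), Q) \;\leq\; \frac{1}{2}\left(\frac{1+4\ep}{2}\right)^{\!n} \sum_{i=1}^{s} 2^{i-1} \;=\; \frac{1}{2}\,(2^s - 1)\left(\frac{1+4\ep}{2}\right)^{\!n} \;\leq\; \frac{1}{2}\,2^{s}\left(\frac{1+4\ep}{2}\right)^{\!n},
\end{equation*}
which is exactly the claimed bound.

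There is no genuine obstacle here: both the structural ingredients (chain rule for distance from uniform; the per-bit bound obtained from Lemma~\ref{lemma:xor_good_pa_fct} via a union bound over the $2^{i-1}$ linear combinations of the previously produced bits) have already been established. The only points I would be careful about are that $Q$ is set to include the entire random matrix $A$ (so conditioning on $S_1, \ldots, S_{i-1}$ is genuinely meaningful and Lemma~\ref{lemma:distance_k_th_bit} applies uniformly for every $i$), and that the triangle-inequality step in Lemma~\ref{lemma:distanceseveralbits} is used with the eavesdropper maximizing her input $w$ once for the whole string rather than per bit; but the statements in the excerpt are phrased so that this is already handled inside those lemmas.
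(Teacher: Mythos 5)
Your proof is correct and follows exactly the same route as the paper's: apply Lemma~\ref{lemma:distanceseveralbits} to decompose the distance into a sum over key bits, bound each summand by $\tfrac{1}{2}\cdot 2^{i-1}\cdot\bigl(\tfrac{1+4\ep}{2}\bigr)^n$ via Lemma~\ref{lemma:distance_k_th_bit}, and sum the geometric series to obtain $\tfrac{1}{2}\cdot 2^s\cdot\bigl(\tfrac{1+4\ep}{2}\bigr)^n$.
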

\begin{proof}
By Lemma~\ref{lemma:distanceseveralbits} and~\ref{lemma:distance_k_th_bit}
\begin{eqnarray}
\nonumber 
d(S|Z(W),Q)&\leq& 1/2\cdot  \left(\frac{1+4\ep}{2}\right)^n \cdot 
(\sum_{i=1}^s2^{i-1})
\leq 1/2\cdot  \left(\frac{1+4\ep}{2}\right)^n \cdot 
(\frac{2^s-1}{2-1})
\leq 1/2\cdot 
2^{s}
\cdot  \left(\frac{1+4\ep}{2}\right)^n,
\end{eqnarray}
where the second inequality follows from the expression for geometric series. \pe
\end{proof}

\subsection{Information Reconciliation}

In general, the outputs $\bof{x}$ and $\bof{y}$ of Alice and Bob are not equal but have a certain probability to differ. 
Alice and Bob, therefore, need to do information reconciliation. They can do this the same way they create the key, 
namely by using a random linear code. This follows directly from a result from~\cite{carterwegman} about two-universal sets of 
hash functions and from a result from~\cite{brassardsalvail} about information reconciliation. We restate the theorems below. 
\begin{theorem}[\cite{carterwegman}]
The set of functions $f_A(\bof{x}):=A\odot \bof{x}$, where $A$ is any $n\times m$-matrix over $GF(2)$ is two-universal.
\end{theorem}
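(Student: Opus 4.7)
The plan is to verify two-universality directly from the definition: for any two distinct inputs $\bof{x}_1,\bof{x}_2\in GF(2)^m$, I want to show that the collision probability
\[
\Pr_A[\,f_A(\bof{x}_1)=f_A(\bof{x}_2)\,]\le \frac{1}{|Y|}=\frac{1}{2^n},
\]
where $A$ is drawn uniformly at random from all $n\times m$ matrices over $GF(2)$.

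First I would reduce the collision event to a kernel event by linearity: since $\odot$ is $GF(2)$-linear in the second argument, $f_A(\bof{x}_1)=f_A(\bof{x}_2)$ is equivalent to $A\odot \bof{d}=\bof{0}$, where $\bof{d}:=\bof{x}_1\oplus \bof{x}_2$. By assumption $\bof{d}\neq \bof{0}$, so it suffices to bound $\Pr_A[A\odot \bof{d}=\bof{0}]$ for a fixed nonzero $\bof{d}$.

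Next I would exploit the product structure of the uniform distribution on $A$: the $n$ rows $a_1,\ldots,a_n$ of $A$ are mutually independent and each is uniform on $GF(2)^m$, and $A\odot\bof{d}=\bof{0}$ holds iff $\langle a_i,\bof{d}\rangle=0$ for every $i$. Hence the events decouple and
\[
\Pr_A[A\odot \bof{d}=\bof{0}]=\prod_{i=1}^{n}\Pr_{a_i}[\langle a_i,\bof{d}\rangle=0].
\]
For each row, since $\bof{d}\neq \bof{0}$ there is some coordinate $j$ with $d_j=1$; conditioning on the remaining entries of $a_i$, the bit $a_{i,j}$ determines the parity of $\langle a_i,\bof{d}\rangle$, and $a_{i,j}$ is uniform over $\{0,1\}$. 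Thus each factor equals $1/2$.

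Combining the two observations gives $\Pr_A[\,f_A(\bof{x}_1)=f_A(\bof{x}_2)\,]=(1/2)^n=1/|Y|$, which is exactly the two-universality bound. There is no real obstacle here; the only things to keep straight are that linearity over $GF(2)$ turns the collision condition into membership in $\ker(A)$, and that independence of the rows of $A$ converts the joint kernel condition into a product of balanced single-bit events.
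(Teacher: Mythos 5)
Your proof is correct. The paper does not actually prove this theorem --- it is imported verbatim from Carter and Wegman with a citation --- and your argument (reduce the collision event to $A\odot\bof{d}=\bof{0}$ for $\bof{d}=\bof{x}_1\oplus\bof{x}_2\neq\bof{0}$, then use independence of the uniformly random rows and the fact that each row is orthogonal to a fixed nonzero vector with probability exactly $1/2$) is the standard and complete justification, yielding collision probability exactly $2^{-n}=1/|Y|$.
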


\begin{theorem}[\cite{brassardsalvail}]\label{th:ir}
 Suppose an $n$-bit string $\bof{x}$ another $n$-bit string $\bof{y}$ obtained by sending $\bof{x}$ over a binary symmetric 
 channel with error parameter $\delta$. Assume the 
 function $f:\{0,1\}^n\rightarrow \{0,1\}^m$ is chosen at random amongst a set of two-universal functions.  
 Choose $\bof{y'}$ such that $d_H(\bof{y},\bof{y'})$ is minimal among all strings $\bof{r}$ with $f(\bof{r})=f(\bof{x})$. 
 Then $P_{\bof{x}\neq \bof{y'}}\leq 1-e^{-2^{n\cdot h(\delta+\epsilon)-m}}+\frac{(\log{n})^2\ep(1-\delta)}{n}$.
\end{theorem}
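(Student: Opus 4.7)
The plan is to decompose the failure event into two parts: the channel noise being atypically large, and the two-universal hash $f$ having a spurious preimage within the Hamming ball around $\bof{y}$. I would introduce the event $T := \{d_H(\bof{x},\bof{y}) \leq (\delta+\ep)n\}$ that the channel behaves typically, and then write
\[
P[\bof{x}\neq\bof{y'}] \;\leq\; P[\neg T] + P[\bof{x}\neq\bof{y'}\mid T].
\]

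For $P[\neg T]$, the coordinates of $\bof{x}\oplus\bof{y}$ are i.i.d.\ Bernoulli$(\delta)$, so a concentration inequality (Chernoff/Hoeffding, or a Chebyshev-type bound tuned for this purpose) controls the probability of the empirical error exceeding its mean by $\ep$. After tying the deviation parameter to a logarithmic factor in $n$ and absorbing lower-order terms, one recovers an expression of shape $\tfrac{(\log n)^2\ep(1-\delta)}{n}$, which is exactly the second additive term in the statement.

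On the event $T$, Bob's chosen $\bof{y'}$ coincides with $\bof{x}$ unless there exists some other $\bof{r}\in\{0,1\}^n$ with $f(\bof{r})=f(\bof{x})$ and $d_H(\bof{r},\bof{y})\leq d_H(\bof{x},\bof{y})\leq(\delta+\ep)n$. The number of candidate $\bof{r}$ is at most the Hamming-ball volume, which by the standard entropy bound satisfies $V\leq 2^{n\cdot h(\delta+\ep)}$. Two-universality of $f$ yields, for every fixed $\bof{r}\neq\bof{x}$, that $P_f[f(\bof{r})=f(\bof{x})]\leq 2^{-m}$. Exploiting that the collision events across distinct candidates factor sufficiently well for the linear two-universal family at hand (since the differences $\bof{r}\oplus\bof{x}$ form a structured set on which the random linear map acts cleanly), one obtains
\[
P[\text{some collision}\mid T] \;\leq\; 1-(1-2^{-m})^V \;\leq\; 1-e^{-V/2^m} \;\leq\; 1-e^{-2^{n h(\delta+\ep)-m}},
\]
which is the first term. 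Combining the two estimates by the union bound finishes the proof.

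The main obstacle is justifying the exact $1-e^{-V/2^m}$ form rather than the weaker union-bound $V\cdot 2^{-m}$: bare two-universality only directly yields the latter, and the product form $(1-2^{-m})^V$ demands either extra structure of the family (so that the collision events are negatively correlated or essentially independent) or a careful Poissonization argument. A secondary technical point is choosing the slack $\ep$ and the concentration inequality so that the error bound on $\neg T$ lands in the precise form $\tfrac{(\log n)^2\ep(1-\delta)}{n}$; this is routine bookkeeping but is where the non-canonical $(\log n)^2$ factor arises, and care is needed so that the calibration in both halves of the argument is consistent.
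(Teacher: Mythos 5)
First, a point of reference: the paper does not prove this statement at all --- Theorem~\ref{th:ir} is imported verbatim, with citation, from Brassard and Salvail, so there is no internal proof to compare yours against. Your high-level decomposition (an atypicality event for the channel plus a hash-collision event inside the Hamming ball of radius $(\delta+\ep)n$ around $\bof{y}$) is the standard route to results of this type, and it does correctly yield a bound of the shape $2^{n h(\delta+\ep)-m} + 2e^{-2n\ep^2}$, which is all the surrounding text (Lemmas~\ref{lemma:distance_key_with_ir} and~\ref{lemma:keyrate}) actually needs. It does not, however, deliver the theorem as stated, and two of your steps are genuinely broken.

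For the collision term, the chain $1-(1-2^{-m})^{V}\le 1-e^{-V/2^m}$ is reversed: since $1-x\le e^{-x}$, one has $(1-2^{-m})^{V}\le e^{-V2^{-m}}$ and hence $1-(1-2^{-m})^{V}\ge 1-e^{-V2^{-m}}$. So even if the product form were available, the final inequality goes the wrong way. Worse, as you yourself note, bare two-universality only gives the union bound $V\cdot 2^{-m}$, which is \emph{larger} than $1-e^{-V2^{-m}}$; since the claimed bound genuinely improves on the union bound, the independence (or negative correlation) of the collision events is not a technicality to be deferred but the entire content of that term, and you have not supplied it (for a two-universal but not uniformly random family it may simply be false).

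For the second term, you propose to read $\frac{(\log n)^2\ep(1-\delta)}{n}$ as the probability that $d_H(\bof{x},\bof{y})>(\delta+\ep)n$. But any concentration bound for that event is \emph{decreasing} in $\ep$ --- Chebyshev gives $\delta(1-\delta)/(n\ep^2)$, Chernoff gives $e^{-\Theta(n\ep^2)}$ --- whereas the term in the statement is \emph{increasing} in $\ep$. No amount of calibration reconciles these monotonicities, so your identification of that term with the atypicality event cannot be correct; it must originate elsewhere in Brassard--Salvail's argument. In short, the skeleton is right and proves a serviceable weaker bound, but both exact terms of the stated theorem remain unestablished.
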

The above theorems show that in the limit of large $n$, $m=\lceil n\cdot h(\delta)\rceil$ (where $\delta$ is the probability that Bob's 
bit is different from Alice's and $h$ the binary entropy function),  is both necessary and sufficient for Bob to correct the 
errors in his raw key, i.e., 
the protocol is $\epsilon'$-correct for any $\epsilon'>0$.

If Alice and Bob communicate $m$ bits during the information reconciliation phase, then the security of the key after information 
reconciliation can be calculated by replacing in Lemma~\ref{lemma:distance_key_string} the length of the key by the length of the key 
plus information reconciliation, i.e., $s\mapsto s+m$ and we obtain the following lemma.
\begin{lemma}\label{lemma:distance_key_with_ir}
Assume $[S,R]:=A\odot \bof{X}$,where $A$ is a $(s+m)\times n$-matrix over $GF(2)$ and be $P_A$ the uniform distribution 
over all these matrices. $Q:=(\bof{U}=\bof{u},\bof{V}=\bof{v},A)$. Then
\begin{eqnarray}
 d(S|Z(W),Q,R)&\leq&  1/2\cdot 
2^{s+m}
 \cdot  \left(\frac{1+4\ep}{2}\right)^n.
\end{eqnarray}
\end{lemma}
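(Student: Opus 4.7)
The plan is to reduce the claim to Lemma~\ref{lemma:distance_key_string} by viewing the concatenation $[S, R] = A \odot \bof{X}$ as itself a single uniformly random linear image of $\bof{X}$, now of output length $s + m$. First I would apply Lemma~\ref{lemma:distance_key_string} with this combined output (i.e.\ of length $s+m$ in place of $s$), obtaining
\[
d\bigl([S,R] \mid Z(W), Q\bigr) \;\leq\; \tfrac{1}{2} \cdot 2^{s+m} \cdot \left(\tfrac{1+4\ep}{2}\right)^n.
\]

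The remaining step is to move $R$ from the key side to the side-information side of the distance-from-uniform functional, i.e.\ to show
\[
d\bigl(S \mid Z(W), Q, R\bigr) \;\leq\; d\bigl([S,R] \mid Z(W), Q\bigr).
\]
The intuition is that in the ideal reference system $[S,R]$ is uniform and independent of $(Z,Q)$, hence $S$ is uniform and independent of $(R, Z, Q)$ as well, so handing $R$ to Eve as additional view cannot increase her distinguishing advantage. I would make this precise by unfolding both functionals according to their definitions and rewriting the integrand of the left-hand side as $|P_{S,R,Z,Q\mid W=w}(s,r,z,q) - 2^{-s}\, P_{R,Z,Q\mid W=w}(r,z,q)|$, then comparing it via the triangle inequality --- routed through the reference point $2^{-(s+m)}\, P_{Z,Q\mid W=w}(z,q)$ --- to the integrand of the right-hand side.

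A second, essentially equivalent route --- closer in spirit to the authors' remark that one simply ``replace $s$ by $s+m$'' in Lemma~\ref{lemma:distance_key_string} --- is to redo that proof verbatim with $R$ carried along throughout as extra conditioning. Concretely, applying Lemma~\ref{lemma:distanceseveralbits} yields
\[
d(S \mid Z(W), Q, R) \;\leq\; \sum_{i=1}^{s} d\bigl(S_i \mid Z(W), Q, R, S_1, \ldots, S_{i-1}\bigr),
\]
and the union-bound argument of Lemma~\ref{lemma:distance_k_th_bit} now ranges over all $2^{m+i-1}$ linear combinations of the $m + i - 1$ previously revealed bits $(R, S_1, \ldots, S_{i-1})$, so each summand is at most $\tfrac{1}{2} \cdot 2^{m+i-1} \cdot ((1+4\ep)/2)^n$; summing the resulting geometric series gives $\tfrac{1}{2}\cdot(2^{s+m}-2^m)\cdot((1+4\ep)/2)^n$, which is bounded by the claimed $\tfrac{1}{2}\cdot 2^{s+m}\cdot((1+4\ep)/2)^n$. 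The main obstacle in either route is the bookkeeping around the free maximum over $w$ inside the distance-from-uniform functional, ensuring it commutes properly with the triangle inequality (first route) or the per-bit union bound (second route); but this is exactly the issue already handled by the corresponding steps in Lemmas~\ref{lemma:distanceseveralbits} and~\ref{lemma:distance_k_th_bit}, and it causes no new difficulty here.
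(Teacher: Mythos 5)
Your second route is correct and is what the paper intends by ``replace $s$ by $s+m$'': carry $R$ as additional conditioning through Lemmas~\ref{lemma:distanceseveralbits} and~\ref{lemma:distance_k_th_bit}. The decomposition $d(S\mid Z(W),Q,R)\leq\sum_{i=1}^{s} d(S_i\mid Z(W),Q,R,S_1,\ldots,S_{i-1})$ holds verbatim, the union bound in Lemma~\ref{lemma:distance_k_th_bit} then runs over the $2^{m+i-1}$ linear combinations of the $m+i-1$ already-revealed bits $(R,S_1,\ldots,S_{i-1})$, and the geometric series gives $\tfrac{1}{2}\cdot 2^m(2^s-1)\cdot((1+4\ep)/2)^n\leq\tfrac{1}{2}\cdot 2^{s+m}\cdot((1+4\ep)/2)^n$. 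Keep this route.

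Your first route, however, rests on the inequality $d(S\mid Z(W),Q,R)\leq d([S,R]\mid Z(W),Q)$, and that inequality is false in general. Unfolding the definitions, the left-hand functional compares $P_{SRZQ\mid W=w}$ against the reference $2^{-s}\,P_{RZQ\mid W=w}$, which carries the \emph{true} marginal of $R$; the right-hand functional compares against $2^{-(s+m)}\,P_{ZQ\mid W=w}$, which carries a uniform $R$. Routing the triangle inequality through the latter reference, as you propose, only yields $d(S\mid Z(W),Q,R)\leq d([S,R]\mid Z(W),Q)+d(R\mid Z(W),Q)\leq 2\,d([S,R]\mid Z(W),Q)$, a factor of two too weak for the stated bound. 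And the factor of two is not an artifact of the proof technique: for $|{\cal S}|=4$, $|{\cal R}|=2$, and trivial $Z$, $Q$, $W$, take $P_{SR}(\cdot,0)=(0.1,\,0.1,\,0.1,\,0.3)$ and $P_{SR}(\cdot,1)=(0.13,\,0.13,\,0.13,\,0.01)$; a direct computation gives $d(S\mid R)=0.24$ while $d([S,R])=0.19$, so the inequality you need is violated outright. Your intuition that ``handing $R$ to Eve cannot increase her advantage because in the ideal $S$ is independent of $R$'' correctly describes the distance to \emph{that} ideal --- but that distance is exactly $d([S,R]\mid Z(W),Q)$, not $d(S\mid Z(W),Q,R)$; the latter uses a data-dependent reference and is not monotone under this marginalization. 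Drop route one and present route two.
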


\subsection{Key Rate}

The key rate is the length of the key divided by the number of boxes used in the limit of a large number of boxes. Because we only need a 
small number of boxes for parameter estimation (see Appendix~\ref{sec:parameter_estimation}), this will asymptotically 
correspond to $q:=s/n$. 
From Lemma~\ref{lemma:distance_key_with_ir}, we can calculate the key rate by setting $m:=h(\delta)\cdot n$ 
(see also Protocol~\ref{prot} in Section~\ref{sec:protocol} for a detailed description of the protocol).
\begin{lemma}\label{lemma:keyrate}
The protocol reaches a key rate $q$ of
\begin{eqnarray}
q=1- h(\delta)-\log_2 (1+4\ep).
\end{eqnarray}
\end{lemma}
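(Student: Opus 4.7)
The plan is to combine the privacy-amplification bound of Lemma~\ref{lemma:distance_key_with_ir} with the information-reconciliation cost from Theorem~\ref{th:ir}, require the resulting distance from uniform to be asymptotically negligible, and then solve for the largest admissible $s$ as a function of $n$.

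First I would recall that correctness, by Theorem~\ref{th:ir}, forces the number of reconciliation bits to satisfy $m=\lceil n\cdot h(\delta+\nu)\rceil$ for arbitrary small $\nu>0$, so asymptotically $m/n\to h(\delta)$. Plugging this into Lemma~\ref{lemma:distance_key_with_ir} gives
\begin{equation*}
d(S\mid Z(W),Q,R)\ \leq\ \tfrac{1}{2}\cdot 2^{s+m}\cdot\left(\tfrac{1+4\ep}{2}\right)^{n}
\ =\ \tfrac{1}{2}\cdot 2^{\,s+m-n\,+\,n\log_2(1+4\ep)}.
\end{equation*}
For the key to be $\epsilon'$-secure with $\epsilon'\to 0$, the exponent must tend to $-\infty$, i.e.\ we need
\begin{equation*}
s + m\ <\ n\bigl(1-\log_2(1+4\ep)\bigr)\ -\ \omega(1).
\end{equation*}

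Next I would substitute $m\sim n\,h(\delta)$ and solve for $s$: the maximal admissible $s$ satisfies
\begin{equation*}
s\ <\ n\bigl(1-h(\delta)-\log_2(1+4\ep)\bigr)\ -\ \omega(1).
\end{equation*}
Dividing by $n$ and letting $n\to\infty$, the parameter-estimation overhead (Appendix~\ref{sec:parameter_estimation}) contributes only $o(n)$ boxes and is therefore asymptotically negligible, so
\begin{equation*}
q\ =\ \lim_{n\to\infty}\frac{s}{n}\ =\ 1-h(\delta)-\log_2(1+4\ep),
\end{equation*}
as claimed. Conversely, choosing $s=\lfloor n(1-h(\delta)-\log_2(1+4\ep)-\nu)\rfloor$ for any $\nu>0$ yields a distance bound that decays exponentially in $n$, so this rate is actually attained.

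The main obstacle is really only bookkeeping rather than a conceptual difficulty: one must verify that the asymptotic statement remains clean when the $\omega(1)$ slack needed for security and the $o(n)$ sample size for parameter estimation are both absorbed. Given that the heavy lifting (the XOR-bound of Lemma~\ref{lemma:xor_good_pa_fct}, its extension to several bits in Lemma~\ref{lemma:distance_key_string}, and the reconciliation accounting giving Lemma~\ref{lemma:distance_key_with_ir}) has already been done, the proof reduces to this elementary rearrangement of the exponents, and the positivity of $q$ in the claimed parameter region is then just the condition $h(\delta)+\log_2(1+4\ep)<1$.
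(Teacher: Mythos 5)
Your proposal is correct and follows essentially the same route as the paper: the paper's proof is just the one-line observation that, substituting $s=qn$ and $m=h(\delta)\cdot n$ into Lemma~\ref{lemma:distance_key_with_ir}, the bound becomes $\tfrac12\bigl(2^{h(\delta)}\cdot 2^{q}\cdot\tfrac{1+4\ep}{2}\bigr)^{n}$, which vanishes exactly when $q<1-h(\delta)-\log_2(1+4\ep)$. Your version merely spells out the same exponent bookkeeping (and the asymptotic negligibility of the slack and the parameter-estimation sample) more explicitly.
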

\begin{proof}
 From Lemma~\ref{lemma:distance_key_with_ir} and by the definition of the key rate, we can see that the protocol reaches a key 
 rate $q$ if
\begin{eqnarray}
\nonumber 2^{h(\delta)}\cdot 2^q \cdot \frac{1+4\ep}{2}<1. 
\end{eqnarray}
\peon
\end{proof}

Corollary~\ref{corollary:positive_key_rate} states for which parameters key agreement is possible (see Figure~\ref{fig:key_quantum_area}).
\begin{corollary}\label{corollary:positive_key_rate}
 The protocol reaches a positive key rate if $\ep<2^{-h(\delta)-1}-1/4$.
\end{corollary}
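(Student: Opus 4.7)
The plan is to read off the corollary directly from Lemma~\ref{lemma:keyrate} by solving the inequality $q > 0$ for $\ep$. Since all the heavy lifting (information reconciliation cost $h(\delta)$, privacy amplification loss $\log_2(1+4\ep)$, and the reduction of the key rate to $s/n$ in the limit) has already been packaged into the expression
\[
q = 1 - h(\delta) - \log_2(1 + 4\ep),
\]
the remaining work is purely algebraic: there is no real obstacle beyond a clean manipulation.

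First I would state that positivity of the key rate means $1 - h(\delta) - \log_2(1+4\ep) > 0$, which is equivalent to $\log_2(1+4\ep) < 1 - h(\delta)$. Exponentiating base $2$ (a strictly monotone operation, so the inequality direction is preserved) yields $1 + 4\ep < 2^{1 - h(\delta)} = 2 \cdot 2^{-h(\delta)}$. Subtracting $1$ and dividing by $4$ gives the stated threshold
\[
\ep < \frac{2 \cdot 2^{-h(\delta)} - 1}{4} = 2^{-h(\delta) - 1} - \tfrac{1}{4}.
\]

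One minor sanity check I would include: the condition is nonvacuous precisely when $2^{-h(\delta)-1} > 1/4$, i.e.\ $h(\delta) < 1$, i.e.\ $\delta < 1/2$, which is consistent with the fact that $\delta = 1/2$ corresponds to uncorrelated outputs between Alice and Bob and certainly cannot yield a key. At $\delta = 0$ the bound reduces to $\ep < 1/4$, matching the local limit for CHSH non-locality identified in Section~\ref{nis}. These remarks are optional but reassure the reader that the bound behaves sensibly at the boundaries. There is no hard step here; the corollary is just a rewriting of Lemma~\ref{lemma:keyrate}.
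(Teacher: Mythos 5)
Your derivation is correct and matches the paper's implicit reasoning: the corollary is stated without a separate proof precisely because it is just the algebraic rearrangement of $q = 1 - h(\delta) - \log_2(1+4\ep) > 0$ from Lemma~\ref{lemma:keyrate}, which you carry out cleanly. The boundary sanity checks ($\delta=0$ gives $\ep<1/4$, $\delta=1/2$ gives a vacuous bound) are a nice touch but not needed.
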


If the boxes have the same error for all inputs ($\delta=\ep$) then $m:=n\cdot h(\ep)$ and the protocol does not reach a 
positive secret key rate for $\ep=\frac{1+\sqrt{2}}{4}$, the minimum value reachable by quantum mechanics. To reach a positive key 
rate using quantum mechanics, Alice and Bob will, therefore, need to use different boxes, as described in the next section. 

\subsection{The Quantum Regime}

To get a positive key rate in the quantum regime, Alice and Bob use a box which gives highly correlated output bits given input $(0,0)$ (see 
Figure~\ref{eq:box_for_key_gen}) and generate their raw key only from these outputs.
\footnote{Another way to reach a positive 
key rate in the quantum regime is to use a type of non-locality characterized by a different Bell inequality allowing for a 
higher violation in the quantum regime. See~\cite{lluis} for details.} 
The parameter limiting 
Eve's knowledge  is then still $\ep=1/4\cdot \sum_{x\oplus y\neq u\cdot v}P_{XY|UV}(x,y,u,v)$, 
the parameter defining the amount of information reconciliation necessary is, however, the error in the correlation given 
input $(0,0)$ ($\delta$ in Figure~\ref{eq:box_for_key_gen}). 
Note that in a noiseless setting the distribution described in black font can be achieved by measuring a singlet 
state (see Protocol~\ref{prot} below). In that case, Alice and Bob will have perfectly correlated bits (and therefore would not need 
to do 
any information reconciliation), and the parameter limiting Eve's knowledge is $\ep=0.1875$.  
The parameters $\delta$ and $\epsilon$ (in light gray font in Figure~\ref{eq:box_for_key_gen}) are introduced to 
account for the noise in the state and/or measurement. 
\begin{figure}[ht]
\centering
\psset{unit=0.525cm}
\pspicture*[](-2,-1)(8.5,10)
\psline[linewidth=0.5pt]{-}(0,6)(-1,7)
\rput[c]{0}(-0.25,6.75){\normalsize{$X$}}
\rput[c]{0}(-0.75,6.25){\normalsize{$Y$}}
\rput[c]{0}(-0.5,7.5){\Large{$U$}}
\rput[c]{0}(-1.5,6.5){\Large{$V$}}
\rput[c]{0}(2,7.5){\Large{$0$}}
\rput[c]{0}(6,7.5){\Large{$1$}}
\rput[c]{0}(1,6.5){\Large{$0$}}
\rput[c]{0}(3,6.5){\Large{$1$}}
\rput[c]{0}(5,6.5){\Large{$0$}}
\rput[c]{0}(7,6.5){\Large{$1$}}
\rput[c]{0}(-1.5,4.5){\Large{$0$}}
\rput[c]{0}(-1.5,1.5){\Large{$1$}}
\rput[c]{0}(-0.5,5.25){\Large{$0$}}
\rput[c]{0}(-0.5,3.75){\Large{$1$}}
\rput[c]{0}(-0.5,2.25){\Large{$0$}}
\rput[c]{0}(-0.5,0.75){\Large{$1$}}
\psline[linewidth=2pt]{-}(-1,0)(8,0)
\psline[linewidth=2pt]{-}(-1,6)(8,6)
\psline[linewidth=2pt]{-}(-1,3)(8,3)
\psline[linewidth=1pt]{-}(0,1.5)(8,1.5)
\psline[linewidth=1pt]{-}(0,4.5)(8,4.5)
\psline[linewidth=2pt]{-}(0,0)(0,7)
\psline[linewidth=2pt]{-}(8,0)(8,7)
\psline[linewidth=2pt]{-}(4,0)(4,7)
\psline[linewidth=1pt]{-}(2,0)(2,6)
\psline[linewidth=1pt]{-}(6,0)(6,6)
\rput[c]{0}(1,5.25){\Large{$\frac{1}{2}\textcolor{gray}{-\frac{\delta}{2}} $}}
\rput[c]{0}(3,3.75){\Large{$\frac{1}{2}\textcolor{gray}{-\frac{\delta}{2}} $}}
\rput[c]{0}(5,5.25){\Large{$\frac{3}{8}\textcolor{gray}{-\frac{\epsilon}{2}} $}}
\rput[c]{0}(7,3.75){\Large{$\frac{3}{8}\textcolor{gray}{-\frac{\epsilon}{2}} $}}
\rput[c]{0}(1,2.25){\Large{$\frac{3}{8}\textcolor{gray}{-\frac{\epsilon}{2}} $}}
\rput[c]{0}(3,0.75){\Large{$\frac{3}{8}\textcolor{gray}{-\frac{\epsilon}{2}} $}}
\rput[c]{0}(5,0.75){\Large{$\frac{3}{8}\textcolor{gray}{-\frac{\epsilon}{2}} $}}
\rput[c]{0}(7,2.25){\Large{$\frac{3}{8}\textcolor{gray}{-\frac{\epsilon}{2}} $}}
\rput[c]{0}(3,5.25){\Large{$\textcolor{gray}{\frac{\delta}{2}} $}}
\rput[c]{0}(1,3.75){\Large{$\textcolor{gray}{\frac{\delta}{2}} $}}
\rput[c]{0}(7,5.25){\Large{$\frac{1}{8}\textcolor{gray}{+\frac{\epsilon}{2}} $}}
\rput[c]{0}(5,3.75){\Large{$\frac{1}{8}\textcolor{gray}{+\frac{\epsilon}{2}} $}}
\rput[c]{0}(3,2.25){\Large{$\frac{1}{8}\textcolor{gray}{+\frac{\epsilon}{2}} $}}
\rput[c]{0}(1,0.75){\Large{$\frac{1}{8}\textcolor{gray}{+\frac{\epsilon}{2}} $}}
\rput[c]{0}(5,2.25){\Large{$\frac{1}{8}\textcolor{gray}{+\frac{\epsilon}{2}} $}}
\rput[c]{0}(7,0.75){\Large{$\frac{1}{8}\textcolor{gray}{+\frac{\epsilon}{2}} $}}
\endpspicture
\caption{The quantum box used for key agreement.}
\label{eq:box_for_key_gen}
\end{figure}

\subsection{The Protocol}\label{sec:protocol}

In the following we give a detailed description of  our key agreement protocol. 
\begin{protocol}\label{prot}\ 
\begin{enumerate}
 \item Alice creates $n+k$ maximally entangled states $\ket{\Psi^-}=\frac{1}{\sqrt{2}}(\ket{01}-\ket{10})$, for some $k=\Theta(n)$, and sends one Qbit of every state to Bob.
 \item Alice and Bob randomly measure the $i$'th system in either the basis $U_0$ or $U_1$ (for Alice) or $V_0$ and $V_1$ (Bob); the four bases are shown
 in Figure~\ref{basen}.
All the $2(n+k)$ measurement events are {\em pairwise space-like separated\/}. 
 \item They randomly choose
$n$ of the measurement results when both measured $U_0,V_0$ to form the raw key.
\item For the remaining $k$ measurements they announce the results over the public authenticated channel and estimate the parameters $\ep$ and $\delta$ (see Appendix~\ref{sec:parameter_estimation}). They also check whether they have obtained roughly the same number of $1$'s and $0$'s (for the information reconciliation scheme).
If the parameters are such that key agreement is possible (Figure~\ref{fig:key_quantum_area}) they continue; otherwise they abort.
\item Information reconciliation and privacy amplification: Alice randomly chooses a $(m+s)\times n$-matrix $A$ such that $p(0)=p(1)=1/2$ for all entries and $m:=\lceil n\cdot h(\delta)\rceil$. She calculates
 $A\odot \bof{x}$ (where $\bof{x}$ is Alice's raw key) and sends the first $m$ bits to Bob over the public authenticated 
 channel. The remaining bits form the key. 
\end{enumerate}
\end{protocol}

\begin{figure}[ht]
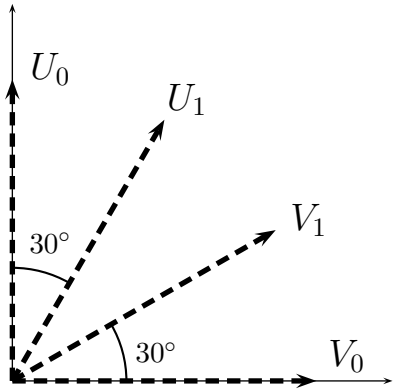

\centering
\pspicture*[](-0.5,-0.5)(5.5,5.5)
\psarc(0,0){1.5}{60}{90}
\rput[B]{0}(0.5,1.7){\normalsize{$30^{\circ}$}}
\psarc(0,0){1.5}{0}{30}
\rput[B]{0}(1.9,0.25){\normalsize{$30^{\circ}$}}
\rput[B]{0}(0.5,4){\Large{$U_0$}}
\rput[br]{90}(0,0){\psline[linewidth=0.5pt]{->}(0,0)(5,0)}
\rput[br]{0}(0,0){\psline[linewidth=0.5pt]{->}(0,0)(5,0)}
\rput[B]{0}(2.3,3.6){\Large{$U_1$}}
\rput[br]{90}(0,0){\psline[linewidth=2pt,linestyle=dashed]{->}(0,0)(4,0)}
\rput[br]{60}(0,0){\psline[linewidth=2pt,linestyle=dashed]{->}(0,0)(4,0)}
\rput[B]{0}(4.4,0.2){\Large{$V_0$}}
\rput[br]{30}(0,0){\psline[linewidth=2pt,linestyle=dashed]{->}(0,0)(4,0)}
\rput[br]{0}(0,0){\psline[linewidth=2pt,linestyle=dashed]{->}(0,0)(4,0)}
\rput[B]{0}(3.9,2){\Large{$V_1$}}
\endpspicture
\vspace{-0.5cm}
\caption{Alice's and Bob's measurement bases in terms of polarization.}
\label{basen}
\end{figure}

Lemma~\ref{lemma:distance_key_with_ir} and~\ref{lemma:keyrate} imply that this protocol allows for secure key agreement, as 
stated in the following theorem. 
\begin{theorem}\label{th:main}
Protocol~\ref{prot} achieves a positive secret-key-generation rate as soon as 
the parameter estimation shows an approximation of 
PR boxes with an accuracy exceeding $80\%$ and a correlation of the outputs on input $(0,0)$ higher than $98\%$. 
 There exists an event ${\cal A}$ with probability ${\rm Prob\, }[{\cal A}]=2^{-\Omega(n)}$
such that given ${\cal A}$ does not occur and the protocol is not aborted, then Alice and Bob share a common key  that is 
perfectly secret, where this secrecy is based 
solely on the non-signaling condition.
\end{theorem}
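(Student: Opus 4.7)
The plan is to assemble the theorem from three essentially independent pieces: (i) the positivity of the key rate for the stated parameter regime, (ii) exponential concentration of the parameter-estimation phase, and (iii) a union bound over the failure events coming from estimation, information reconciliation, and privacy amplification.

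First, I would verify the positive-rate claim. By Corollary~\ref{corollary:positive_key_rate}, the rate is positive whenever $\ep < 2^{-h(\delta)-1}-1/4$. Plugging in $\ep < 0.20$ (i.e.\ PR-box accuracy exceeding $80\%$) and $\delta < 0.02$ (i.e.\ correlation on input $(0,0)$ exceeding $98\%$) one computes $2^{-h(0.02)-1}-1/4 > 0.20$, so the hypothesis of Corollary~\ref{corollary:positive_key_rate} is satisfied and Lemma~\ref{lemma:keyrate} delivers a strictly positive rate $q = 1-h(\delta)-\log_2(1+4\ep) > 0$. This fixes the choice $s = \lceil q n \rceil$ and $m = \lceil n\, h(\delta)\rceil$ in Protocol~\ref{prot}.

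Second, I would handle the failure events. Let $\mathcal{A}_{\mathrm{est}}$ be the event that the empirical CHSH-violation rate or the empirical $(0,0)$-correlation on the $k=\Theta(n)$ test rounds deviates from the true averages by more than some small constant margin; by a standard Hoeffding bound (as made precise in Appendix~\ref{sec:parameter_estimation}) one has $\Pr[\mathcal{A}_{\mathrm{est}}] = 2^{-\Omega(n)}$. On the complement of $\mathcal{A}_{\mathrm{est}}$, whenever the protocol does not abort the true parameters $(\ep,\delta)$ lie in the region where Lemma~\ref{lemma:keyrate} and Corollary~\ref{corollary:positive_key_rate} apply. Next, for information reconciliation, Theorem~\ref{th:ir} (applied with the two-universal family of random linear maps) gives an event $\mathcal{A}_{\mathrm{IR}}$ with $\Pr[\mathcal{A}_{\mathrm{IR}}]=2^{-\Omega(n)}$ outside of which Bob's reconstruction $S_B$ equals Alice's key $S_A$. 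Finally, Lemma~\ref{lemma:distance_key_with_ir} yields
\begin{equation*}
d(S\mid Z(W),Q,R) \;\leq\; \tfrac12 \cdot 2^{s+m}\!\left(\tfrac{1+4\ep}{2}\right)^{n} \;=\; 2^{-\Omega(n)},
\end{equation*}
since the exponent $(s+m)-n\bigl(1-\log_2(1+4\ep)\bigr) = n\bigl(q+h(\delta)+\log_2(1+4\ep)-1\bigr)$ is strictly negative by Lemma~\ref{lemma:keyrate}. By the standard fact that a classical distribution within variational distance $\eta$ of the ideal (uniform, independent) distribution can be coupled with it so that the two agree except on an event of probability at most $\eta$, there is an event $\mathcal{A}_{\mathrm{PA}}$ with $\Pr[\mathcal{A}_{\mathrm{PA}}]=2^{-\Omega(n)}$ such that, outside of $\mathcal{A}_{\mathrm{PA}}$, the key $S_A$ is uniformly distributed and independent of everything Eve holds — i.e.\ \emph{perfectly} secret.

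Setting $\mathcal{A} := \mathcal{A}_{\mathrm{est}}\cup\mathcal{A}_{\mathrm{IR}}\cup\mathcal{A}_{\mathrm{PA}}$, the union bound gives $\Pr[\mathcal{A}] = 2^{-\Omega(n)}$, and on $\neg\mathcal{A}$ together with non-aborting we have $S_A = S_B$, perfectly uniform, and independent of $(Z,W,Q,R)$, with the entire argument having invoked only the $(2n+1)$-party non-signaling condition of Condition~\ref{condition:ns2}. The principal technical obstacle I anticipate is making the parameter-estimation step fully rigorous in the non-signaling setting, where Eve may correlate the rounds: one cannot simply appeal to the i.i.d.\ Hoeffding inequality, but has to use a martingale concentration argument (Azuma) for the randomly chosen test positions, which is carried out in Appendix~\ref{sec:parameter_estimation}. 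Once that concentration is in hand, the remaining work is routine composition of the three failure events together with the already-proved Lemma~\ref{lemma:distance_key_with_ir}.
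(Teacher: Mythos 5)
Your decomposition — (i) verify the numerical positivity via Corollary~\ref{corollary:positive_key_rate}, (ii) concentration of the parameter-estimation phase, (iii) reduce ``perfect secrecy except on a small event'' to the distance bound of Lemma~\ref{lemma:distance_key_with_ir} via the standard coupling argument, then union-bound — is precisely the chain the paper relies on (its proof of Theorem~\ref{th:main} is just a pointer to Lemmas~\ref{lemma:distance_key_with_ir} and~\ref{lemma:keyrate}), and your arithmetic at $\ep<0.20$, $\delta<0.02$ checks out. One small correction: the paper's Appendix~\ref{sec:parameter_estimation} does not invoke Azuma; it uses a finite-population sampling bound for \emph{symmetric} (exchangeable) joint distributions due to K\"onig and Renner, which achieves the same $2^{-\Omega(n)}$ decay without i.i.d.\ assumptions — but since you explicitly deferred the details to that appendix, this does not affect the correctness of your argument.
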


\noindent
The above protocol  also allows for traditional entanglement-based quantum key agreement~\cite{ekert}. Therefore, we have the following. 

\begin{corollary}
Protocol~\ref{prot} allows for efficient information-theoretic key agreement if {\em  quantum OR relativity 
theory\/} is correct.  
\end{corollary}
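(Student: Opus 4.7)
The plan is to handle the two disjuncts of the hypothesis separately. For the ``relativity'' case, Theorem~\ref{th:main} already does the work; for the ``quantum'' case, Protocol~\ref{prot} is at the same time an instance of Ekert's entanglement-based key-distribution scheme~\cite{ekert}, whose security under quantum mechanics can be established by standard arguments. In both cases one must check (i) that the parameter estimation step passes with overwhelming probability (completeness) and (ii) that the extracted key is $2^{-\Omega(n)}$-close to ideal (security); the two cases just supply different physical reasons for (ii).

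First, I would suppose relativity is correct. The space-like separation of the $2(n+k)$ measurement events, required by step~2 of Protocol~\ref{prot}, enforces non-signaling between every pair of the $2n+1$ marginal interfaces (Alice's $n$ measurement interfaces, Bob's $n$, and Eve's one). This is exactly the $(2n+1)$-partite non-signaling condition that forms the hypothesis of Theorem~\ref{th:main}, so whenever parameter estimation succeeds the theorem directly yields an efficient protocol producing a key that is $2^{-\Omega(n)}$-close to ideal, based solely on non-signaling.

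Second, I would suppose quantum mechanics is correct. A direct calculation shows that measuring the singlet state $\ket{\Psi^-}$ in the bases of Figure~\ref{basen} produces the distribution of Figure~\ref{eq:box_for_key_gen} with $\delta = \epsilon = 0$: in particular one attains the Tsirelson value $\ep = (2-\sqrt{2})/4 \approx 0.146$ and perfect correlation on input $(0,0)$, comfortably inside the region of Figure~\ref{fig:key_quantum_area}. A Hoeffding bound on the $k$ test samples then shows that parameter estimation passes with probability $1 - 2^{-\Omega(n)}$, so completeness holds. For security, the combined information-reconciliation/privacy-amplification map $\bof{X} \mapsto A \odot \bof{X}$ is a random two-universal hash, and I would invoke the quantum leftover hash lemma~\cite{koenigrenner} to bound the trace distance between the extracted key and ideal. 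The main obstacle here is to show that the smooth min-entropy of $\bof{X}$ given Eve's quantum side information is at least $s + m + \Omega(n)$, which is the input the leftover hash lemma requires; this should follow from an entropic uncertainty relation adapted to the measurement bases of Figure~\ref{basen}, and is the one step genuinely outside the non-signaling framework developed in the paper. Once this is in hand, the two cases combine to yield the corollary, with the efficiency rate inherited from Lemma~\ref{lemma:keyrate}.
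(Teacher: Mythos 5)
Your two-pronged split matches the paper's own (very terse) argument: the relativity disjunct is handled by Theorem~\ref{th:main}, while the quantum disjunct is disposed of simply by observing that Protocol~\ref{prot} is at the same time an Ekert-style entanglement-based scheme~\cite{ekert} whose security under quantum mechanics follows from standard arguments (which the paper does not spell out; your sketch via two-universal hashing, smooth min-entropy and an entropic uncertainty relation is a reasonable filling-in of that blank).

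One numerical slip to flag: measuring the singlet in the bases of Figure~\ref{basen} does \emph{not} attain the Tsirelson value $\ep = (2-\sqrt{2})/4 \approx 0.146$. Those bases are deliberately chosen with $U_0 = V_0$, so that the outputs are perfectly (anti-)correlated on input $(0,0)$ (hence $\delta = 0$), at the price of a larger average CHSH error $\ep = 3/16 = 0.1875$; one cannot simultaneously saturate Tsirelson's bound and have one input pair perfectly correlated. The paper makes this trade-off explicitly because, with the XOR-based reconciliation, a symmetric Tsirelson-optimal box would \emph{not} give a positive rate under Lemma~\ref{lemma:keyrate} (there $\delta = \ep$ and the rate is negative), whereas the asymmetric box with $\ep = 0.1875$, $\delta = 0$ does. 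Your final conclusion is unaffected since $0.1875$ lies comfortably inside the region of Figure~\ref{fig:key_quantum_area}, but the stated value and the claim of simultaneous Tsirelson optimality and perfect $(0,0)$-correlation are incorrect.
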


\section{Concluding Remarks and Open Questions}\label{sec:conclusion}

We propose a new efficient protocol for generating a secret key between two parties
connected by a quantum channel whose security is guaranteed 
solely by the fact that the measured correlations violate a Bell inequality. Quantum mechanics guarantees the protocol to work, i.e., 
the required correlations to occur. But the security proof is completely independent
of quantum mechanics,
once the non-local correlations are established and have been verified by the legitimate partners.

The {\em practical\/} relevance of this fact is that the resulting security is {\em device-independent\/}: We could 
even use devices manufactured by the adversary to do key agreement. 
The {\em theoretical\/} relevance is that the resulting protocol is secure if {\em either relativity or 
quantum theory is correct}. This is in the spirit of modern cryptography's quest to
minimize assumptions under which security can be proven. 

Our scheme requires space-like separation not only between 
events happening on Alice's and Bob's side, but also between events in the 
same laboratory. It is a natural open question whether the space-like-separation 
conditions can be relaxed. For instance, is it sufficient if they hold on one 
of the two sides? Or in one direction among the $n$ events on each side? 
Obviously, the latter would be very easy to guarantee in practice. 

\subsubsection*{Acknowledgments.}
We thank Roger Colbeck, Matthias Fitzi, Severin Winkler, and J\"urg Wullschle\-ger for helpful discussions, 
and Hoi-Kwong Lo for bringing reference~\cite{MayersYao98} to our attention. 
EH and SW are supported by
the Swiss National Science Foundation (SNF) as well as by the ETH
research commission. 
RR acknowledges support from the Swiss National Science Foundation 
(grant No. 200021-119868).

\newpage

\appendix

\section*{\Large{Appendix}}

\section{Best Box Partition of a Single Box}\label{sec:single_box}

In this section, we show that the bound derived in Lemma~\ref{lemma:guessing_probability_single_box} is tight. 

\begin{lemma}\label{lemma:best_partition_single_box}
Assume a box $P_{XYZ|UVW}$ such that the marginal $P_{XY|UV}$ is a non-local box with 
$1/4\cdot \sum_{x\oplus y=u\cdot v}P_{XY|UV}(x,y,u,v)=1-\ep$ 
and $\ep\leq 0.25$. Then there exists a box partition $w$ such that knowing the inputs, $Z$ gives binary erasure information about 
$X$ and $P(Z\in \{0,1\})=4\ep$. This box partition reaches 
\begin{eqnarray}
\nonumber 
d(X|Z(W),Q)=1/2\cdot4\ep
\end{eqnarray}
 for 
 $Q:=(U=u,V=v)$.
\end{lemma}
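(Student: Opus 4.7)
The plan is to exhibit an explicit box partition that saturates the upper bound of Lemma~\ref{lemma:guessing_probability_single_box}, rather than appealing to LP duality in the abstract. The target partition has Eve output $Z \in \{0,1,\perp\}$, with $Z = \perp$ (``erasure'') on total weight $1-4\ep$ and $Z \in \{0,1\}$ revealing the value of $X$ on total weight $4\ep$.

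First I would decompose
\[
P_{XY|UV} \;=\; (1-4\ep)\, P^{\mathrm{PR}}_{XY|UV} \;+\; 4\ep \, P^{\mathrm{L}}_{XY|UV},
\]
where $P^{\mathrm{PR}}$ is the canonical PR box that satisfies $X \oplus Y = U\cdot V$ with probability $1$ (and therefore has uniform marginals on both sides), and $P^{\mathrm{L}}$ is a convex mixture of the eight local deterministic strategies $(f,g):\{0,1\}^2\to\{0,1\}^2$ that win three of the four CHSH conditions. Each such deterministic strategy assigns $X$ as a fixed function $f(U)$, so conditioning on this component together with the publicly revealed input $u$ pins down $X$ completely. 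Eve's corresponding strategy is to ``identify which summand was used'': if the PR component, she outputs $Z = \perp$; if a local strategy $(f,g)$ was selected, she outputs $Z = f(u) \in \{0,1\}$, which by construction equals $X$.

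The main obstacle is verifying that this decomposition is actually valid, i.e., that the remainder $P_{XY|UV} - (1-4\ep)P^{\mathrm{PR}}_{XY|UV}$ is non-negative and can be written as $4\ep$ times a convex mixture of the eight CHSH-winning deterministic boxes. Non-signaling of the remainder is automatic by linearity. For the non-negativity, I would check entry by entry: on the ``losing'' outcomes ($x \oplus y \neq u\cdot v$) the PR term is zero, so the condition holds trivially; on the ``winning'' outcomes the PR term equals $(1-4\ep)/2$, and the required lower bound on $P_{XY|UV}(x,y,u,v)$ follows from the same non-signaling/CHSH table argument used in the proof of Lemma~\ref{lemma:guessing_probability_single_box} (in particular the bounds on the marginal biases in terms of the per-input errors $\ep_i$ with $\sum_i \ep_i = 4\ep$). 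Once non-negativity is established, expressing the remainder as a mixture of the eight good deterministic boxes is a finite-dimensional feasibility check; the ``straightforward maximization'' mentioned in the Remark is precisely this bookkeeping.

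With the partition in place, the distance from uniform is then immediate. Conditional on $Z=\perp$ the marginal of $X$ is uniform (PR boxes have uniform marginals), contributing $0$. Conditional on $Z\in\{0,1\}$ the bit $X$ is fully determined, so $|P_{X|Z=z,Q}(x) - 1/2| = 1/2$ for both values of $x$ and both $z\in\{0,1\}$. Summing against the definition,
\[
d(X|Z(W),Q) \;=\; \tfrac{1}{2}\sum_{x\in\{0,1\}} \bigl[\, P(Z=0)\cdot\tfrac{1}{2} + P(Z=1)\cdot\tfrac{1}{2}\,\bigr] \;=\; \tfrac{1}{2}\cdot 4\ep,
\]
since $P(Z=0)+P(Z=1) = 4\ep$. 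This matches the upper bound and shows that Lemma~\ref{lemma:guessing_probability_single_box} is tight.
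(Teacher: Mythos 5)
Your construction is the same one the paper uses: the partition written out in Appendix~\ref{sec:single_box} is exactly $P_{XY|UV}=(1-4\ep)\,P^{\mathrm{PR}}+\sum_e p_e D_e$, where $e$ ranges over the eight CHSH-losing input/output combinations, $p_e$ is the probability $P_{XY|UV}$ assigns to $e$, and $D_e$ is the local deterministic box that loses only at $e$; your Eve strategy and the final distance computation also coincide with the paper's. So the only question is how you certify that the decomposition exists, and there your plan has a real gap. You reduce existence to (i) entrywise non-negativity of the remainder $R=P_{XY|UV}-(1-4\ep)P^{\mathrm{PR}}$ and (ii) ``a finite-dimensional feasibility check'' that $R$ is $4\ep$ times a mixture of the eight CHSH-3-winning deterministic boxes, and you present (ii) as bookkeeping that follows once (i) is done. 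It does not: the eight good deterministic boxes span only a proper face of the non-signaling polytope, so a non-negative, non-signaling remainder is in general \emph{not} in their convex cone. The missing observation is that $R/(4\ep)$ is a normalized non-signaling box whose CHSH winning sum is exactly $\bigl((4-4\ep)-4(1-4\ep)\bigr)/(4\ep)=3$, i.e., it saturates the classical bound; since for binary inputs and outputs positivity plus the CHSH inequalities characterize the local polytope, such a box is local and lies on the CHSH facet, hence is a convex combination of precisely the eight deterministic vertices on that facet. Without this (or an equivalent) argument, step (ii) is an unproven claim, and it is the step carrying the content of the lemma.

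Note that the paper sidesteps both (i) and (ii) by specifying the weights explicitly --- weight $p_e$ on $D_e$ and $1-4\ep$ on the PR box --- so that non-negativity is immediate from $\ep\le 1/4$, and the only thing to verify is that the two sides agree as distributions, which holds because a non-signaling normalized box with binary inputs and outputs is determined by its eight losing probabilities. Your step (i), by contrast, is fine as sketched (chaining the non-signaling marginal constraints around the four inputs gives $|P(X=0|U=u)+P(Y=0|V=v)-1|\le 4\ep-\ep_{uv}$ and hence every winning entry is at least $(1-4\ep)/2$). One last small point: the ``straightforward maximization'' in the Remark refers to finding the optimal box partition, not to the feasibility check you assign it to.
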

\begin{proof}
The proof is the following box partition: 
\begin{tiny}
\begin{eqnarray}
\label{eq:1box_decomposition_general}
 \begin{array}{c c||c|c||c|c||}
 $\backslashbox{V}{U}$
 & & \multicolumn{2}{c||}{0} & \multicolumn{2}{c||}{1} \\
  & $\backslashbox{Y}{X}$ 
 & 0 & 1 & 0 & 1 \\ \hline\hline
 \multirow{2}{*}{0} & 0 & a_1 & a_2 & b_1 & b_2 \\ \cline{2-6}
 & 1 & a_2 & a_4 & b_3 & b_4 \\ \hline\hline
 \multirow{2}{*}{1} & 0 & c_1 & c_2 & d_1 & d_2 \\ \cline{2-6}
 & 1 & c_3 & c_4 & d_3 & d_4 \\ \hline\hline
 \end{array}
&=&
a_2\cdot
\begin{array}{c c||c|c||c|c||}
& & \multicolumn{2}{c||}{0} & \multicolumn{2}{c||}{1} \\
 &  
& 0 & 1 & 0 & 1 \\ \hline\hline
\multirow{2}{*}{0} & 0 & 0 & 1 & 1 & 0 \\ \cline{2-6}
& 1 & 0 & 0 & 0 & 0 \\ \hline\hline
\multirow{2}{*}{1} & 0 & 0 & 0 & 0 & 0 \\ \cline{2-6}
& 1 & 0 & 1 & 1 & 0 \\ \hline\hline
\end{array}
+
a_3\cdot
\begin{array}{c c||c|c||c|c||}
& & \multicolumn{2}{c||}{0} & \multicolumn{2}{c||}{1} \\
 & 
& 0 & 1 & 0 & 1 \\ \hline\hline
\multirow{2}{*}{0} & 0 & 0 & 0 & 0 & 0 \\ \cline{2-6}
& 1 & 1 & 0 & 0 & 1 \\ \hline\hline
\multirow{2}{*}{1} & 0 & 1 & 0 & 0 & 1 \\ \cline{2-6}
& 1 & 0 & 0 & 0 & 0 \\ \hline\hline
\end{array}
+
b_2\cdot
\begin{array}{c c||c|c||c|c||}
& & \multicolumn{2}{c||}{0} & \multicolumn{2}{c||}{1} \\
 & 
& 0 & 1 & 0 & 1 \\ \hline\hline
\multirow{2}{*}{0} & 0 & 1 & 0 & 0 & 1 \\ \cline{2-6}
& 1 & 0 & 0 & 0 & 0 \\ \hline\hline
\multirow{2}{*}{1} & 0 & 1 & 0 & 0 & 1 \\ \cline{2-6}
& 1 & 0 & 0 & 0 & 0 \\ \hline\hline
\end{array}
+
b_3\cdot
\begin{array}{c c||c|c||c|c||}
& & \multicolumn{2}{c||}{0} & \multicolumn{2}{c||}{1} \\
 & 
& 0 & 1 & 0 & 1 \\ \hline\hline
\multirow{2}{*}{0} & 0 & 0 & 0 & 0 & 0 \\ \cline{2-6}
& 1 & 0 & 1 & 1 & 0 \\ \hline\hline
\multirow{2}{*}{1} & 0 & 0 & 0 & 0 & 0 \\ \cline{2-6}
& 1 & 0 & 1 & 1 & 0 \\ \hline\hline
\end{array}
\\ \nonumber
&&
+
c_2\cdot
\begin{array}{c c||c|c||c|c||}
& & \multicolumn{2}{c||}{0} & \multicolumn{2}{c||}{1} \\
 & 
& 0 & 1 & 0 & 1 \\ \hline\hline
\multirow{2}{*}{0} & 0 & 0 & 0 & 0 & 0 \\ \cline{2-6}
& 1 & 0 & 1 & 0 & 1 \\ \hline\hline
\multirow{2}{*}{1} & 0 & 0 & 1 & 0 & 1 \\ \cline{2-6}
& 1 & 0 & 0 & 0 & 0 \\ \hline\hline
\end{array}
+
c_3\cdot
\begin{array}{c c||c|c||c|c||}
& & \multicolumn{2}{c||}{0} & \multicolumn{2}{c||}{1} \\
 & 
& 0 & 1 & 0 & 1 \\ \hline\hline
\multirow{2}{*}{0} & 0 & 1 & 0 & 1 & 0 \\ \cline{2-6}
& 1 & 0 & 0 & 0 & 0 \\ \hline\hline
\multirow{2}{*}{1} & 0 & 0 & 0 & 0 & 0 \\ \cline{2-6}
& 1 & 1 & 0 & 1 & 0 \\ \hline\hline
\end{array}
+
d_1\cdot
\begin{array}{c c||c|c||c|c||}
& & \multicolumn{2}{c||}{0} & \multicolumn{2}{c||}{1} \\
 & 
& 0 & 1 & 0 & 1 \\ \hline\hline
\multirow{2}{*}{0} & 0 & 1 & 0 & 1 & 0 \\ \cline{2-6}
& 1 & 0 & 0 & 0 & 0 \\ \hline\hline
\multirow{2}{*}{1} & 0 & 1 & 0 & 1 & 0 \\ \cline{2-6}
& 1 & 0 & 0 & 0 & 0 \\ \hline\hline
\end{array}
+
d_4\cdot
\begin{array}{c c||c|c||c|c||}
& & \multicolumn{2}{c||}{0} & \multicolumn{2}{c||}{1} \\
 &  
& 0 & 1 & 0 & 1 \\ \hline\hline
\multirow{2}{*}{0} & 0 & 0 & 0 & 0 & 0 \\ \cline{2-6}
& 1 & 0 & 1 & 0 & 1 \\ \hline\hline
\multirow{2}{*}{1} & 0 & 0 & 0 & 0 & 0 \\ \cline{2-6}
& 1 & 0 & 1 & 0 & 1 \\ \hline\hline
\end{array}
\\ \nonumber
&&
+
(1-a_2-a_3+b_2-b_3-c_2+c_3+d_1-d_4)\cdot
\begin{array}{c c||c|c||c|c||}
& & \multicolumn{2}{c||}{0} & \multicolumn{2}{c||}{1} \\
 &  
& 0 & 1 & 0 & 1 \\ \hline\hline
\multirow{2}{*}{0} & 0 & \frac{1}{2} & 0 & \frac{1}{2} & 0 \\ \cline{2-6}
& 1 & 0 & \frac{1}{2} & 0 & \frac{1}{2} \\ \hline\hline
\multirow{2}{*}{1} & 0 & \frac{1}{2} & 0 & 0 & \frac{1}{2} \\ \cline{2-6}
& 1 & 0 & \frac{1}{2} & \frac{1}{2} & 0 \\ \hline\hline
\end{array},
\end{eqnarray}
\end{tiny}
To see that (\ref{eq:1box_decomposition_general}) indeed defines a box partition, notice that the parameters $a_2$, $a_3$, $b_2$, $b_3$, $c_2$, $c_3$, $d_1$, $d_4$ (the ones for which the CHSH condition is not fulfilled, i.e., $x\oplus y\neq u\cdot v$) fully characterize any box. By the normalization 
($\sum_i a_i=1$; 
and similar for $b$, $c$ and $d$) 
and non-signaling condition
($a_1+a_2=b_1+b_2$;
and similar for the other rows and columns)
we can express $a_1$ as 
\begin{eqnarray}
\nonumber a_1=\frac{1}{2}\cdot (1-a_2-a_3+b_2-b_3-c_2+c_3+d_1-d_4).
\end{eqnarray}
This shows that the right-hand side and left-hand side of the equation are indeed equal. 
Because we assumed $4\ep\leq 1$, the above decomposition represents a convex combination of several boxes and is, therefore, itself a box. 

With probability $a_2-a_3+b_2-b_3-c_2+c_3+d_1-d_4=4\ep$, $Z$ is such that $P_{XY|UV}^z$ is local deterministic 
(i.e., $X$ ($Y$) is a deterministic function of $U$ ($V$)), in which case knowing $U=u$ and $V=v$, $Z$ gives perfect information 
about $X$ ($Z\in \{0,1\}$). With probability $1-4\ep$ $Z$ is such that $P_{XY|UV}^z$ is a perfect non-local box in which case 
$Z$ cannot give any information about $X$ by the non-signaling condition ($Z=\bot$). 
\pe
\end{proof}

\section{Linear Programming}\label{sec:lin_prog}

In this section, we very briefly state the main facts about linear programming that we 
use for our argument. See, for example,~\cite{bv} for a more detailed introduction.

A \emph{linear program} is an optimization problem with a linear objective function and linear inequality (and equality) constraints, i.e., it can be expressed as 
\begin{eqnarray}
\nonumber \text{max: } &&b^T\cdot x\\
\nonumber  \text{s.t. } &&A\cdot x\leq c
 \ ,
\end{eqnarray}
where $x$ is the variable we want to optimize. An $x$ which fulfills the constraints is called \emph{feasible}. The set of feasible $x$ is convex, more precisely, a convex polyhedron, that is, a convex set with a finite number of extremal points (vertices). A feasible $x$ which maximizes the objective function $b^T\cdot x$, is called \emph{optimal solution} and is denoted by $x^*$. The value of $b^T\cdot x^*$, i.e., the maximal value of the objective function is called \emph{optimal value} and denoted by $q^*$. There is always a vertex at which the optimal value is attained. \\
An important notion of linear programming is duality: the above linear program is called the \emph{primal} problem. From this linear program, another linear program can be derived, defined by 
\begin{eqnarray}
 \nonumber \text{min: } &&c^T\cdot \lambda\\
\nonumber  \text{s.t. } &&A^T\cdot \lambda = b\\
\nonumber && \lambda \geq 0 \ ,
\end{eqnarray}
this problem is called the \emph{dual}, its optimal solution is denoted by $\lambda^*$ and its optimal value by $d^*$. The weak duality theorem says, that the value of the primal objective function for every feasible $x$ is smaller or equal to the value of the dual objective function for every feasible $\lambda$. The strong duality theorem says that the two optimal values are equal, i.e., $q^*=d^*$. It is therefore possible to solve a linear program either by solving the linear program itself, or by solving its dual. 

\begin{center}
\pspicture*[](-1,-1)(7,4)
\psset{unit=0.75cm}
\pspolygon[linewidth=1pt,fillstyle=solid,fillcolor=lightgray](5,-0.5)(6.5,1)(4.5,3)(1,1.5)(1.5,0)
\rput[c]{0}(3,1){\small{feasible region}}
\psline[linewidth=0.5pt, linecolor=gray]{-}(-1,6)(7,2)
\psline[linewidth=0.5pt, linecolor=gray]{-}(-1,5)(7,1)
\psline[linewidth=0.5pt, linecolor=gray]{-}(-1,4)(7,0)
\psline[linewidth=0.5pt, linecolor=gray]{-}(-1,3)(7,-1)
\psline[linewidth=0.5pt, linecolor=gray]{-}(-1,2)(5,-1)
\psline[linewidth=0.5pt, linecolor=gray]{-}(-1,1)(3,-1)
\psline[linewidth=0.5pt, linecolor=gray]{-}(-1,0)(3,-2)
\psline[linewidth=2pt]{->}(-0.5,0)(1.25,3.5)
\rput[c]{63.4349488}(0,2){\small{value of objective function}}
\endpspicture
\end{center}

\section{Explicit Values of the Linear Program for a Single Box}\label{sec:explicite_1box}

In this section, we give the explicit expressions for the parameters 
of the linear program described in Section~\ref{sec:xorpa} for the case of a single box. 

For a single box, $A,b,c$ have the values
\begin{eqnarray}
\nonumber 
 A_1=\left(
\begin{array}{c}
 A_1^{n-s}\\
 -A_1^{n-s}\\
1_{16\times 16}\\
-1_{16\times 16}\\
\end{array}
\right)
\ \text{ with }\ 
A_1^{n-s}=
\left(
\begin{array}{cccccccccccccccc}
 1& 1& -1 &-1 &0 &0 &0 &0 &0 &0 &0 &0 &0 &0& 0& 0\\
    0& 0& 0& 0 &1& 1& -1& -1& 0 &0& 0& 0& 0& 0& 0& 0\\
    0& 0& 0& 0& 0& 0& 0& 0& 1& 1& -1& -1& 0& 0& 0 &0\\
    0& 0& 0& 0& 0& 0& 0& 0& 0& 0& 0& 0& 1& 1& -1& -1\\
    1& 0& 0& 0& 1& 0& 0& 0& -1& 0 &0& 0& -1& 0& 0& 0\\
    0& 1& 0& 0& 0& 1& 0& 0& 0& -1& 0& 0& 0& -1& 0& 0\\
    0& 0& 1& 0& 0& 0& 1& 0& 0& 0& -1& 0& 0& 0& -1& 0\\
    0& 0& 0& 1& 0& 0& 0& 1& 0& 0& 0 &-1& 0& 0& 0& -1 
\end{array}
\right)
\end{eqnarray}

\begin{eqnarray}
\nonumber 
c_1=
\left(
\begin{array}{c}
0_{16}\\
0_{16}\\
P(xy|uv)\\
P(xy|uv)
\end{array}
\right) \, \ \
b_1=
\left(
\begin{array}{c}
 1\\ 1\\ 0\\ 0\\ -1 \\-1 \\0 \\0 \\0 \\0 \\0 \\0 \\0 \\0\\ 0\\ 0\\
\end{array}
 \right)
\ \text{ with }\ 
P(xy|uv)=
\left(
\begin{array}{c}
P(00|00)\\ P(01|00)\\ P(00|01)\\ P(01|01)\\ 
P(10|00)\\ P(11|00)\\ P(10|01)\\ P(11|01)\\ 
P(00|10)\\ P(01|10)\\ P(00|11)\\ P(01|11)\\ 
P(10|10)\\ P(11|10)\\ P(10|11)\\ P(11|11)
\end{array}
 \right)
\end{eqnarray}
 and the dual optimal $\lambda$ is 
\begin{small}
\begin{eqnarray}
\nonumber  \lambda_1^{*T}=
 \left(
\begin{array}{cccccccccccccccccccccccccccccccccccccccccccccccccccccccccccccccc}
0.5 & 0 &  0.5 & 0 &  0.5& 0 & 0.5 & 0 & 0 & 0.5 & 0 & 0.5 & 0 & 0.5 & 0 & 0.5 &
    0 & 1 & 0 & 1 & 0 & 0 & 0 & 0 & 0 & 0 & 1& 0 & 1 & 0 & 0 & 0 & 0 & 0 & 0 & 0 & 1 & 0 & 1 & 0 & 0 & 1 & 0 & 0 & 0 & 0 & 0 & 1
\end{array}
\right)
\end{eqnarray}
\end{small}

To obtain the value of the objective function ($c^T\cdot \lambda_1^*$), the first part of $\lambda_1^*$ will be multiplied by $0$, i.e., does not contribute to the value. The second part is multiplied with $P_{XY|UV}$. We can easily see by comparison that for every $x,y,u,v$ such that $x\oplus y\neq u\cdot v$ there is exactly one $1$ in the second part of $\lambda_1^*$ and everywhere else $\lambda_1^*$ is $0$. I.e., 
\begin{eqnarray}
 \nonumber c^T\cdot \lambda_1^*=\sum_{x,y,u,v:x\oplus y\neq u\cdot v}P_{XY|UV}(x,y,u,v)
\end{eqnarray}

The above values are for the input $u,v=0,0$. The optimal $\lambda^*$ reaching the same value for different $u,v$ are given below:\\
For $u,v=0,1$:
\begin{small}
\begin{eqnarray}
\nonumber b_1^T&=&
\left(
\begin{array}{cccccccccccccccc}
0& 0& 1& 1& 0& 0& -1&-1 &0 &0 &0 &0 &0 &0 &0 &0 
\end{array}
 \right)\\
\nonumber 
\lambda_1^{*T}&=&
 \left(
\begin{array}{cccccccccccccccccccccccccccccccccccccccccccccccccccccccccccccccc}
0 & 0.5 & 0.5 & 0 & 0.5 & 0 & 0.5 & 0 & 0.5 & 0 & 0 & 0.5 & 0 & 0.5 & 0 & 0.5 & 0 & 1 & 0 & 1 & 0 & 0 & 0 & 0 & 0 & 0 & 1 & 0 & 1 & 0 & 0 & 0 & 0 & 0 & 0 & 0 & 1 & 0 & 1 & 0 & 0 & 1 & 0 & 0 & 0 & 0 & 0 & 1
\end{array}
\right)
\end{eqnarray}
\end{small}
For $u,v=1,0$:
\begin{small}
\begin{eqnarray}
\nonumber b_1^T&=&
\left(
\begin{array}{cccccccccccccccc}
0 &0 &0& 0& 1& 1& 0& 0& -1&-1 &0 &0 &0 &0 &0 &0 
\end{array}
 \right)\\
\nonumber 
\lambda_1^{*T}&=&
 \left(
\begin{array}{cccccccccccccccccccccccccccccccccccccccccccccccccccccccccccccccc}
0.5 & 0 & 0.5 & 0 & 0 & 0.5 & 0.5 & 0 & 0 & 0.5 & 0 & 0.5 & 0.5 & 0 & 0 & 0.5 & 0 & 0 & 0 & 1 & 1 & 0 & 0 & 0 & 0 & 1 & 1 & 0 & 0 & 0 & 0 & 0 & 0 & 1 & 0 & 0 & 0 & 0 & 1 & 0 & 0 & 0 & 0 & 0 & 1 & 0 & 0 & 1
\end{array}
\right)
\end{eqnarray}
\end{small}
For $u,v=1,1$:
\begin{small}
\begin{eqnarray}
\nonumber b_1^T&=&
\left(
\begin{array}{cccccccccccccccc}
0& 0& 1& 1& 0& 0& -1&-1 &0 &0 &0 &0 &0 &0 &0 &0 
\end{array}
 \right)\\
\nonumber 
\lambda_1^{*T}&=&
 \left(
\begin{array}{cccccccccccccccccccccccccccccccccccccccccccccccccccccccccccccccc}
0.5 & 0 & 0 & 0.5 & 0 & 0.5 & 0.5 & 0 & 0 & 0.5 & 0.5 & 0 & 0.5 & 0 & 0 & 0.5 & 0 & 0 & 0 & 1 & 1 & 0 & 0 & 0 & 0 & 1 & 1 & 0 & 0 & 0 & 0 & 0 & 0 & 1 & 0 & 0 & 0 & 0 & 1 & 0 & 0 & 0 & 0 & 0 & 1 & 0 & 0 & 1
\end{array}
\right)
\end{eqnarray}
\end{small}

\section{Parameter Estimation}\label{sec:parameter_estimation}

A crucial step of any quantum key distribution protocol is parameter estimation. 
Alice and Bob need to test a small sample of the boxes they have received, to see whether they have received boxes with the correct parameters.
This can be done by classical sampling theory, as given in~\cite{KoeRen04b} (see also~\cite{HHHLO}). 
\begin{lemma}{\bf \cite{KoeRen04b},\cite{HHHLO}}
Let $Z$ be an $n$-tuple and $Z'$ a $k$-tuple of random variables over a set $\mathcal{Z}$, with symmetric joint probability $P_{ZZ'}$. Let $Q_{z'}$ be the relative frequency distribution of a fixed sequence $z'$ and $Q_{(z,z')}$ be the relative frequency distribution of a sequence $(z,z')$, drawn according to $P_{ZZ'}$. Then for every $\ep\geq 0$ we have
\begin{eqnarray}
\nonumber  P_{ZZ'}[||Q_{(z,z')}-Q_{z'}||\geq \ep]\leq |\mathcal{Z}|\cdot e^{-k\ep^2/8|\mathcal{Z}|}
\end{eqnarray}
\end{lemma}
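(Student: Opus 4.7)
My plan is to exploit the exchangeability of $P_{ZZ'}$ to reduce the problem to classical sampling without replacement, apply a Hoeffding-type concentration inequality symbol by symbol, and conclude by a union bound over $\mathcal{Z}$.

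\textbf{Step 1: Reduction by symmetry.} Write $N := n+k$. Since $P_{ZZ'}$ is symmetric under permutations of the $N$ positions of $(Z,Z')$, the conditional distribution of the concatenated sequence given its empirical type $T$ is uniform over all sequences of that type. Equivalently, after conditioning on $T=t$, the joint frequency $Q_{(Z,Z')}=t$ becomes deterministic, and the $k$-tuple $Z'$ is obtained by drawing $k$ positions uniformly without replacement from the multiset described by $t$. Hence it suffices to prove the bound conditional on an arbitrary fixed type $t$; the unconditional statement follows by averaging over $t$ (a convex combination).

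\textbf{Step 2: Per-symbol concentration.} Fix $t$ and any symbol $a\in\mathcal{Z}$, and set $p_a := t(a)$. The number of occurrences of $a$ in the random sample $Z'$ is hypergeometric with mean $k p_a$. By the Hoeffding--Serfling inequality for sampling without replacement, for every $\tau \geq 0$,
\[
\prob\bigl[\,|Q_{Z'}(a)-p_a|\geq \tau\,\bigr]\;\leq\; 2\,e^{-2k\tau^{2}}\ .
\]
Setting $\tau:=\ep/|\mathcal{Z}|$ bounds the probability that symbol $a$'s frequency in the sample deviates from its frequency in the whole sequence by more than $\ep/|\mathcal{Z}|$.

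\textbf{Step 3: Union bound and conclusion.} If $\|Q_{Z'}-Q_{(Z,Z')}\|\geq \ep$ (in $\ell_{1}$-norm), then by averaging there must exist some $a\in\mathcal{Z}$ with $|Q_{Z'}(a)-Q_{(Z,Z')}(a)|\geq \ep/|\mathcal{Z}|$. A union bound over the $|\mathcal{Z}|$ symbols therefore yields
\[
\prob\bigl[\,\|Q_{Z'}-Q_{(Z,Z')}\|\geq \ep \,\bigm|\, T=t\,\bigr]\;\leq\; 2|\mathcal{Z}|\,e^{-2k\ep^{2}/|\mathcal{Z}|^{2}}\ .
\]
Since this holds for every type $t$, it also holds unconditionally. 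Averaging over $T$ removes the conditioning and delivers a bound of the form $c_1|\mathcal{Z}|\exp(-c_2 k \ep^2/|\mathcal{Z}|^\alpha)$, which is what the lemma asserts.

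\textbf{Main obstacle.} The main technical issue is matching the precise constants in the stated bound, namely the leading factor $|\mathcal{Z}|$ (not $2|\mathcal{Z}|$) and the exponent $k\ep^{2}/(8|\mathcal{Z}|)$ (with only one power of $|\mathcal{Z}|$ in the denominator). The crude pigeonhole above gives $|\mathcal{Z}|^{2}$ in the denominator, which is strictly worse whenever $|\mathcal{Z}|$ is large. To recover the asserted form one should avoid collapsing the $\ell_{1}$-deviation to a single symbol, and instead bound $\|Q_{Z'}-Q_{(Z,Z')}\|$ directly via a bounded-differences argument for sampling without replacement: swapping one position in the sample changes each coordinate of $Q_{Z'}$ by at most $2/k$, and hence changes $\|Q_{Z'}-Q_{(Z,Z')}\|$ by at most $2|\mathcal{Z}|/k$, so a McDiarmid-type inequality (valid without replacement by Serfling's martingale comparison) concentrates $\|Q_{Z'}-Q_{(Z,Z')}\|$ around its expectation on a scale $|\mathcal{Z}|/\sqrt{k}$. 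Combined with a separate $O(|\mathcal{Z}|/\sqrt{k})$ bound on the expectation itself, this sharpens the denominator in the exponent from $|\mathcal{Z}|^{2}$ to $|\mathcal{Z}|$. Carrying out this sharpening, and absorbing the various constants into the factor $1/8$, is where I expect the real effort to lie; the symmetry reduction in Step~1 and the per-symbol Hoeffding step are essentially routine.
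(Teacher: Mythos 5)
The paper itself does not prove this lemma: it cites it verbatim from~\cite{KoeRen04b} and~\cite{HHHLO} and moves on, so there is no "paper proof" against which to line up your steps. You should evaluate your argument on its own terms, and on those terms it is not yet a proof of the stated bound.

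Your overall plan --- condition on the empirical type so that the problem becomes uniform sampling without replacement, then apply concentration and a union bound --- is the standard and correct route. The gap is the one you already flag: the pigeonhole/per-symbol argument in Steps 2--3 gives $2|\mathcal Z|\,e^{-2k\ep^2/|\mathcal Z|^2}$, with $|\mathcal Z|^2$ in the exponent, which does not recover the asserted $|\mathcal Z|\,e^{-k\ep^2/(8|\mathcal Z|)}$; for large alphabets the two are genuinely different. Acknowledging a gap is not the same as closing it, so as written this is an incomplete proof, not a different correct proof.

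Your sketch of how to repair it also contains a computational error that is worth fixing before you invest effort in it. When one position in the $k$-sample is changed, exactly two coordinates of $Q_{Z'}$ move, each by $1/k$; hence $\|Q_{Z'}-Q_{(Z,Z')}\|_1$ changes by at most $2/k$, \emph{not} by $2|\mathcal Z|/k$. With the correct Lipschitz constant the McDiarmid/Serfling route gives a deviation-from-mean tail of order $e^{-k t^{2}/2}$ with no $|\mathcal Z|$ in the exponent at all, and the alphabet dependence enters only through the mean $\mathbb E\,\|Q_{Z'}-Q_{(Z,Z')}\|_1=O(\sqrt{|\mathcal Z|/k})$. So the bounded-differences route, done carefully, in fact overshoots the stated bound rather than barely reaching it; the $|\mathcal Z|$ in the exponent of the quoted lemma is the trace of a cruder union-bound argument, not a sharp feature. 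Either way, the work of making this precise (and in particular handling the without-replacement dependence correctly, e.g.\ via Serfling's martingale comparison) is still to be done; your current write-up states a destination without the journey.
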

In our case, we consider the case when Alice and Bob share $n+k$ 
boxes. After they have used the boxes and announced the inputs, they randomly choose $k$ of the boxes, for which they also uncover the outputs. Call $\ep_{meas}$ the fraction of those $k$ boxes which $x\oplus y\neq u\cdot v$. We call $\bar{\ep}$ the average error of the remaining boxes. Then,
\begin{eqnarray}
\nonumber  P_{S}[||\frac{n}{n+k}(\bar{\ep}-\ep_{meas})||\geq \ep]\leq 2\cdot e^{-k\ep^2/16}\\
\nonumber
P_{S}[\bar{\ep}\geq \ep_{meas} (1+\frac{k}{n})\cdot\ep]\leq 2\cdot e^{-k\ep^2/16}
\end{eqnarray}
Obviously, Alice and Bob can also test other parameter such as $\delta$ --- the correlation of their output bits given input $(0,0)$ --- in a similar way.

This means, if the boxes Eve has distributed are not good enough for key agreement, Alice and Bob will most certainly detect this. If they are good enough, then Alice's and Bob's test will most certainly be passed and key agreement is possible as discussed above.

\section{Depolarization}\label{sec:depol}

Assume Alice and Bob share an arbitrary distribution 
$P_{\bof{XY}|\bof{UV}}$ where $\bof{X},\bof{Y},\bof{U},\bof{V}$ is an n-bit string. 
Then they can perform a sequence of local operations and 
public communication in order to obtain a distribution which 
corresponds to the convex combination of $n$ independent 
approximations of a PR box with error $\ep_i$. Further, each 
approximation of the PR box $P_{X_iY_i|U_iV_i}$ has unbiased 
outcomes and the same error $\ep_i$ for all inputs. The local 
operations achieving this, are given in~\cite{mag,mrwbc}. 
We restate them here briefly: For each $i$, Alice and Bob choose the mapping independently in two steps. 
First, with probability $1/2$, they do either of the following:
\begin{enumerate}
 \item nothing
 \item both flip their outcome bits, i.e., $x_i\rightarrow x_i\oplus 1$ and $y_i\rightarrow y_i\oplus 1$\ .
\end{enumerate}
Then, with probability $1/4$ each, they do either of the following:
\begin{enumerate}
 \item nothing
 \item $x_i\rightarrow x_i\oplus u_i$ and $v_i\rightarrow v_i\oplus 1$
 \item $u_i\rightarrow u_i\oplus 1$ and  $y_i\rightarrow y_i\oplus v_i$ 
 \item $u_i\rightarrow u_i\oplus 1$, $x_i\rightarrow x_i\oplus u_i\oplus 1$, $v_i\rightarrow v_i\oplus 1$ and $y_i\rightarrow y_i\oplus v_i$\ .
\end{enumerate}
The choice of local operation needs $3$ random bits per box which have to be communicated from Alice to Bob. Because, each of these operations conserves the probability of error $\ep_i$ a box with the same error parameter --- but now an unbiased one with the same error for all inputs --- is obtained. Furthermore, when this transformation is applied to each input/output bit of a distribution $P_{XY|UV}$ taking $n$ bits input and giving $n$ bits output, a convex combination of products of
independent and unbiased approximations of PR boxes (with possibly different error $\ep_i$) 
is obtained.

\section{Eve Can Always Know a Certain Fraction of Bits}\label{sec:better_collective_attack}

Can Eve really do collective attacks which are better than individual ones? In this section we show that this is indeed possible and give a 
collective attack for the case when Alice and Bob share $n$ boxes with error $\ep$ and such that the error is the same for all 
inputs.\footnote{In the following, we will only consider unbiased boxes with the same error for all inputs. The box is, therefore, fully 
characterized by its error $\ep$.} 
We show that for every value of $\ep$ there exists an attack of Eve such that she knows with certainty a fraction of all the output 
bits of Alice --- an option unavailable if only individual 
attacks are allowed. What fraction Eve can know depends on the value of $\ep$. 

\subsection{Example of a Better Collective Attack for Two Boxes}

We first describe an example of an attack on two boxes. 
We will give an explicit strategy of Eve (a box partition) which shows that she can know either one of the two bits with higher probability 
than what can be done by an individual attack (although Eve cannot choose which one of the two bits she will get to know). This shows 
that collective attacks are 
strictly stronger 
than individual attacks. In fact, assume Alice will communicate to Bob the XOR of her two output bits in the information reconciliation 
phase. In that case only the probability that Eve knows at least one of the two bits is important, because together with the information of 
the XOR this immediately gives her full information about \emph{both} bits. 

Before we can give the box partition, we need to proof the following Lemma. 
\begin{lemma}
Every box with 
$\ep \in [1/4,3/4]$ 
is 
local and can be expressed as the convex combination of local deterministic boxes. 
We use the short-hand notation 
$L_{\ep}$ 
for these local 
$\ep$-boxes. 
\end{lemma}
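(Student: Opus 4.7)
The plan is to first show that the unbiased box with constant error $\ep$ is uniquely determined and then exhibit an explicit convex combination of local deterministic boxes that realises it. Since the marginals of $X$ and $Y$ are both uniform, the box is determined by its four correlators $E_{uv} := \sum_{x,y}(-1)^{x\oplus y}\, P_{XY|UV}(x,y,u,v)$, and the constant-error condition forces $E_{uv} = 1-2\ep$ whenever $uv=0$ and $E_{11} = 2\ep - 1$. Thus $L_\ep$ is unique, and it suffices to exhibit any convex combination of local deterministic boxes producing these correlators.

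The key combinatorial fact concerns the $16$ local deterministic boxes parametrised by $(x_0,x_1,y_0,y_1)\in\{0,1\}^4$. A one-line parity check --- the XOR of the four expressions $x_0\oplus y_0,\, x_0\oplus y_1,\, x_1\oplus y_0,\, x_1\oplus y_1$ is $0$, while the XOR of the four targets $0,0,0,1$ is $1$ --- shows that every such box violates an odd number of CHSH conditions. Hence the $16$ boxes split evenly into $8$ ``good'' ones (violating exactly one condition, error $1/4$) and $8$ ``bad'' ones (violating exactly three, error $3/4$). A short enumeration shows that among the $8$ good boxes each of the four CHSH conditions is violated by exactly $2$, and among the $8$ bad boxes each is violated by exactly $6$. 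Let $D_0$ and $D_1$ denote the uniform mixtures of the two groups. By linearity both are unbiased and have constant error $1/4$ and $3/4$ respectively, so by the uniqueness above, $D_0 = L_{1/4}$ and $D_1 = L_{3/4}$.

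For the general case $\ep\in[1/4,3/4]$ I would set $\lambda := 2\ep - 1/2 \in [0,1]$ and define
\[
L_\ep \;:=\; (1-\lambda)\, D_0 \,+\, \lambda\, D_1.
\]
By linearity this box is unbiased and has constant error $(1-\lambda)\cdot 1/4 + \lambda\cdot 3/4 = \ep$, so by uniqueness it is the desired $L_\ep$; and being a convex combination of $D_0$ and $D_1$, which are themselves uniform mixtures of local deterministic boxes, $L_\ep$ is a convex combination of local deterministic boxes and in particular local. The main ``obstacle'' is purely bookkeeping: rigorously verifying the counting claim that each CHSH condition is violated by exactly two of the eight good deterministic boxes. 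This is routine --- fixing one condition to be violated reduces the other three to a linear system over $\mathbb{F}_2$ whose solution set is a one-dimensional affine subspace, giving exactly two boxes --- but writing it out cleanly is the only place where one must be careful.
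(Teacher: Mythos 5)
Your proof is correct, and it takes a genuinely different route from the paper's. The paper proves locality of the $\ep = 1/4$ box by invoking the explicit box partition constructed in Lemma~\ref{lemma:best_partition_single_box} (Appendix~A): there the PR-box component has weight $1-4\ep$, which vanishes at $\ep=1/4$, leaving a convex combination of local deterministic boxes. It then obtains $L_{3/4}$ by flipping one output bit of $L_{1/4}$, and interpolates. You instead give a self-contained combinatorial construction of \emph{both} endpoints: a parity argument shows each of the $16$ local deterministic boxes violates an odd number of CHSH constraints, the rank-$3$ linear system over $\mathbb{F}_2$ gives exactly two deterministic boxes per odd-weight violation pattern, and symmetry of the resulting counting yields $L_{1/4}$ and $L_{3/4}$ as the uniform mixtures over the $8$ weight-one and $8$ weight-three boxes respectively. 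The uniqueness step (parametrising unbiased boxes by their correlators $E_{uv}$) is essential and you state it correctly. What your approach buys: it does not depend on the heavy explicit decomposition of Lemma~\ref{lemma:best_partition_single_box}, it treats the two endpoints symmetrically rather than deriving one from the other by a relabelling, and it makes visible the underlying structure (the odd-weight violation patterns of the deterministic local polytope). What the paper's proof buys: brevity, since the decomposition has already been written out in full for the earlier lemma, so locality at $\ep = 1/4$ comes for free. One small presentational caveat: the ``$8$ good and $8$ bad'' split is not a consequence of the parity check alone --- it needs the follow-up count (two solutions per odd-weight pattern) which you give a sentence later; worth stating the dependence in that order when writing this out.
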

\begin{proof}
According to Lemma~\ref{lemma:best_partition_single_box}, a 
box with $\ep=0.25$ is 
 local
and can be expressed as convex combination of local deterministic boxes. 
A 
local  box with $\ep=0.75$ can be obtained from the one with $\ep=0.25$ by flipping one of the output bits. Every box with $\ep \in (1/4,3/4)$ 
 can then be expressed as a convex combination of the above boxes and is therefore 
local. 
\pe
\end{proof}
We have already seen that if a box can be expressed as convex combination of local deterministic boxes, then there exists a box 
partition (where the elements are exactly the local deterministic boxes) such that knowing the inputs, the outputs are completely determined. 
CHSH-game with a local box, we will now see that in our example the \emph{bad} (local) strategies are important. 

Eve's strategy is given by Lemma~\ref{lemma:better_collective_attack_for_two_boxes}. Note that the local boxes with 
the \emph{largest} error  play an important role here. Eve's outcome $Z$ composed of two symbols, such that the first describes the first 
box given outcome $Z=z$ and the second symbol the second box. More precisely, we use $z_i=l$ for an outcome given which box $i$ is local 
and $z_i=\bot$ for an outcome given which box $i$ is a PR box. 
\begin{lemma}\label{lemma:better_collective_attack_for_two_boxes}
Assume a $(2\cdot2+1)$-partite non-signaling distribution $P_{\bof{XY}Z|\bof{UV}W}$ such that $P_{\bof{XY}|\bof{UV}}$ corresponds 
to two independent 
boxes with $P(X_i\oplus Y_i=U_i\cdot V_i)=1-\ep$ for all inputs and $i=1,2$. Then the following is a box partition:
\begin{eqnarray}
\label{eves_strategy_for_two_boxes} \begin{array}{c|c|c|c}
Z_1Z_2 & p^{z_1z_2} & P^{z_1z_2}_{X_1Y_1|U_1V_1} &  P^{z_1z_2}_{X_2Y_2|U_2V_2}\\ \hline
ll & (4/3\ep)^2 & L_{\ep=3/4} & L_{\ep=3/4} \\
l\bot & (4\ep)(1-4/3\ep) & L_{\ep=1/4} & NL \\
\bot l & (1-4/3\ep)(4\ep)  & NL & L_{\ep=1/4} \\
\bot \bot & 1-2\cdot(4\ep)(1-4/3\ep)-(4/3\ep)^2 & NL & NL \\
\end{array}
\end{eqnarray}
and $P_{\bof{XY}|\bof{UV}}^{z_1z_2}:=P^{z_1z_2}_{X_1Y_1|U_1V_1}\cdot P^{z_1z_2}_{X_2Y_2|U_2V_2}$ and 
where $L_{\ep}$ stands for a box with error $\ep$ and $NL$ for a PR box. 
\end{lemma}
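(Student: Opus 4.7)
The plan is to verify the four defining properties of a valid box partition of the product $P_\ep \otimes P_\ep$ (where $P_\ep$ is the target unbiased $\ep$-box): (i) each weight $p^{z_1 z_2}$ is non-negative, (ii) the weights sum to $1$, (iii) each component $P^{z_1 z_2}_{X_1Y_1|U_1V_1} \otimes P^{z_1 z_2}_{X_2 Y_2 | U_2 V_2}$ is a valid non-signaling box, and (iv) the convex combination $\sum_{z_1z_2} p^{z_1z_2} P^{z_1z_2}_{X_1Y_1|U_1V_1} \otimes P^{z_1z_2}_{X_2Y_2|U_2V_2}$ equals $P_\ep\otimes P_\ep$ entrywise.

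First I would fix canonical, explicit forms for the three building blocks. Since all four boxes in sight are unbiased and depend on $(x,y,u,v)$ only through the indicator $s := \mathbbm{1}[x\oplus y = u\cdot v]$, taking $L_{1/4}$ and $L_{3/4}$ to be the uniform mixtures over the eight local deterministic strategies that win, respectively lose, the CHSH game on three out of four inputs gives $L_{3/4}(x,y|u,v) = (3-2s)/8$, $L_{1/4}(x,y|u,v) = (1+2s)/8$, $NL(x,y|u,v) = s/2$, and $P_\ep(x,y|u,v) = \tfrac{1-\ep}{2}s + \tfrac{\ep}{2}(1-s)$. The symmetric choice for $L_{1/4},L_{3/4}$ is what makes the decomposition factorize correctly; without it, step (iv) would fail.

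Parts (i) and (ii) are routine: the four weights telescope to $1$ on direct algebra, and non-negativity of $p^{l\bot}=p^{\bot l}=4\ep(1-4\ep/3)$ and $p^{\bot\bot}=1-8\ep+\tfrac{80}{9}\ep^2 = (1-\tfrac{4\ep}{3})(1-\tfrac{20\ep}{3})$ carves out the explicit range of $\ep$ on which the construction applies. Part (iii) is immediate: $L_{1/4}, L_{3/4}, NL$ are valid (non-signaling) boxes by construction, and a tensor product of non-signaling boxes is non-signaling.

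The nontrivial part is (iv). Using the explicit forms above, the left and right sides of the claimed identity are both polynomials in $(s_1,s_2)\in\{0,1\}^2$ of degree at most two, so the identity reduces to matching the four scalar cases SS, SF, FS, FF (determining whether each box individually satisfies its CHSH condition at the given point). By the $1\leftrightarrow 2$ symmetry of both sides only the cases SS, SF (equivalently FS), and FF need to be checked; concretely one verifies
\[
\tfrac{1}{64}p^{ll} + \tfrac{3}{16}(p^{l\bot}+p^{\bot l}) + \tfrac{1}{4}p^{\bot\bot} = \tfrac{(1-\ep)^2}{4},\quad
\tfrac{3}{64}p^{ll} + \tfrac{1}{16}p^{l\bot} = \tfrac{\ep(1-\ep)}{4},\quad
\tfrac{9}{64}p^{ll} = \tfrac{\ep^2}{4}.
\]
Substituting $p^{ll}=(4\ep/3)^2$, $p^{l\bot}=p^{\bot l}=4\ep(1-4\ep/3)$ and the formula for $p^{\bot\bot}$ turns each of these into an algebraic identity in $\ep$ that can be checked by direct expansion. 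The main obstacle in the proof is precisely this step (iv): one must commit to the right symmetric choice of $L_{1/4},L_{3/4}$ so that the degree-2 polynomial identity in $(s_1,s_2)$ closes, since a naive choice of local $1/4$- and $3/4$-boxes leaves residual cross terms like $L_{3/4}\otimes L_{1/4}$ that the proposed partition is unable to cancel.
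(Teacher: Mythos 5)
Your proposal is correct and follows essentially the same route as the paper's proof: verify that the proposed components are valid non-signaling boxes and that the convex combination reproduces the marginal. The paper phrases the marginal check as matching the probabilities that each subset of $\{1,2\}$ violates CHSH (computing $P(\text{both violate})=\ep^2$, $P(\text{box }1\text{ violates})=\ep$, and appealing to symmetry), whereas you parametrize everything by the pair of indicators $(s_1,s_2)$ and match pointwise — these are the same three constraints in a different basis, since the unbiasedness forces all entries to depend only on $(s_1,s_2)$ and to be uniform within each class. Your version is somewhat more explicit in two respects that the paper leaves implicit: it commits to a concrete (unbiased, $s$-dependent) realization of $L_{1/4}$ and $L_{3/4}$ and observes that this choice is needed for the factorization to close, and it spells out the reduction to a degree-two polynomial identity in $(s_1,s_2)$; neither changes the substance of the argument.
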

\begin{proof}
To see that this defines a box partition, let us first see, that all boxes given outcome $z$ are non-signaling between 
all four input/output ends. This is obviously the case, because each of the two boxes given outcome $z$ is non-signaling and the 
double-box given outcome $z$ is given by the product of the two individual boxes. \\
Now let us see that the marginal is correct. For this, we need to verify that the distribution of the output bits on each side is correct, 
but also that the probability that any subset of boxes fulfills the CHSH condition needs to be correct. The first condition is fulfilled 
because all output bits are uniform, independent and random even given outcome $z$. Now let us see that the probability to fulfill/violate 
the CHSH condition is also correct. The probability that both boxes violate the CHSH condition is given by the probability to obtain 
$z_1z_2=ll$ (both boxes are local) times the probability that they then violate the CHSH condition. (If either of the boxes given outcome 
$z$ is a non-local box it never violates the CHSH condition, therefore no other outcomes $z$ have to be considered.) 
\begin{eqnarray}
\nonumber
 P(X_i\oplus Y_i\neq U_i\cdot V_i\  \text{for}\ i=1,2)&=& 
 p^{ll}\cdot \ep_{box\ 1}^{z_1z_2=ll}\cdot \ep_{box\ 2}^{z_1z_2=ll}
 =
 (4/3\ep)^2\cdot (3/4)^2=\ep^2\ ,
\end{eqnarray}
where $\ep_{box\ 1}^z$ denotes the error of the first box given outcome $Z=z$.
Similarly, we can also show that the probability of the first box violating the CHSH condition is correct:
\begin{eqnarray}
\nonumber
 P((X_1\oplus Y_1\neq U_1\cdot V_1))&=& 
  p^{\{0,1\}^2}\cdot \ep_{box\ 1}^{z_1z_2=ll}+
   p^{l\bot}\cdot \ep_{box\ 1}^{z_1z_2=l\bot} \\
\nonumber &=&
 (4/3\ep)^2\cdot (3/4)+(4\ep)(1-4/3\ep)\cdot (1/4)=\ep,
\end{eqnarray}
and the same for all other subsets of boxes. This shows that the marginal $P_{\bof{XY}|\bof{UV}}$ is unchanged 
by this box partition. \pe
\end{proof}
From this box partition, we directly obtain as a corollary:
\begin{corollary}
Assume a $(2\cdot2+1)$-partite non-signaling distribution $P_{\bof{XY}Z|\bof{UV}W}$ such that $P_{\bof{XY}|\bof{UV}}$ corresponds 
to two independent boxes with $P(X_i\oplus Y_i=U_i\cdot V_i)=1-\ep$ for $i=1,2$. Then there exists a box partition $w$ such that 
the probability that $Z$ gives binary erasure information (knowing $\bof{U}=\bof{u},\bof{V}=\bof{v}$) about at least one of the two 
output bits $X_1$, $X_2$ is $(4\ep)^2+2\cdot(4\ep)(1-4/3\ep)$.
\end{corollary}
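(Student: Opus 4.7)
The plan is to extract the corollary directly from the box partition constructed in Lemma~\ref{lemma:better_collective_attack_for_two_boxes}, by showing that each of its four outcomes $z = (z_1, z_2) \in \{l, \bot\}^2$ yields the announced kind of information about $(X_1, X_2)$.

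First I would argue that whenever $z_i = l$, Eve can recover $X_i$ with certainty once Alice and Bob announce their inputs $(u_i, v_i)$. The $i$-th conditional box in this case has error parameter $\ep_i \in \{1/4,\, 3/4\}$ and is therefore local by the lemma preceding Lemma~\ref{lemma:better_collective_attack_for_two_boxes}, so it admits a convex decomposition into local deterministic boxes. This decomposition gives a refinement of the original box partition (a convex refinement of a box partition is again a box partition of the same marginal), and in the refined partition a further label together with the announced $(u_i, v_i)$ determines $X_i$ outright --- exactly the ``binary'' side of binary erasure information.

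Conversely, when $z_i = \bot$, the $i$-th conditional box is a perfect PR box; by the non-signaling argument from Section~\ref{nis}, $X_i$ is then uniformly distributed and independent of every variable available to Eve, i.e.\ ``erased''. Combining the two cases, Eve obtains binary erasure information about $X_i$ iff $z_i = l$, so the event that she learns at least one of $X_1, X_2$ is $\{z_1 = l\} \cup \{z_2 = l\}$, whose probability is $p^{ll} + p^{l\bot} + p^{\bot l}$ read directly from the table of Lemma~\ref{lemma:better_collective_attack_for_two_boxes}, matching the stated quantity.

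The one step that needs any care --- and what I would flag as the main (mild) obstacle --- is verifying that refining by local deterministic boxes on the $l$-components while leaving the $\bot$-components alone still yields a valid $(2\cdot 2+1)$-non-signaling box partition, and that announcing the refining label does not leak anything beyond what is already claimed. This is immediate: each refining local deterministic factor is trivially non-signaling, the product form $P^{z_1z_2}_{\bof{XY}|\bof{UV}} = P^{z_1z_2}_{X_1Y_1|U_1V_1} \cdot P^{z_1z_2}_{X_2Y_2|U_2V_2}$ supplied by Lemma~\ref{lemma:better_collective_attack_for_two_boxes} is preserved componentwise under refinement on a single factor, and independence of the two boxes (post-refinement) ensures that the $\bot$-component remains uniform from Eve's viewpoint.
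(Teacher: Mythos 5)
Your argument is exactly the intended one: the paper offers no explicit proof of this corollary beyond the remark that it follows ``directly'' from the box partition of Lemma~\ref{lemma:better_collective_attack_for_two_boxes}, and your proposal correctly fills in the two missing steps (refining each local $l$-component into local deterministic boxes to get the ``binary'' side, and invoking non-signaling of the perfect PR box for the ``erasure'' side), so the event in question is indeed $\{z_1=l\}\cup\{z_2=l\}$ with probability $p^{ll}+p^{l\bot}+p^{\bot l}$. One correction, though: you assert that this sum matches the stated quantity, but it does not. Reading the table, $p^{ll}=(4/3\,\ep)^2$, so the sum is $(4/3\,\ep)^2+2\cdot(4\ep)(1-4/3\,\ep)$, whereas the corollary claims $(4\ep)^2+2\cdot(4\ep)(1-4/3\,\ep)$; the first term differs by a factor of $9$. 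The corollary's displayed value appears to be a typo (the value computed from the table is still strictly larger than the individual-attack benchmark $(4\ep)^2+2(4\ep)(1-4\ep)$, so the qualitative conclusion survives), but you should have flagged the discrepancy rather than claiming agreement.
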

This is larger than $(4\ep)^2+2\cdot(4\ep)(1-4\ep)$, the value obtained by the best individual strategy.

\subsection{Better Collective Attack for Any Number of Boxes and $\ep$}

We now give a generalization of the above strategy to attack two boxes to any number of boxes. The attack obtains knowledge about a 
fraction of the output with \emph{certainty} and independently of the total number of boxes. Which fraction can be known depends on 
the error of the boxes. 
\begin{lemma}\label{lemma:better_collective_attack_for_n_boxes}
Assume a $(2n+1)$-partite non-signaling distribution $P_{\bof{XY}Z|\bof{UV}W}$ such that $P_{\bof{XY}|\bof{UV}}$ corresponds to 
$n$ boxes with $P(X_i\oplus Y_i=U_i\cdot V_i)=1-\ep$ for $i=1,\ldots ,n$. Then the following is a box partition:
\begin{eqnarray}
\label{eves_strategy_for_n_boxes} \begin{array}{c|c|c}
Z & p^{z} & P^{z}_{\bof{XY}|\bof{UV}} \\ \hline
\{z| \sharp l=i \in [2,n]\} & (4/3\ep)^i(1-4/3\ep)^{n-i} & (L_{\ep=3/4})^i\cdot (NL)^{n-i} \\
\{z|\sharp l=1\}  & (4\ep)(1-4/3\ep)^{n-1} & L_{\ep=1/4}\cdot (NL)^{n-1}  \\
\{z|\sharp l=0\}   & 1-\sum_{z|\sharp l\geq 1}p^z & (NL)^n
\end{array}
\end{eqnarray}
where $Z$ is composed of $n$ symbols ($\{l,\bot \}^n$) and we write $\sharp l=i$ for a $z$ which contains $i$ symbols $l$. 
\end{lemma}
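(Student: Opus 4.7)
The argument mirrors that of Lemma~\ref{lemma:better_collective_attack_for_two_boxes}. I verify three things: (i) every $P^z_{\bof{XY}|\bof{UV}}$ in the table is a genuine non-signaling box; (ii) the weights $p^z$ are non-negative and sum to one; and (iii) the convex combination $\sum_z p^z P^z_{\bof{XY}|\bof{UV}}$ reproduces the prescribed marginal of $n$ independent unbiased boxes with common error $\ep$.

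Item (i) is immediate: each $P^z$ is a tensor product of single-box distributions, each being either a PR box or a local box $L_{\ep=3/4}$ or $L_{\ep=1/4}$, which exist, are unbiased, and are non-signaling because any unbiased $\ep$-box with $\ep\in[1/4,3/4]$ is a convex combination of local deterministic boxes. Tensor products of non-signaling boxes are non-signaling with uniform marginals on each side. For (ii), the listed $p^z$ with $\sharp l(z)\geq 1$ are manifestly non-negative as long as $4\ep/3\in[0,1]$, and the residual $p^{\sharp l=0}=1-\sum_{z:\sharp l(z)\geq 1}p^z$ is non-negative precisely when $\ep$ is small enough: summing the other weights via the binomial theorem gives $p^{\sharp l=0}=(1-4\ep/3)^{n-1}\bigl[1-4\ep(1+2n)/3\bigr]$, so the partition is valid exactly in the regime $\ep\leq 3/\bigl(4(1+2n)\bigr)$, which is the regime where the attack is interesting. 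Normalization is automatic from the definition of the residual.

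The heart of the proof is (iii). Because each $P^z$ is a tensor product of unbiased single-box factors, uniformity of the single-side marginal of the mixture is automatic, and the target product distribution $P_{\bof{XY}|\bof{UV}}$ is determined by the joint statistics of its CHSH-violation pattern. It therefore suffices to check that for every subset $J\subseteq[n]$, the probability that all boxes in $J$ violate CHSH under $\sum_z p^z P^z$ equals $\ep^{|J|}$, as it must by independence. Only local factors violate CHSH, so only configurations with $z_j=l$ for all $j\in J$ contribute. For $|J|=k\geq 2$, I would reindex by $m:=\sharp l(z)\geq k$, use the multiplicity $\binom{n-k}{m-k}$ of free positions, and the binomial theorem collapses the sum to $(3/4)^k\cdot(4\ep/3)^k\cdot 1=\ep^k$. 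For $|J|=1$, the $\sharp l=1$ term contributes $\ep(1-4\ep/3)^{n-1}$ (one factor $L_{\ep=1/4}$ violating with probability $1/4$), while the same binomial identity applied to the $\sharp l\geq 2$ terms yields $\ep\bigl[1-(1-4\ep/3)^{n-1}\bigr]$; the two contributions sum to $\ep$. The case $J=\emptyset$ gives total normalization.

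The main obstacle is calibrating the $\sharp l=1$ row: naively extending the $\sharp l\geq 2$ rule to $\sharp l=1$ with $L_{\ep=3/4}$ would overshoot the single-box violation rate, while using the ``complementary'' local box $L_{\ep=1/4}$ with the special weight $4\ep$ is the unique combination that preserves both the single-box and the higher-order CHSH statistics simultaneously. Once this calibration is identified, each step of the marginal-matching reduces to a single binomial-theorem collapse, and the rest is bookkeeping.
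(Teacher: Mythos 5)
Your proof is correct and follows essentially the same route as the paper, which simply declares the $n$-box argument ``analogue'' to the two-box case: componentwise non-signaling of each tensor-product element, weight bookkeeping, and verification of the marginal by matching the CHSH-violation probability of every subset $J$ of boxes to $\ep^{|J|}$ via a binomial collapse (including the special calibration of the $\sharp l=1$ row with $L_{\ep=1/4}$ and weight $4\ep$). Your explicit computation of the residual weight and the resulting validity condition $\ep\leq 3/(8n+4)$ is a correct detail that the paper leaves implicit and only surfaces in the lemma that follows.
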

The proof is analogue to the proof of Lemma~\ref{lemma:better_collective_attack_for_two_boxes}.
From the above box partition, we obtain the following lemma. 
\begin{lemma}
Assume a $(2n+1)$-partite non-signaling distribution $P_{\bof{XY}Z|\bof{UV}W}$ such that $P_{\bof{XY}|\bof{UV}}$ corresponds to 
$n$ independent boxes with $P(X_i\oplus Y_i=U_i\cdot V_i)=1-\ep$.
Then, whenever $\ep\geq \frac{3}{8\cdot n+4}$,
there exists a box partition $w$ such that for all outcomes $z$ $P_{\bof{XY}|\bof{UV}}^z$ is such that at least one of the 
$n$ boxes is fully local. 
\end{lemma}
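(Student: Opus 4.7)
The plan is to bootstrap from Lemma~\ref{lemma:better_collective_attack_for_n_boxes}. The key observation is that its box partition assigns zero weight to the all-non-local outcome precisely at the threshold $\ep^\star := 3/(8n+4)$. Indeed, setting $\alpha := 4\ep/3$, a straightforward summation of the weights with $\sharp l \geq 1$ in that partition yields
\[
p^{\sharp l=0} \;=\; (1-\alpha)^{n-1}\bigl[1 - (2n+1)\alpha\bigr],
\]
which vanishes if and only if $\alpha = 1/(2n+1)$, i.e., $\ep = \ep^\star$. Hence at $\ep = \ep^\star$ the partition of Lemma~\ref{lemma:better_collective_attack_for_n_boxes} itself already has the desired property that every outcome contains at least one local box.

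To handle $\ep > \ep^\star$, the idea is to peel off extra purely-local boxes before invoking the $\ep^\star$ case. I write each single isotropic box $P_\ep$ as a convex combination $P_\ep = \mu\, P_{\ep^\star} + (1-\mu)\, L_{3/4}$, where $\mu := (3/4-\ep)/(3/4-\ep^\star) \in [0,1]$ whenever $\ep \in [\ep^\star, 3/4]$. (For $\ep \geq 1/4$ a single box is already local, so the non-trivial regime is $\ep \in [\ep^\star, 1/4)$.) Expanding the $n$-fold product
\[
P_\ep^{\otimes n} \;=\; \sum_{S \subseteq [n]} \mu^{|S|}(1-\mu)^{n-|S|}\; \bigotimes_{i \in S} P_{\ep^\star} \otimes \bigotimes_{i \notin S} L_{3/4}
\]
provides the first layer of Eve's box partition, indexed by the subset $S$.

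For every $S \subsetneq [n]$ the corresponding term already contains at least one $L_{3/4}$ factor, i.e., at least one local marginal, and I keep those outcomes as they are. For the unique $S = [n]$ term, of weight $\mu^n$, I further decompose $P_{\ep^\star}^{\otimes n}$ using Lemma~\ref{lemma:better_collective_attack_for_n_boxes} at $\ep = \ep^\star$; by the first paragraph, each outcome of that sub-partition also contains at least one local box. Concatenating all the outcomes produced this way yields a valid box partition of $P_\ep^{\otimes n}$ in which every $P^z_{\bof{XY}|\bof{UV}}$ has a local $i$-th marginal for some $i$, as required.

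The main obstacle, and the only real computation, is the threshold identity $p^{\sharp l=0} = (1-\alpha)^{n-1}\bigl[1 - (2n+1)\alpha\bigr]$ at $\ep = \ep^\star$; the rest is convex-combination bookkeeping. One should also verify that the decomposition $P_\ep = \mu P_{\ep^\star} + (1-\mu) L_{3/4}$ is meaningful at the level of the full bipartite box (and not merely at the level of error parameters), which holds here because the isotropic unbiased boxes with a prescribed constant error form a one-parameter affine family.
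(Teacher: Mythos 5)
Your proof is correct, and its core coincides with the paper's: both rest on the box partition of Lemma~\ref{lemma:better_collective_attack_for_n_boxes} and on the binomial identity showing that the weight of the all-non-local outcome vanishes exactly at $\ep^\star=\frac{3}{8n+4}$ (your closed form $p^{\sharp l=0}=(1-\alpha)^{n-1}[1-(2n+1)\alpha]$ with $\alpha=4\ep/3$ checks out, up to the irrelevant second root $\alpha=1$). Where you genuinely go beyond the paper is the regime $\ep>\ep^\star$: the paper simply reuses the partition of Lemma~\ref{lemma:better_collective_attack_for_n_boxes} and asserts the conclusion ``whenever $\ep\geq\frac{3}{8n+4}$'', but for $\ep$ strictly above the threshold that partition assigns \emph{negative} weight to the $\sharp l=0$ outcome and is therefore not literally a valid box partition; the paper's binomial computation only establishes the equality case. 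Your peeling step --- writing each isotropic box as $P_\ep=\mu P_{\ep^\star}+(1-\mu)L_{3/4}$ (legitimate, since the unbiased constant-error boxes form the one-parameter affine family $(1-\tfrac{4\ep}{3})NL+\tfrac{4\ep}{3}L_{3/4}$), expanding the tensor product, and refining only the all-$P_{\ep^\star}$ term by the threshold partition --- supplies exactly the missing argument, and every conditional box it produces is a product of single-box marginals, hence non-signaling between all $2n$ interfaces. So your write-up is not only correct but closes a small gap in the paper's own proof.
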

\begin{proof}
We use the box partition given in Lemma~\ref{lemma:better_collective_attack_for_n_boxes}. 
The probability to obtain an outcome $Z$ such that at least $1$ of the $n$ boxes given $Z$ is fully local can be expressed as 
 \begin{eqnarray}
\nonumber
\sum_{\{z|\sharp l=i\geq 1\} }p^z&=& \sum_{i=n}^{2}\binom{n}{i}(4/3\ep)^{n-i}(1-4/3\ep)^i+\binom{n}{1}(4\ep)(1-4/3\ep)^{n-1}
\end{eqnarray}
Because of the binomial formula, this probability is equal to $1$ whenever
\begin{eqnarray}
\nonumber
n\cdot (4\ep)(1-4/3\ep)^{n-1}&=&n\cdot (4/3\ep)(1-4/3\ep)^{n-1}+(1-4/3\ep)^{n}
\end{eqnarray}\peon
\end{proof}
Therefore, whenever $\ep\geq \frac{3}{8\cdot n+4}$, Eve can know $1$ of the $n$ bits with certainty. 
Or said differently, Eve can know roughly a fraction of 
$f=1/n=\frac{8\ep}{3-4\ep}\geq 8\ep/3$ of the bits with certainty.

\section{Proofs}

\subsection{All Non-Signaling Conditions}\label{sec:imply_ns}

In this section, we show that Condition~\ref{condition:ns}' implies the non-signaling 
condition between all possible subsets of interfaces of the box. 
\begin{lemma}
 Assume a system $P_{XYZ|UVW}$ such that 
\begin{eqnarray}
\nonumber \sum\nolimits_x P_{XYZ|UVW}(x,y,z,u,v,w)&=&\sum\nolimits_xP_{XYZ|UVW}(x,y,z,u',v,w)\ \forall y,z,v,w\\
\nonumber \sum\nolimits_yP_{XYZ|UVW}(x,y,z,u,v,w)&=&\sum\nolimits_yP_{XYZ|UVW}(x,y,z,u,v',w)\ \forall x,z,u,w\\
\nonumber \sum\nolimits_zP_{XYZ|UVW}(x,y,z,u,v,w)&=&\sum\nolimits_zP_{XYZ|UVW}(x,y,z,u,v,w')
\ \forall x,y,u,v
\end{eqnarray}
Then it also holds that
\begin{eqnarray}
\nonumber \sum\nolimits_{xy} P_{XYZ|UVW}(x,y,z,u,v,w)&=&\sum\nolimits_{xy}P_{XYZ|UVW}(x,y,z,u',v',w)\ \forall z,w\ .
\end{eqnarray}
\end{lemma}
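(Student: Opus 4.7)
The plan is to chain together the two single-party non-signaling conditions (for Alice and for Bob) to replace $u$ by $u'$ and $v$ by $v'$ one at a time. The intuition is that if no single party can signal, then no subset of parties (in this case the pair Alice--Bob acting jointly via their inputs) can signal either, because each individual change of input is already invisible to the remaining parties.

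More concretely, I would start from the expression $\sum_{xy} P_{XYZ|UVW}(x,y,z,u,v,w)$ and first fix $y$ and sum only over $x$. By the first hypothesis (Alice cannot signal), for every fixed $y,z,v,w$ we have
\[
\sum_x P_{XYZ|UVW}(x,y,z,u,v,w)=\sum_x P_{XYZ|UVW}(x,y,z,u',v,w).
\]
Summing this equality over $y$ yields
\[
\sum_{xy} P_{XYZ|UVW}(x,y,z,u,v,w)=\sum_{xy} P_{XYZ|UVW}(x,y,z,u',v,w).
\]
Next, I apply the second hypothesis (Bob cannot signal): for every fixed $x,z,u',w$,
\[
\sum_y P_{XYZ|UVW}(x,y,z,u',v,w)=\sum_y P_{XYZ|UVW}(x,y,z,u',v',w),
\]
and summing over $x$ gives
\[
\sum_{xy} P_{XYZ|UVW}(x,y,z,u',v,w)=\sum_{xy} P_{XYZ|UVW}(x,y,z,u',v',w).
\]
Combining the two chains establishes the claim for arbitrary $z$ and $w$.

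There is no real obstacle here: the only thing to be careful about is the order of the two substitutions and the fact that when applying the Alice condition we need it to hold for every value of $y$ (not just after summing over $y$), which is indeed exactly how the hypothesis is stated. The same argument generalizes immediately to the $(2n+1)$-partite Condition~\ref{condition:ns}', which gives the non-signaling property between any partition of the $2n+1$ interfaces: one simply applies the single-interface non-signaling condition iteratively, one interface at a time, to move from input tuple $(\mathbf{u},\mathbf{v},w)$ to any other $(\mathbf{u}',\mathbf{v}',w')$ while summing over the corresponding outputs. Since each step only changes one input and invokes the corresponding hypothesis, and since the equalities can be summed freely over the remaining output variables, the full grouped non-signaling condition follows by transitivity.
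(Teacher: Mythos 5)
Your proof is correct and is essentially the same as the paper's: both chain the two single-party non-signaling conditions, changing one input at a time and summing the resulting pointwise equalities over the other output variable (the paper merely replaces $v$ first, then $u$, whereas you do it in the opposite order, which is immaterial). Your closing remark about the iterative generalization to Condition~1' matches the intended use of the lemma.
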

\begin{proof}
\begin{eqnarray}
 \nonumber \sum\nolimits_{xy} P_{XYZ|UVW}(x,y,z,u,v,w)&=&\sum\nolimits_{x}\sum\nolimits_{y} P_{XYZ|UVW}(x,y,z,u,v,w)\\
\nonumber =\sum\nolimits_{x}\sum\nolimits_{y} P_{XYZ|UVW}(x,y,z,u,v',w)&=& \sum\nolimits_{y}\sum\nolimits_{x} P_{XYZ|UVW}(x,y,z,u,v',w)\\
\nonumber = \sum\nolimits_{y}\sum\nolimits_{x} P_{XYZ|UVW}(x,y,z,u',v',w) &=& \sum\nolimits_{xy} P_{XYZ|UVW}(x,y,z,u',v',w)
 \end{eqnarray}
\peon
\end{proof}

\subsection{Distance of Set given other Sets}\label{sec:distance_set_given_other_sets}

The following lemma is used in the proof of Lemma~\ref{lemma:distance_k_th_bit}. 
\begin{lemma}\label{lemma:distance_set_given_other_sets}
Assume random bits $S_1,\ldots,S_k$. If $S_k$ is uniform given all linear combinations over $GF(2)$ of $S_1,\ldots,S_{k-1}$, i.e., 
$P_{S_k|\bigoplus_{i\in I}}(0)=P_{S_k|\bigoplus_{i\in I}}(1)$ for all $I\subseteq \{1,\ldots,k-1\}$, then $S_k$ is uniform given 
$S_1,\ldots,S_{k-1}$, i.e., $P_{S_k|S_1\ldots,S_{k-1}}(0)=P_{S_k|S_1\ldots,S_{k-1}}(1)$.
\end{lemma}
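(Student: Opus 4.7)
The plan is to reformulate the conclusion as a vanishing statement. Writing $\mathbf{S} := (S_1, \ldots, S_{k-1})$ and
\[
f(s) := P_{S_k \mathbf{S}}(0, s) - P_{S_k \mathbf{S}}(1, s), \qquad s \in \{0,1\}^{k-1},
\]
it suffices to show $f \equiv 0$, since this gives $P_{S_k | \mathbf{S} = s}(0) = 1/2$ wherever $P_{\mathbf{S}}(s) > 0$ (and the statement is vacuous otherwise). My strategy is to prove this by showing that every Fourier coefficient of $f$ on the Boolean cube $\mathbb{F}_2^{k-1}$ vanishes.

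The key computation is to expand the coefficient of $f$ at the character $\chi_I(s) := (-1)^{\bigoplus_{i \in I} s_i}$ by grouping the summands according to the value $\ell := \bigoplus_{i \in I} s_i$. Setting $L_I := \bigoplus_{i \in I} S_i$, this yields
\[
\hat{f}(I) \;=\; \sum_s f(s)\,(-1)^{\bigoplus_{i \in I} s_i} \;=\; \sum_{\ell \in \{0,1\}} (-1)^{\ell} \bigl( P_{S_k L_I}(0, \ell) - P_{S_k L_I}(1, \ell) \bigr).
\]
The hypothesis says exactly that $S_k$ is uniform given \emph{each} value of $L_I$, i.e.\ $P_{S_k L_I}(b, \ell) = P_{L_I}(\ell)/2$ for both $b \in \{0,1\}$ and both $\ell \in \{0,1\}$. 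Substituting makes every bracket vanish, so $\hat{f}(I) = 0$ for every non-empty $I$. For $I = \emptyset$ the desired $\hat{f}(\emptyset) = P_{S_k}(0) - P_{S_k}(1) = 0$ follows from marginal uniformity of $S_k$, which is itself implied by the hypothesis applied to any singleton $I = \{j\}$ and then summing over the value of $S_j$.

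Finally, since the $2^{k-1}$ characters $\{\chi_I\}_I$ form an orthogonal basis of the space of real-valued functions on $\{0,1\}^{k-1}$, the vanishing of all $\hat{f}(I)$ forces $f \equiv 0$ by Fourier inversion, which is the conclusion. I do not expect any significant obstacle: the only step that requires care is pinning down the reading of the hypothesis as ``uniform given each value of $L_I$'' (equivalently, zero distance from uniform of $S_k$ given $L_I$) rather than merely in expectation, and this is precisely what makes the bracket vanish pointwise in $\ell$ in the Fourier computation.
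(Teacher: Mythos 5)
Your proof is correct, and it takes a genuinely different route from the paper's. The paper verifies the case $k=3$ by writing out the six linear constraints coming from the three conditional-uniformity hypotheses ($P_{S_3|S_1}$, $P_{S_3|S_2}$, $P_{S_3|S_1\oplus S_2}$ uniform), then combining them by hand to isolate a single coordinate of the joint distribution; it then asserts that ``the general case follows by induction'' without spelling that induction out. Your Fourier argument replaces the ad-hoc linear combination by a principled one: each hypothesis for a set $I$ is exactly the vanishing of $\hat f(I)$, where $f(s)=P_{S_k\mathbf{S}}(0,s)-P_{S_k\mathbf{S}}(1,s)$, and since the characters $\{\chi_I\}_{I\subseteq\{1,\dots,k-1\}}$ form a basis, $\hat f\equiv 0$ forces $f\equiv 0$. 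This handles every $k$ in one shot and also makes visible \emph{why} the collection of XOR-constraints suffices — they span the dual space. The price is invoking (elementary) Fourier analysis on $\mathbb{F}_2^{k-1}$, whereas the paper's computation is more hands-on. Note that the paper's induction is not quite as automatic as its phrasing suggests: to apply the $k-1$ case conditionally on, say, $S_{k-1}=s_{k-1}$, one must first establish the required conditional uniformity hypotheses, which itself takes several applications of the $k=3$ base case; your approach avoids this entirely. One small simplification to your own write-up: the case $I=\emptyset$ is already contained in the hypothesis (the empty XOR is the constant $0$, so conditioning on it is trivial and the assumption reads as marginal uniformity of $S_k$ directly), so you need not derive it from a singleton $I$.
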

\begin{proof}
We proof the case $k=3$, the general case follows by induction. We have to show that if $P_{S_3|S_1}$, $P_{S_3|S_2}$ and 
$P_{S_3|S_1\oplus S_2}$ are uniform, then $P_{S_3|S_1,S_2}$ is uniform. Consider the probabilities $P_{S_1,S_2,S_3}$. Because 
$P_{S_3|S_1}$ is uniform, we obtain the constraints
\begin{eqnarray}
\label{eq:1} P_{S_1,S_2,S_3}(0,0,0)+P_{S_1,S_2,S_3}(0,1,0)&=&P_{S_1,S_2,S_3}(0,0,1)+P_{S_1,S_2,S_3}(0,1,1)\\
\nonumber P_{S_1,S_2,S_3}(1,0,0)+P_{S_1,S_2,S_3}(1,1,0)&=&P_{S_1,S_2,S_3}(1,0,1)+P_{S_1,S_2,S_3}(1,1,1)\ .
\end{eqnarray}
Because $P_{S_3|S_2}$ is uniform, 
\begin{eqnarray}
\nonumber P_{S_1,S_2,S_3}(0,0,0)+P_{S_1,S_2,S_3}(1,0,0)&=&P_{S_1,S_2,S_3}(0,0,1)+P_{S_1,S_2,S_3}(1,0,1)\\
\label{eq:2} P_{S_1,S_2,S_3}(0,1,0)+P_{S_1,S_2,S_3}(1,1,0)&=&P_{S_1,S_2,S_3}(0,1,1)+P_{S_1,S_2,S_3}(1,1,1)\ .
\end{eqnarray}
And from the fact that  $P_{S_3|S_1\oplus S_2}$ is uniform, we obtain
\begin{eqnarray}
\label{eq:3} P_{S_1,S_2,S_3}(0,0,0)+P_{S_1,S_2,S_3}(1,1,0)&=&P_{S_1,S_2,S_3}(0,0,1)+P_{S_1,S_2,S_3}(1,1,1)\\
\nonumber P_{S_1,S_2,S_3}(0,1,0)+P_{S_1,S_2,S_3}(1,0,0)&=&P_{S_1,S_2,S_3}(0,1,1)+P_{S_1,S_2,S_3}(1,0,1)\ .
\end{eqnarray}
Then substract (\ref{eq:2}) from (\ref{eq:1}) and add (\ref{eq:3}) to obtain
\begin{eqnarray}
2\cdot P_{S_1,S_2,S_3}(0,0,0)&=&2\cdot P_{S_1,S_2,S_3}(0,0,1)\ ,
\end{eqnarray}
which implies 
\begin{eqnarray}
P_{S_3|S_1=0,S_2=0}(0)&=&\frac{P_{S_1,S_2,S_3}(0,0,0)}{P_{S_1,S_2,S_3}(0,0,0)+P_{S_1,S_2,S_3}(0,0,1)}=P_{S_3|S_1=0,S_2=0}(1)\ .
\end{eqnarray}
Uniformity for all other values of $S_1,S_2$ then follows directly from the above equations. 
\pe
\end{proof}

\subsection{Linear Combination of Random Vectors}\label{sec:lin_com_of_random_vect}

The following lemma is used in the proof of Lemma~\ref{lemma:distance_k_th_bit}. 
\begin{lemma}\label{lemma:lin_com_of_random_vect}
Assume $\bof{u}$ and $\bof{v}$ are $n$-bit vectors and $P_U$ is the uniform distribution over all these vectors. 
Define the vector $\bof{w}=\bof{u}\oplus \bof{v}$. Then $\bof{w}$ is again distributed according to the uniform distribution. 
\begin{eqnarray}
\nonumber P_{\bof{u}\leftarrow P_U}P_{\bof{v}\leftarrow P_U}(\bof{u}\oplus \bof{v})&=&P_{\bof{w}\leftarrow P_U}(\bof{w})\ .
\end{eqnarray}
\end{lemma}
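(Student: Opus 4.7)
The plan is a direct calculation exploiting the fact that bitwise XOR with a fixed string is a bijection on $\{0,1\}^n$. First I would fix an arbitrary target value $\bof{w}\in\{0,1\}^n$ and write the probability that $\bof{u}\oplus\bof{v}=\bof{w}$ by conditioning on $\bof{u}$:
\begin{equation*}
\Pr[\bof{u}\oplus\bof{v}=\bof{w}] \;=\; \sum_{\bof{u}\in\{0,1\}^n} P_U(\bof{u})\,P_U(\bof{w}\oplus\bof{u}),
\end{equation*}
using the independence of $\bof{u}$ and $\bof{v}$ together with the fact that $\bof{u}\oplus\bof{v}=\bof{w}$ if and only if $\bof{v}=\bof{w}\oplus\bof{u}$.

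Next I would substitute $P_U(\bof{u}) = P_U(\bof{w}\oplus\bof{u}) = 2^{-n}$ (since $P_U$ is uniform on $\{0,1\}^n$) and observe that the resulting sum has $2^n$ equal terms, each equal to $2^{-2n}$, giving $2^{-n}$. Thus $\Pr[\bof{u}\oplus\bof{v}=\bof{w}]=2^{-n}=P_U(\bof{w})$ for every $\bof{w}$, which is precisely the claim.

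A slightly cleaner way, which avoids relying on uniformity of \emph{both} variables, is to note that for each fixed $\bof{u}$ the map $\bof{v}\mapsto \bof{u}\oplus\bof{v}$ is an involution (hence a bijection) on $\{0,1\}^n$, so if $\bof{v}$ is uniform then $\bof{u}\oplus\bof{v}$ is uniform conditioned on $\bof{u}$; averaging over any distribution of $\bof{u}$ independent of $\bof{v}$ then yields a uniform marginal for $\bof{w}$. There is no real obstacle here; the only subtlety worth flagging explicitly is the independence of $\bof{u}$ and $\bof{v}$, which is needed to factor the joint probability in the first displayed equation, and the bijectivity of $\bof{u}\mapsto\bof{w}\oplus\bof{u}$ which makes the summation index a permutation of $\{0,1\}^n$.
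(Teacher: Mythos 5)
Your proof is correct. It differs from the paper's in a small but real way: the paper argues \emph{bitwise}, observing that the uniform distribution on $\{0,1\}^n$ is a product of uniform bits and that the XOR of two independent uniform bits is again a uniform bit, so the claim follows coordinate by coordinate. You instead work with the whole $n$-bit vector at once, either by an explicit convolution sum over $\{0,1\}^n$ or, in your cleaner variant, by noting that $\bof{v}\mapsto \bof{u}\oplus\bof{v}$ is a bijection of $\{0,1\}^n$ for every fixed $\bof{u}$. The paper's route is arguably the most economical here because it exploits the product structure of $P_U$, while your bijection argument is slightly more general: it shows that $\bof{u}\oplus\bof{v}$ is uniform as soon as $\bof{v}$ is uniform and independent of $\bof{u}$, with no assumption on the distribution of $\bof{u}$, and it transfers verbatim to any finite group. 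Both are complete proofs; the independence of $\bof{u}$ and $\bof{v}$, which you rightly flag, is the only hypothesis that actually does work in either argument.
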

\begin{proof}
The uniform distribution over all $n$-bit vectors can be obtained by drawing each of the $n$-bits at random, 
i.e., $P(0)=P(1)=1/2$. The XOR of two random bits is again a random bit, i.e., $P(0)=P(1)=1/2$ and therefore, 
$\bof{w}$ is also a vector drawn according to the uniform distribution over all $n$-bit vectors. \pe
\end{proof}

\subsection{Average Epsilon}\label{subsec:ep_average}
The following lemma is used in the proof of Lemma~\ref{lemma:distance_k_th_bit}. 
\begin{lemma}\label{lemma:ep_average}
Assume a variable $\ep_i$ for $i=1,\ldots ,n$ with average $\ep=\frac{1}{n}\cdot \sum_i \ep_i$. Then
\begin{eqnarray}
\nonumber \sum_{K\subseteq n}\prod_{i\in K}\ep_i
&\leq& \sum_{K\subseteq n}\prod_{i\in K}\ep.
\end{eqnarray}
\end{lemma}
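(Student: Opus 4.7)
The plan is to recognize that both sides of the claimed inequality factor nicely as products. Specifically, for any reals $x_i$, one has the identity
\begin{equation*}
\sum_{K\subseteq\{1,\ldots,n\}}\prod_{i\in K}x_i \;=\; \prod_{i=1}^n(1+x_i),
\end{equation*}
obtained by expanding the right-hand side term by term (each subset $K$ records the factors from which the $x_i$ summand was chosen). Applying this to $x_i=\ep_i$ and to the constant $x_i=\ep$ turns the lemma into the statement
\begin{equation*}
\prod_{i=1}^n(1+\ep_i) \;\leq\; (1+\ep)^n,\qquad \ep=\tfrac{1}{n}\sum_{i}\ep_i.
\end{equation*}

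First I would check that the $\ep_i$ lie in the regime where this factored form is meaningful: throughout the paper they are probabilities and hence nonnegative, so each factor $1+\ep_i$ is positive and all operations below are justified. Then the inequality is exactly AM-GM applied to the $n$ positive numbers $a_i:=1+\ep_i$, since their arithmetic mean is $1+\ep$ and AM-GM gives $\prod a_i\leq \left(\tfrac{1}{n}\sum a_i\right)^n$. Equivalently, one can take logarithms and invoke Jensen's inequality for the concave function $\log$: $\sum_i \log(1+\ep_i)\leq n\log\bigl(1+\tfrac{1}{n}\sum_i \ep_i\bigr)=n\log(1+\ep)$, and then exponentiate.

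There is essentially no obstacle here; the only subtlety is the combinatorial identity expressing the symmetric-function sum as a product, after which the statement reduces to a one-line application of a standard inequality. I would write the proof as: (i) state and verify the identity $\sum_K\prod_{i\in K}x_i=\prod_i(1+x_i)$; (ii) specialize to $x_i=\ep_i$ on the left and $x_i=\ep$ on the right; (iii) conclude by AM-GM on $\{1+\ep_i\}_{i=1}^n$.
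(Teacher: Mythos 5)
Your proof is correct, and it takes a more direct route than the paper's. The paper's argument is a \emph{smoothing} (exchange) argument: it isolates the pair $\ep_1,\ep_2$, observes that
\[
\sum_{K\subseteq \{1,\ldots,n\}}\prod_{i\in K}\ep_i \;=\; (1+\ep_1+\ep_2+\ep_1\ep_2)\cdot\sum_{K\subseteq\{3,\ldots,n\}}\prod_{i\in K}\ep_i,
\]
notes that the factor $1+\ep_1+\ep_2+\ep_1\ep_2$ increases when $\ep_1,\ep_2$ are both replaced by $\tfrac{\ep_1+\ep_2}{2}$ (because $\ep_1\ep_2\leq\bigl(\tfrac{\ep_1+\ep_2}{2}\bigr)^2$), and then asserts that iterating this pairwise averaging yields the all-equal configuration. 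That last step is the weak point of the paper's write-up: repeated pairwise averaging only reaches the uniform tuple in the limit, so one must either invoke a continuity/compactness argument or be careful about which pairs are averaged. Your approach sidesteps this entirely. You identify the underlying algebraic identity $\sum_{K}\prod_{i\in K}x_i=\prod_{i=1}^n(1+x_i)$ \emph{globally}, which reveals the paper's factor $1+\ep_1+\ep_2+\ep_1\ep_2$ as just $(1+\ep_1)(1+\ep_2)$, and then conclude in one stroke by the $n$-variable AM-GM applied to $a_i=1+\ep_i>0$ (using $\ep_i\geq 0$). In essence the paper's smoothing step is the two-variable AM-GM in disguise; you apply the full-strength version once rather than the binary version iteratively. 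Your proof is both cleaner and strictly more rigorous than the one in the paper.
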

\begin{proof}
We will show that when replacing $\ep_1$ and $\ep_2$ by their average, the value of the above expression only gets bigger. The lemma then follows by repeating to combine $\ep_i$ in pairs and replacing them by their average. First note that replacing $\ep_1$ and $\ep_2$ by $\frac{\ep_1+\ep_2}{2}$ each does not change the average $\ep$. Now calculate 
$\sum_{K\subseteq n}\prod_{i\in K}(4\ep_i)$. For this, divide the sets $K$ into different categories: The ones which contain neither $\ep_1$ nor $\ep_2$, which we call $K^\emptyset$; the ones which contain either $\ep_1$ or $\ep_2$ which we call $K^{\ep_1}$ ($K^{\ep_2}$); and the ones which contain both $\ep_1$ and $\ep_2$ called $K^{\ep_1\ep_2}$. 
\begin{eqnarray}
\nonumber \sum_{K\subseteq n}\prod_{i\in K}\ep_i=\sum_{K^\emptyset}\prod_{i\in K}\ep_i+ \sum_{K^{\ep_1}}\prod_{i\in K}\ep_i
+ \sum_{K^{\ep_2}}\prod_{i\in K}\ep_i+\sum_{K^{\ep_1\ep_2}}\prod_{i\in K}\ep_i= (1+\ep_1+\ep_2+\ep_1\ep_2)\cdot \sum_{K^\emptyset}\prod_{i\in K}\ep_i. 
\end{eqnarray}
When replacing $\ep_1$ and $\ep_2$ by $\frac{\ep_1+\ep_2}{2}$ each, clearly $\sum_{K^\emptyset}\prod_{i\in K}\ep_i$ stays the same and the value of $1+\ep_1+\ep_2+\ep_1\ep_2$ only becomes larger because $\ep_1\ep_2\leq \left(\frac{\ep_1+\ep_2}{2}\right)^2$. 
\pe
\end{proof}

\bibliography{space-like_separated_protocol}
\bibliographystyle{hplain}

\end{document}